\documentclass[11pt]{article} 

\usepackage[utf8]{inputenc}
\usepackage[T1]{fontenc}
\usepackage[pdftex,left=1in,top=1in,bottom=1in,right=1in]{geometry}

\usepackage{amsmath, amssymb, dsfont, mathrsfs, amsthm}

\usepackage{graphicx}
\usepackage{caption}
\usepackage{bm}
\usepackage{xcolor}
\usepackage[colorlinks=true,citecolor=blue,linkcolor=blue]{hyperref}
\hypersetup{
    colorlinks,
    linkcolor={red!50!black},
    citecolor={blue!50!black},
    urlcolor={blue!80!black}
}
\usepackage[nameinlink, noabbrev, capitalize]{cleveref}

\allowdisplaybreaks

\newcommand{\N}{\mathds{N}}
\newcommand{\F}{\mathds{F}}
\newcommand{\Z}{\mathbb{Z}}

\newcommand{\E}{\mathds{E}}

\newcommand{\cC}{\mathcal{C}}

\newcommand{\cS}{\mathcal{S}}
\newcommand{\cT}{\mathcal{T}}

\newcommand{\cX}{\mathcal{X}}
\newcommand{\cY}{\mathcal{Y}}

\newcommand{\loglog}{\operatorname{loglog}}

\newcommand{\set}[1]{\left\{{#1}\right\}}
\newcommand{\B}{\set{0,1}}

\newcommand{\Cond}{\mathsf{Cond}}
\newcommand{\BExt}{\mathsf{BExt}}

\newcommand{\supp}{\mathsf{supp}}

\newcommand{\eps}{\varepsilon}

\newtheorem{theorem-intro}{Theorem}
\newtheorem{lemma}{Lemma}[section]
\newtheorem{theorem}[lemma]{Theorem}
\newtheorem{corollary}[lemma]{Corollary}
\newtheorem{definition}[lemma]{Definition}
\newtheorem{claim}[lemma]{Claim}

\newtheorem{remark}[lemma]{Remark}

\AddToHook{env/theorem/begin}{\crefalias{lemma}{theorem}}
\AddToHook{env/corollary/begin}{\crefalias{lemma}{corollary}}
\AddToHook{env/definition/begin}{\crefalias{lemma}{definition}}
\AddToHook{env/claim/begin}{\crefalias{lemma}{claim}}
\AddToHook{env/proposition/begin}{\crefalias{lemma}{proposition}}
\AddToHook{env/observation/begin}{\crefalias{lemma}{observation}}
\AddToHook{env/remark/begin}{\crefalias{lemma}{remark}}

\newcommand{\poly}{\operatorname{poly}}
\newcommand{\polylog}{\operatorname{polylog}}
\newcommand{\Otilde}{\widetilde{O}}

\newcommand{\minH}{\mathbf{H}_\infty}
\newcommand{\aminH}{\widetilde{\mathbf{H}}_\infty}
\newcommand{\bits}{\{0,1\}}

\newcommand{\Ext}{\mathsf{Ext}}
\newcommand{\KTCond}{\mathsf{KTCond}}
\newcommand{\RSCond}{\mathsf{RSCond}}
\newcommand{\Samp}{\mathsf{Samp}}
\newcommand{\Tre}{\mathsf{Tre}}

\newcommand{\Mod}[1]{\ \mathrm{mod}\ #1}


\let\originalleft\left
\let\originalright\right
\renewcommand{\left}{\mathopen{}\mathclose\bgroup\originalleft}
\renewcommand{\right}{\aftergroup\egroup\originalright}

\title{Nearly-Linear Time Seeded Extractors with Short Seeds
}

\author{Dean Doron\thanks{Ben-Gurion University. \texttt{deand@bgu.ac.il}. Part of this work was done while visiting Instituto de Telecomunicações and the Simons Institute for the Theory of Computing.} \and João Ribeiro\thanks{Instituto de Telecomunicações and Departamento de Matemática, Instituto Superior Técnico, Universidade de Lisboa. \texttt{jribeiro@tecnico.ulisboa.pt}. Part of this work was done while at NOVA LINCS and NOVA School of Science and Technology, and while visiting the Simons Institute for the Theory of Computing.}}

\hyphenation{op-tical net-works semi-conduc-tor}

\date{}

\begin{document}
	
\maketitle
	
\begin{abstract}
    Seeded extractors are fundamental objects in pseudorandomness and cryptography, and a deep line of work has designed polynomial-time seeded extractors with nearly-optimal parameters.
    However, existing constructions of seeded extractors with short seed length and large output length run in time $\Omega(n \log(1/\eps))$ and often slower, where $n$ is the input source length and $\eps$ is the error of the extractor.
    Since cryptographic applications of extractors require $\eps$ to be small, the resulting runtime makes these extractors impractical.
    
    Motivated by this, we explore constructions of strong seeded extractors with short seeds computable in nearly-linear time $O(n \log^c n)$, for any error $\eps$.
    We show that an appropriate combination of modern condensers and classical approaches for constructing seeded extractors for high min-entropy sources yields such extractors.
    More precisely, we obtain strong extractors for $n$-bit sources with any min-entropy $k$ and any target error $\eps$ with seed length $d=O(\log(n/\eps))$ and output length $m=(1-\eta)k$ for an arbitrarily small constant $\eta>0$, running in nearly-linear time.
    When $k$ or $\eps$ are very small, our construction requires a reasonable one-time preprocessing step.
    These extractors directly yield privacy amplification protocols with nearly-linear time complexity (possibly after a one-time preprocessing step), large output length, and low communication complexity.
    As a second contribution, we give an instantiation of Trevisan's extractor that can be evaluated in \emph{truly} linear time in the RAM model, as long as the number of output bits is at most $\frac{n}{\log(1/\eps)\polylog(n)}$.
    Previous fast implementations of Trevisan’s extractor ran in $\widetilde{O}(n)$ time in this setting.
\end{abstract}

\newpage
\tableofcontents
\newpage

\section{Introduction}\label{sec:intro}

Seeded randomness extractors are central objects in the theory of pseudorandomness.
A strong $(k,\eps)$-seeded extractor, first introduced by Nisan and Zuckerman~\cite{NZ96}, is a deterministic function $\Ext\colon \B^n\times\B^d\to\B^m$ that receives as input an $n$-bit source of randomness $X$ with $k$ bits of min-entropy\footnote{A random variable $X$ has $k$ bits of min-entropy if $\Pr[X=x]\leq 2^{-k}$ for all $x$. Min-entropy has been the most common measure for the quality of a weak source of randomness since the work of Chor and Goldreich~\cite{CG88}.} and a $d$-bit independent and uniformly random seed $Y$, and outputs an $m$-bit string $\Ext(X,Y)$ that is $\eps$-close in statistical distance to the uniform distribution over $\B^m$, where $\eps$ is an error term, even when the seed $Y$ is revealed.
Besides their most direct application to the generation of nearly-perfect randomness from imperfect physical sources of randomness (and their early applications to derandomizing space-bounded computation~\cite{NZ96} and privacy amplification~\cite{BBR88,BBCM95}), seeded extractors have also found many other surprising applications throughout computer science, particularly in cryptography (specifically, in leakage-resilient cryptography~\cite{SV19, QWW21} and non-malleable cryptography~\cite{BGW19,CGL20,AKOOS22}).

For most applications, it is important to minimize the \emph{seed length} of the extractor.
A standard application of the probabilistic method shows the existence of strong $(k,\eps)$-seeded extractors with seed length $d=\log(n-k)+2\log(1/\eps)+O(1)$ and output length $m=k-2\log(1/\eps)-O(1)$, and we also know that these parameters are optimal up to the $O(1)$ terms~\cite{RT00}.
This motivated a deep line of research devising explicit constructions of seeded extractors with seed length as small as possible spanning more than a decade (e.g., \cite{NZ96,SZ99,NT99,Tre01,TZS06,SU05}) and culminating in extractors with essentially optimal seed length \cite{LRVW03,GUV09}. In particular, the beautiful work of Guruswami, Umans, and Vadhan~\cite{GUV09} gives explicit strong extractors with order-optimal seed length $d=O(\log(n/\eps))$ and output length $m=(1-\eta)k$ for any constant $\eta>0$, and follow-up work~\cite{DKSS13,TU12} further improved $m$ to $(1-o(1))k$ (at the expense of higher error).
The extractors constructed in these works are explicit, in the sense that there is an algorithm that given $x$ and $y$ computes the corresponding output $\Ext(x,y)$ in time polynomial in the input length.

A closer look shows that the short-seed constructions presented in the literature all run in time $\Omega(n\log(1/\eps))$, and often significantly slower.
In cryptographic applications of extractors we want the error guarantee $\eps$ to be small, which means that implementations running in time $\Omega(n\log(1/\eps))$ are often impractical.
If we insist on nearly-linear runtime for arbitrary error $\eps$, we can use strong seeded extractors based on universal hash functions that can be implemented in $O(n\log n)$ time (e.g., see~\cite{HT16}) and have essentially optimal output length, but have the severe drawback of requiring a very large seed length $d=\Omega(m)$. 

These limitations have been noted in a series of works studying concrete implementations of seeded extractors, with practical applications in quantum cryptography in mind~\cite{MPS12,FWEBC23,FYEC24}.
For example, Foreman, Yeung, Edgington, and Curchod~\cite{FYEC24} implement a version of Trevisan's extractor~\cite{Tre01,RRV02} with its standard instantiation of Reed--Solomon codes concatenated with the Hadmadard code, and emphasize its excessive running time as a major reason towards non-adoption.\footnote{The reason why these works focus on Trevisan's extractor is that this is the best seeded extractor (in terms of asymptotic seed length) that is known to be secure against quantum adversaries~\cite{DPVR12}.}
Instead, they have to rely on extractors based on universal hash functions, which, as mentioned above, are fast but require very large seeds.

This state of affairs motivates the following question, which is the main focus of this work:
\begin{quote}
    \em
    Can we construct strong $(k,\eps)$-seeded extractors with seed length $d=O(\log(n/\eps))$ and output length $m=(1-\eta)k$ computable in nearly-linear time, for arbitrary error $\eps$?
\end{quote}
\noindent
Progress on this problem would immediately lead to faster implementations of many cryptographic protocols that use seeded extractors, like those mentioned above -- the most significant performance gains would be in the context of privacy amplification.

\subsection{Our Contributions}

We make progress on the construction of nearly-linear time seeded extractors.

\paragraph{Seeded extractors with order-optimal seed length and large output length.}
We construct nearly-linear time strong seeded extractors with order-optimal seed length and large output length for any $k$ and $\eps$, with the caveat that they require a reasonable one-time preprocessing step whenever $k$ or $\eps$ are small.
More precisely, we have the following result.
\begin{theorem-intro}\label{thm:rec-ext-intro}
    For any constant $\eta>0$ there exists a constant $C>0$ such that the following holds.
    For any positive integers $n$ and $k\leq n$ and any $\eps>0$ satisfying $k\geq C \log(n/\eps)$ there exists a strong $(k,\eps)$-seeded extractor \[\Ext \colon \bits^n\times\bits^d\to\bits^m\] with seed length $d\leq C\log(n/\eps)$ and output length $m\geq (1-\eta)k$.
    Furthermore,
    \begin{enumerate}
        \item If $k \geq 2^{C\log^*\!n}\cdot \log^2(n/\eps)$ and $\eps\geq 2^{-Cn^{0.1}}$, then $\Ext$ is computable in time $\Otilde(n)$, where $\Otilde(\cdot)$ hides polylogarithmic factors in its argument and $\log^*\!$ denotes the iterated logarithm;
        \label{it:case1}

        \item If $k\geq 2^{C\log^*\!n}\cdot \log^2(n/\eps)$ and $\eps< 2^{-Cn^{0.1}}$, then $\Ext$ is computable in time $\Otilde(n)$ after a preprocessing step, corresponding to generating $O(\log^*\!n)$ primes $q\leq \poly(n/\eps)$; \label{it:case2}

        \item If $k < 2^{C\log^*\!n}\cdot \log^2(n/\eps)$, then $\Ext$ is computable in time $\Otilde(n)$ after a preprocessing step, corresponding to generating $O(\log\log n)$ primes $q\leq \poly(n/\eps)$ and a primitive element for each field $\F_q$. \label{it:case3}      
    \end{enumerate}

    The one-time preprocessing steps above can be implemented in time $\polylog(n/\eps)$ using randomness.\footnote{At least naively, deterministic algorithms for the above one-time preprocessing steps require time $\poly(n/\eps)$, by testing primality of all numbers in an interval of length $\poly(n/\eps)$. 
    The preprocessing step in \cref{it:case2} can be replaced by a different known procedure that can be implemented deterministically in time $\Otilde(n+\sqrt{n}\log(1/\eps)^{4+\delta})$ for any constant $\delta>0$.
    Note that this is $\Otilde(n)$ when, say, $\eps\geq 2^{-Cn^{0.1}}$, meaning that in this case its runtime can be absorbed into the runtime of the extractor, yielding \cref{it:case1}.
    This constraint can be weakened to roughly $\eps\geq 2^{-Cn^{0.16}}$ by using an alternative approach that we sketch in \cref{sec:remove-preproc}.
    To avoid overloading \cref{thm:rec-ext-intro}, we leave these discussions to \cref{remark:KT-preproc} and \cref{sec:remove-preproc}, respectively.}

\end{theorem-intro}

\cref{thm:rec-ext-intro}
follows from combining modern condensers with short seeds (namely, the lossless condenser of Kalev and Ta-Shma~\cite{KT22} and the lossy Reed-Solomon-based condenser of Guruswami, Umans, and Vadhan~\cite{GUV09}) with a careful combination and instantiation of classical recursive approaches developed by Srinivasan and Zuckerman~\cite{SZ99} and in~\cite{GUV09}.
It readily implies, among other things, an $\Otilde(n)$-time privacy amplification protocol where only $O(\log(n/\eps))$ bits need to be communicated over the one-way authenticated public channel and almost all the min-entropy can be extracted (after a reasonable one-time preprocessing step if $k$ or $\eps$ are very small).

\begin{remark}[complexity of the preprocessing steps]
    \em
    Both of the one-time preprocessing steps in \cref{thm:rec-ext-intro} are well-studied, and can be implemented in time $\polylog(n/\eps)$ using randomness.
    They are related to the condensers we use in the construction, and we discuss their complexity in more detail in \cref{remark:KT-preproc,remark:preproc3}, 
    and some approaches (and barriers) towards derandomizing them in \cref{sec:remove-preproc}.
\end{remark}

\paragraph{A new non-recursive construction.}
As a conceptual contribution which may be of independent interest, we present a new ``non-recursive'' construction of extractors with seed length $O(\log(n/\eps))$ and output length $(1-\eta)k$ that is computable in nearly-linear time when $k>\polylog(1/\eps)$ and avoids the complicated recursive procedures from~\cite{SZ99,GUV09} and \cref{thm:rec-ext-intro}.
We believe this to be a conceptually better approach towards constructing seeded extractors, and we discuss it in more detail in the technical overview and fully in \cref{sec:non-rec}.

\paragraph{Faster instantiations of Trevisan's extractor.}
One of the most widely-used explicit seeded extractors is Trevisan's extractor~\cite{Tre01,RRV02}.
While by now we have extractors with better parameters, one of its main advantages is that it is one of the few examples of extractors, and in a sense the best one, which are known to be \emph{quantum-proof}.\footnote{An extractor is quantum-proof if its output is close to uniform even in the presence of a quantum adversary that has some (bounded) correlation with $X$. A bit more formally, $\Ext$ is quantum-proof if for all classical-quantum states $\rho_{XE}$ (where $E$ is a quantum state correlated with $X$) with $H_{\infty}(X|E) \ge k$, and a uniform seed $Y$, it holds that $\rho_{\Ext(X,Y)YE} \approx_{\eps} \rho_{U_m} \otimes \rho_{Y} \otimes \rho_{E}$. See \cite{DPVR12} for more details.}

Trevisan's extractor uses two basic primitives: combinatorial designs (when more than one output bit is desired), and binary list-decodable codes. 
A standard instantiation of such suitable codes goes by concatenating a Reed-Solomon code with a Hadamard code, and this is also what is considered in~\cite{FWEBC23,FYEC24}.
As they also observe, this gives a nearly-linear time construction when the output length $m=1$.
In fact, by leveraging fast multipoint evaluation, one can also get a nearly-linear time construction for any output length $m\leq \frac{n}{\log(1/\eps)}$, although this was not noted in previous works.\footnote{For a rigorous statement on fast multipoint evaluation, see \cref{lemma:fast}.}

We present an alternative instantiation of Trevisan's extractor that can be computed in truly linear time on a RAM in the logarithmic cost model, for any output length $m\leq \frac{n}{\log(1/\eps)\cdot \polylog(n)}$.
While the underlying technical details are simple, we opt to present this result here  
because it may be of interest to some readers due to wide interest in efficient implementations of Trevisan’s extractor, and because it was not observed in prior works.
\begin{theorem-intro}
There exists an instantiation of Trevisan's extractor, set to extract $m$ bits with any error $\eps > 0$, that is computable in: 
\begin{enumerate}
    \item Time $O(n)+m\log(1/\eps) \cdot \polylog(n)$ after a preprocessing step\footnote{This preprocessing step corresponds to precomputing the design, and is \emph{not} the same preprocessing step as in \cref{thm:rec-ext-intro}.} running in time $\widetilde{O}(m\log(n/\eps))$, on a RAM in the
    logarithmic cost model. In particular, there exists a universal constant $c$, such that whenever $m \le \frac{n}{\log(1/\eps)\cdot \log^{c}(n)}$, the instantiation runs in time $O(n)$, without the need for a preprocessing step. 
    \item Time $\widetilde{O}(n+m\log(1/\eps))$ in the Turing model.
\end{enumerate}
\end{theorem-intro}

We note that one interesting instantiation of the above theorem is when Trevisan's extractor is set to output $k^{\Omega(1)}$ bits for $k = n^{\Omega(1)}$. In this setting, Trevisan's extractor requires a seed of length $O\left( \frac{\log^{2}(n/\eps)}{\log(1/\eps)} \right)$, and, as long as $\eps$ is not too tiny, we get truly-linear runtime.

\subsection{Other Related Work}

Besides the long line of work focusing on improved constructions of explicit seeded extractors and mentioned in the introduction above, other works have studied randomness extraction in a variety of restricted computational models.
These include extractors computable by streaming algorithms~\cite{BRST02}, local algorithms~\cite{Lu02,Vad04,BG13,CL18}, AC$^0$ circuits~\cite{GVW15,CL18,CW24},
AC$^0$ circuits with a layer of parity gates~\cite{HIV22},
NC$^1$ circuits~\cite{CW24}, and low-degree polynomials~\cite{ACGLR22,AGMR24,GGHNY24}.
Moreover, some works have independently explored implementations of other fundamental pseudorandomness primitives  in various restricted computational models. These include $k$-wise and $\eps$-biased generators, which often play a key role in constructions of various types of extractors. See  \cite{healy2006constant,healy2008randomness,celis2013balls,meka2014fast} for a very partial list.

As mentioned briefly above, some works have also focused on constructing seeded extractors computable in time $O(n\log n)$, motivated by applications in privacy amplification for quantum key distribution.
Such constructions are based on hash functions, and are thus far restricted to $\Omega(m)$ seed length.
The work of Hayashi and Tsurumaru~\cite{HT16} presents an extensive discussion of such efforts. We also mention that nearly-linear time extractors with very short seed, in the regime $k=n^{\Omega(1)}$ and $\eps=n^{-o(1)}$, were given in \cite{DMOZ22}, with applications in derandomization.\footnote{Our extractor can replace the one constructed in \cite{DMOZ22}, and it is indeed more efficient. However, due to other bottlenecks, they need to work with high error $\eps=n^{-o(1)}$ and relatively large min-entropy, and so the difference between the two extractors is not significant.} 

The techniques in this paper build on, and extend, block-source extraction techniques \cite{NZ96,Zuc96,Zuc97,SZ99,GUV09}. 
Another line of work, notably including \cite{NT99,LRVW03,DW08,DKSS13,TU12}, utilizes  \emph{mergers} to construct seeded extractors.\footnote{For the definition of mergers, see, e.g., \cite{DKSS13}. We note that those works \emph{do} often use block source conversion techniques, but usually in a different manner.} 
However, when restricted to constructions that get optimal seed length, they generally do not support low error (let alone run in nearly-linear time). 
On the positive side, the state-of-the-art mergers-based constructions get sub-linear entropy loss \cite{DKSS13,TU12}, whereas the constructions in this paper do not. We stress that it is still a very interesting open problem to construct a low-error, optimal seed length extractor with sub-linear entropy loss.

\subsection{Technical Overview}

In a nutshell, we obtain \cref{thm:rec-ext-intro} by following two standard high-level steps:
\begin{enumerate}
    \item We apply a randomness condenser with small seed length $O(\log(n/\eps))$ to the original $n$-bit weak source $X$ to obtain an output $X'$ that is $\eps$-close to a high min-entropy source. \label{it:condensing-step}

    \item We apply a seeded extractor tailored to high min-entropy sources with small seed length $O(\log(n/\eps))$ to $X'$ to obtain a long output that is $\eps$-close to uniform. \label{it:high-entropy-ext}
\end{enumerate}
To realize this approach, we need to implement each of these steps in nearly-linear time $\Otilde(n)$ (possibly after a reasonable one-time preprocessing step).
We briefly discuss how we achieve this, and some pitfalls we encounter along the way.

\paragraph{Observations about nearly-linear time condensers.}

In order to implement \cref{it:condensing-step}, we need to use \emph{fast} condensers with short seeds.
Luckily for us, some existing state-of-the-art constructions of condensers can already be computed in nearly-linear time, although, to the best of our knowledge, this has not been observed before.
We argue this carefully in \cref{sec:fast-condensers}.

For example, the ``lossy Reed-Solomon condenser'' from~\cite{GUV09} interprets the source as a polynomial $f\in\F_q[x]$ of degree $d\leq n/\log q$ and the seed $y$ as an element of $\F_q$, and outputs  $\RSCond(f,y)=(f(y),f(\zeta y),\dots,f(\zeta^{m'-1} y))$, for an appropriate $m'$ and field size $q$, with $\zeta$ a primitive element of $\F_q$.
Evaluating $\RSCond(f,y)$ corresponds to evaluating the same polynomial $f$ on multiple points in $\F_q$.
This is an instance of the classical problem of multipoint evaluation in computational algebra, for which we know fast and practical algorithms (e.g., see~\cite[Chapter 10]{vZGG13} or \cref{lemma:fast}) running in time $\Otilde((d+m')\log q)=\Otilde(n)$, since $d\leq n/\log q$, and if $m'\leq n/\log q$.

A downside of this condenser is that it requires knowing a primitive element $\zeta$ of $\F_q$ with $q=\poly(n/\eps)$.
But note that finding this primitive element only needs to be done once for a given set of parameters $(n,k,\eps,m)$ independently of the actual seed and input source, and so we leave it as a one-time preprocessing step.
As discussed in \cref{remark:preproc3}, we have the freedom of choosing $q$ to be prime, and in that case we can find this primitive element in randomized time $\polylog(q)=\polylog(n/\eps)$.

The lossless ``KT condenser'' from~\cite{KT22} has a similar flavor. 
It interprets the source as a polynomial $f\in\F_q[x]$ and the seed $y$ as an evaluation point, and outputs $\KTCond(f,y)=(f(y),f'(y),\dots,f^{(m'-1)}(y))$, for some appropriate $m'$.
The problem of evaluating several derivatives of the same polynomial $f$ on the same point $y$ (sometimes referred to as Hermite evaluation) is closely related to the multipoint evaluation problem above, and can also be solved in time $\Otilde(n)$.\footnote{Interestingly, recent works used other useful computational properties of the KT condenser. Cheng and Wu~\cite{CW24} crucially use the fact that the KT condenser can be computed in NC$^1$. Doron and Tell \cite{DT23} use the fact that the KT condenser is logspace computable for applications in space-bounded derandomization.}
Furthermore, evaluating the KT condenser only requires preprocessing when $\eps$ is very small.
On the other hand, it only works when the min-entropy $k\geq C\log^2(n/\eps)$ for a large constant $C>0$, where $n$ is the source length and $\eps$ the target error of the condenser.

\paragraph{The ``ideal'' approach to seeded extraction from high min-entropy sources.}

We have seen that there are fast condensers with short seeds.
It remains to realize \cref{it:high-entropy-ext}.
Because of the initial condensing step, we may essentially assume that our $n$-bit weak source $X$ has min-entropy $k\geq (1-\delta)n$, for an arbitrarily small constant $\delta>0$.
In this case, we would like to realize in time $\Otilde(n)$ and with overall seed length $O(\log(n/\eps))$ what we see as the most natural approach to seeded extraction from high min-entropy sources:
\begin{enumerate}
    \item Use a fresh short seed to transform $X$ into a \emph{block source} $Z=(Z_1, Z_2,\dots, Z_t)$ with geometrically decreasing blocks.
    A block source has the property that each block $Z_i$ has good min-entropy even conditioned on the values of blocks $Z_1,\dots,Z_{i-1}$.\label{it:create-block}

    \item Perform \emph{block source extraction} on $Z$ using another fresh short seed. Due to its special structure, we can extract a long random string from $Z$ using only the (small) seed length associated with extracting randomness from the smallest block $Z_t$.\label{it:block-ext}
\end{enumerate}
Similar approaches were taken in \cite{NZ96,Zuc96}, but they do not support logarithmic seed and low-error (see \cref{sec:non-rec}). 
The classical approach to \cref{it:block-ext} where we iteratively apply extractors based on universal hash functions with increasing output lengths to the blocks of $Z$ from right to left is easily seen to run in time $\Otilde(n)$ and requires a seed of length $O(\log(n/\eps))$ if, e.g., we use the practical extractors of~\cite{TSSR11,HT16}.
Therefore, we only need to worry about realizing \cref{it:create-block}.

A standard approach to \cref{it:create-block} would be to use an \emph{averaging sampler} to iteratively sample subsequences of $X$ as the successive blocks of the block source $Z$, following a classical strategy of Nisan and Zuckerman~\cite{NZ96} (improved by \cite{RSW06,Vad04}).
We do know averaging samplers running in time $\Otilde(n)$ (such as those based on random walks on a carefully chosen expander graph).
However, this approach requires a fresh seed of length $\Theta(\log(n/\eps))$ \emph{per block of $Z$}.
Since $Z$ will have roughly $\log n$ blocks, this leads to an overall seed of length $\Theta(\log^2 n+\log(1/\eps))$, which is too much for us.

Instead, we provide a new analysis of a sampler based on bounded independence, that runs in time $\Otilde(n)$ and only requires a seed of length $O(\log(n/\eps))$ to create the \emph{entire} desired block source. 
However, this block source has blocks of \emph{increasing} lengths, whereas we need decreasing blocks to perform the block source extraction. We remedy that by sub-sampling from each block using a standard expander random walk sampler.

We give the construction, which may be of independent interest, in \cref{sec:bounded-indep-sampler}.
The caveat of this construction is that it only works as desired when the target error $\eps\geq 2^{-k^c}$ for some small constant $c>0$. See \cref{sec:non-rec} for the formal analysis.

\paragraph{Getting around the limitation of the ideal approach.}
We saw above that combining the ideal approach to seeded extraction from high min-entropy sources with the new analysis of the bounded independence sampler yields a conceptually simple construction with the desired properties when the error is not too small.
However, we would like to have $\Otilde(n)$-time seeded extraction with $O(\log(n/\eps))$ seed length and large output length for all ranges of parameters.

To get around this limitation of our first construction, it is natural to turn to other classical approaches for constructing nearly-optimal extractors for high min-entropy sources, such as those of Srinivasan and Zuckerman~\cite{SZ99} or Guruswami, Umans, and Vadhan~\cite{GUV09}.
These approaches consist of intricate recursive procedures combining a variety of combinatorial objects, and require a careful analysis.\footnote{In our view, these approaches are much less conceptually appealing than the ``ideal'' approach above. We believe that obtaining conceptually simpler constructions of fast nearly-optimal extractors that work for all errors is a worthwhile research direction, even if one does not improve on the best existing parameters.}
However, we could not find such an approach that works as is, even when instantiated with $\Otilde(n)$-time condensers and $\Otilde(n)$-time hash-based extractors.
In particular:
\begin{itemize}
    \item The GUV approach~\cite{GUV09} gives explicit seeded extractors with large output length and order-optimal seed length for \emph{any} min-entropy requirement $k$ and error $\eps$.
    However, its overall runtime is significantly larger than $\Otilde(n)$ whenever $\eps$ is not extremely small (for example, $\eps= 2^{-k^\alpha}$ for some $\alpha\in (0,1/2)$ is not small enough).

    \item The SZ approach~\cite{SZ99} can be made to run in time $\Otilde(n)$ and have large output length when instantiated with fast condensers, samplers, and hash-based extractors, but it is constrained to error $\eps\geq 2^{-ck/\log^{*}\! n}$, where $\log^{*}$ is the iterated logarithm.
\end{itemize}
Fortunately, the pros and cons of the GUV and SZ approaches complement each other.
Therefore, we can obtain our desired result by applying appropriately instantiated versions of the GUV and SZ approaches depending on the regime of $\eps$ we are targeting.

\subsection{Future Work}

We list here some directions for future work:
\begin{itemize}
    \item Remove the preprocessing step that our constructions behind \cref{thm:rec-ext-intro} require when $k$ or $\eps$ are small.
    We expand on some promising approaches suggested by Jesse Goodman and also some barriers we face in \cref{sec:remove-preproc}.

    \item On the practical side, develop software implementations of seeded extractors with near-optimal seed length and large output length. In particular, we think that our non-recursive construction in \cref{sec:non-rec} holds promise in this direction.
\end{itemize}

\subsection{Acknowledgements}

We thank Jesse Goodman for many valuable comments and suggestions that greatly improved this work, and in particular for suggesting the approach outlined in \cref{sec:remove-preproc}.

Part of this research was done while the authors were visiting the Simons Institute for the Theory of Computing, supported by DOE grant \#DE-SC0024124.
D.\ Doron's research was also supported by Instituto de Telecomunicações (ref.\ UIDB/50008/2020) with the financial support of FCT - Fundação para a Ciência e a Tecnologia and by NSF-BSF grant \#2022644.
J.\ Ribeiro's research was also supported by Instituto de Telecomunicações (ref.\ UIDB/50008/2020) and NOVA LINCS (ref.\ UIDB/04516/2020) with the financial support of FCT - Fundação para a Ciência e a Tecnologia.

\section{Preliminaries}

\subsection{Notation}
We often use uppercase Roman letters to denote sets and random variables -- the distinction will be clear from context.
We denote the support of a random variable $X$ by $\supp(X)$, and for a random variable $X$ and set $S$, we also write $X\sim S$ to mean that $X$ is supported on $S$.
For a random variable $X$, we write $x\sim X$ to mean that $x$ is sampled according to the distribution of $X$.
We use $U_d$ to denote a random variable that is uniformly distributed over $\B^d$.
For two strings $x$ and $y$, we may write $(x,y)$ for their concatenation.
Given two random variables $X$ and $Y$, we denote their product distribution by $X\times Y$ (i.e., $\Pr[X\times Y=(x,y)]=\Pr[X=x]\cdot \Pr[Y=y]$).
Given a positive integer $n$, we write $[n]=\{1,\dots,n\}$.
For a prime power $q$, we denote the finite field of order $q$ by $\F_q$.
We denote the base-$2$ logarithm by $\log$.

\subsection{Model of Computation}
We work in the standard, multi-tape, Turing machine model with some fixed number of
work tapes. In particular, there exists a constant $C$ such that all our claimed time bounds hold whenever we work with
at most $C$ work tapes. This also implies that our results hold in the RAM model, wherein each machine word can store integers
up to some fixed length, and standard word operations take constant time. In \cref{sec:fast-trevisan} we will give, in addition to the standard Turing machine model bounds, an improved runtime bound that is dedicated to the logarithmic-cost RAM model.

\subsection{Fast Finite Field Operations}

Understanding the complexity of operations in finite fields 
will be useful for the analysis of the complexity of the condensers from~\cite{GUV09,KT22} in \cref{sec:fast-condensers}.
For a prime power $q = p^{\ell}$, we let $M_{q}(d)$ be the number of field operations required to multiply two univariate polynomials over $\F_q$ of degree less than $d$, and $M_{q}^{\mathsf{b}}(d)$ be the bit complexity of such a multiplication, so $M_{q}^{\mathsf{b}}(d) \le M_{q}(d) \cdot T(q)$, where we denote by $T(q)$ an upper bound on the bit complexity of arithmetic operations in $\F_q$.
Harvey and van der Hoeven \cite{HH19} (see also \cite{HH22}) showed that
\[
M^{\mathsf{b}}_{q}(d) = O(d \log q\cdot \log(d\log q) \cdot 4^{\log^{*}(d\log q)}).
\]
Overall, when $d \le q$, we have that $M_{q}^{\mathsf{b}}(d) = d\log d \cdot \widetilde{O}(\log q)$.\footnote{A similar bound can be obtained using simpler methods. If $\F_q$ contains a $d$-th root of unity, one can get $M_{q}(d)=d\log d$ from the classic FFT algorithm \cite{CT65}. For a simpler algorithm attaining the bound $M_{q}(d)=d\log d \loglog d$, see \cite[Sections 8, 10]{vZGG13}. When $p=2$, $M_{q}(d)=d\log d \loglog d$ also follows from Sch{\"o}nhage’s algorithm \cite{Sch77}. Now, since $M_{q}^{\mathsf{b}}(d) = M_{q}(d) \cdot M_{p}(\ell) \cdot M_{p}^{\mathsf{b}}(0)$, a bound of $d\log d \cdot \widetilde{O}(\log q)$ also follows.}

We will use fast multi-point evaluation and fast computation of derivatives (together with the preceding bounds on $M_{q}^{\mathsf{b}}$).

\begin{lemma}[\protect{\cite{BM74}, see also \cite[Chapter 10]{vZGG13}}]\label{lemma:fast}
Let $d \in \mathbb{N}$, and let $q=p^r$ be a prime power. Then,
given
a polynomial $f \in \F_q[X]$  of degree at most $d$ (together with a representation of $\F_q$ via an irreducible polynomial over $\F_p$ of degree $r$), the following holds.
\begin{enumerate}
    \item Given a set $\set{\alpha_1,\ldots,\alpha_t} \subseteq \F_q$, where $t \le d$, one can compute $f(\alpha_1),\ldots,f(\alpha_t)$ in time 
    $O(M^{\mathsf{b}}_{q}(d) \cdot \log d) = d\log^{2}d \cdot \widetilde{O}(\log q)$. \label{arithmetic:it1}
    \item For $t \le d$ and $\alpha \in \F_q$, one can compute the derivatives $f(\alpha),f'(\alpha),\ldots,f^{(t)}(\alpha)$ in time  $O(M_{q}^{\mathsf{b}}(d) \cdot \log d)=d\log^{2}d \cdot \widetilde{O}(\log q)$. 
\end{enumerate}
Note that when $q \le 2^d$, we can bound $O(M_{q}(d) \cdot \log d)$ by $\widetilde{O}(d) \cdot \log q$.
\end{lemma}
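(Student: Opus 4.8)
My plan is to derive both items from the classical fast polynomial-arithmetic toolkit built on subproduct trees and Newton iteration, exactly as in \cite[Chapter~10]{vZGG13}; the only real content is the complexity bookkeeping. Throughout I would use that two polynomials in $\F_q[X]$ of degree less than $d$ multiply in $M_{q}(d)$ field operations, that $M_{q}(d)/d$ is nondecreasing (superadditivity of $M_q$), and that dividing a degree-$\le d$ polynomial by one of degree $\le d$ costs $O(M_{q}(d))$ field operations via Newton iteration on the reversed divisor. To pass from field-operation counts to bit complexity I would use that one $\F_q$ operation costs $T(q)=\Otilde(\log q)$ bit operations and that a degree-$<d$ multiplication costs $M_{q}^{\mathsf{b}}(d)$ bit operations when $q$ is prime or a power of $2$; together with the estimate $M_{q}^{\mathsf{b}}(d)=d\log d\cdot\Otilde(\log q)$ recalled just before the lemma, this converts $O(M_{q}(d)\log d)$ field operations into $\Otilde(d)\cdot\log q$ bit operations whenever $q\le 2^d$.

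For the first item I would first build the subproduct tree of $\alpha_1,\dots,\alpha_t$: a balanced binary tree in which the node for a block of $2^i$ consecutive points stores $\prod_j (X-\alpha_j)$, so level $i$ (leaves being level $0$) holds $\lceil t/2^i\rceil$ polynomials of degree $\le 2^i$. Forming level $i$ from level $i-1$ takes at most $\lceil t/2^i\rceil\cdot M_{q}(2^{i-1})$ field operations, which is $O(M_{q}(t))$ because $M_{q}(d)/d$ is nondecreasing; summing over the $O(\log t)$ levels bounds the construction by $O(M_{q}(t)\log t)$. Then I would run the top-down evaluation: reduce $f$ modulo the root polynomial (one division, $O(M_{q}(d))$ operations since $\deg f\le d$), and recursively, from the residue of $f$ at a node obtain its residues at the two children by two further divisions. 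The node polynomials on any single level have combined degree $O(t)$, so this pass costs $O(M_{q}(t)\log t)$ operations below the root; together with the initial $O(M_{q}(d))$ division this is $O(M_{q}(d)\log d)$ field operations, and the leaf residues are exactly $f(\alpha_1),\dots,f(\alpha_t)$. In bit complexity this is $O(M_{q}^{\mathsf{b}}(d)\log d)=d\log^2 d\cdot\Otilde(\log q)$.

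For the second item I would exploit that $g(X):=f(X)\bmod (X-\alpha)^{t+1}$ equals $\sum_{i=0}^t \frac{f^{(i)}(\alpha)}{i!}(X-\alpha)^i$, so the derivatives are recovered by (i) one division to compute $g$, costing $O(M_{q}(d))$ field operations; (ii) the Taylor shift $g(X)\mapsto g(X+\alpha)=\sum_{i=0}^t c_iX^i$ on a degree-$\le t$ polynomial; and (iii) the read-off $f^{(i)}(\alpha)=i!\cdot c_i$. I would implement the Taylor shift by the usual divide-and-conquer, $g=g_0+X^{\lceil (t+1)/2\rceil}g_1\mapsto g_0(X+\alpha)+(X+\alpha)^{\lceil (t+1)/2\rceil}g_1(X+\alpha)$, with the powers $(X+\alpha)^{2^j}$ precomputed by repeated squaring; this costs $O(M_{q}(t)\log t)=O(M_{q}(d)\log d)$ field operations and uses no divisions. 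Step (i) is dominated by this, so the whole computation is $O(M_{q}(d)\log d)$ field operations, i.e.\ $\Otilde(d)\cdot\log q$ bit operations when $q\le 2^d$. When the characteristic of $\F_q$ exceeds $t$ the factorials in (iii) are units and this returns the ordinary derivatives; in small characteristic I would instead keep the $c_i$ themselves, which are the Hasse derivatives of $f$ at $\alpha$ --- the notion actually relevant for the derivative-based condenser of \cite{KT22} that invokes this lemma --- and then the procedure is division-free and valid over every $\F_q$.

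I do not expect a genuine obstacle, since the result is classical; the delicate points are just the two pieces of bookkeeping above. The first is the level-by-level accounting for the subproduct tree and the top-down remaindering, where superadditivity of $M_q$ (equivalently, monotonicity of $M_{q}(d)/d$) is exactly what keeps each of the $O(\log d)$ levels at $O(M_{q}(d))$ rather than $O(M_{q}(d))$ times the number of nodes. The second is the small-characteristic case of the second item, where I must be careful that the Taylor shift is carried out without dividing by non-units and that the output is reinterpreted as Hasse derivatives. Everything else --- in particular the passage from field-operation counts to the stated time bounds --- follows directly from the bounds on $M_{q}^{\mathsf{b}}$ and $T(q)$ recalled immediately before the lemma.
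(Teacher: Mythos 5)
The paper states this lemma as a cited result from \cite{BM74} and \cite[Chapter~10]{vZGG13} and gives no proof of its own, so there is no internal argument to compare against; your proposal is a correct reconstruction of the standard subproduct-tree / Taylor-shift proof that those references give, and the complexity bookkeeping (superadditivity of $M_q$ to control the per-level cost, then conversion from field operations to bit operations via $M_q^{\mathsf{b}}$ and $T(q)$) is exactly right.

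One remark is worth keeping separate because it goes slightly beyond the lemma as literally stated. Your algorithm for item~2 naturally produces the Taylor coefficients $c_i$ of $f(X+\alpha)$ --- i.e.\ the \emph{Hasse} derivatives of $f$ at $\alpha$ --- and you recover the ordinary derivatives by $f^{(i)}(\alpha)=i!\,c_i$. In characteristic $p\le t$ this relation still holds identically (both sides vanish once $i\ge p$, since $i!\equiv 0$ and also $f^{(p)}\equiv 0$ for every $f\in\F_q[X]$), so the lemma as written is true with no caveat. But you are right that the ordinary derivatives are then the wrong deliverable: for $i\ge p$ they are all zero and carry no information, whereas the KT condenser of \cite{KT22} (invoked in \cref{thm:ktcond}) genuinely needs the Hasse derivatives $c_i$. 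The paper's proof of \cref{thm:ktcond} writes the output as $(f(y),f'(y),\ldots,f^{(m')}(y))$ with ordinary-derivative notation, which is a slight abuse; your observation that one should just output the $c_i$ themselves --- which is what the division-free Taylor-shift routine gives anyway --- is the correct reading, and it is worth noting that this makes the procedure valid and useful over every $\F_q$ with no restriction on characteristic.
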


For a comprehensive discussion of fast polynomial arithmetic, see Von Zur Gathen and Gerhard's book~\cite{vZGG13} (and the more recent important developments~\cite{HH21}).

\subsection{Statistical Distance, Entropy}

We present some relevant definitions and lemmas about statistical distance and min-entropy.

\begin{definition}[statistical distance]
    The \emph{statistical distance} between two random variables $X$ and $Y$ supported on $\cS$, denoted by $\Delta(X,Y)$, is defined as
    \begin{equation*}
        \Delta(X,Y) = \max_{\cT\subseteq \cS}|\Pr[X\in \cT]-\Pr[Y\in \cT]| = \frac{1}{2}\sum_{x\in \cS}|\Pr[X=x]-\Pr[Y=x]|.
    \end{equation*}
    We say that $X$ and $Y$ are \emph{$\eps$-close}, and write $X\approx_\eps Y$, if $\Delta(X,Y)\leq \eps$.
\end{definition}

\begin{definition}[min-entropy]
    The \emph{min-entropy} of a random variable $X$ supported on $\cX$, denoted by $\minH(X)$, is defined as
    \begin{equation*}
        \minH(X) = -\log\left(\max_{x\in\cX} \Pr[X=x]\right).
    \end{equation*} 
The min-entropy \emph{rate} of $X$ is given by $\frac{\minH(X)}{\log|\cX|}$.
\end{definition}

\begin{definition}[average conditional min-entropy]
    Let $X$ and $Y$ be two random variables supported on $\cX$ and $\cY$, respectively.
    The \emph{average conditional min-entropy} of $X$ given $Y$, denoted by $\aminH(X|Y)$, is defined as
    \begin{equation*}
        \aminH(X|Y) = -\log\left(\E_{y\sim Y}[2^{-\minH(X|Y=y)}]\right).
    \end{equation*}
\end{definition}

The following standard lemma gives a chain rule for min-entropy.
\begin{lemma}[see, e.g., \cite{DORS08}]\label{lem:chainrule}
    Let $X$, $Y$, and $Z$ be arbitrary random variables such that $|\supp(Y)|\leq 2^\ell$.
    Then,
    \begin{equation*}
        \aminH(X|Y,Z)  \geq  \aminH(X|Z)-\ell.
    \end{equation*}
\end{lemma}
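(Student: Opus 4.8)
The plan is to unfold the definition of average conditional min-entropy and reduce the claimed inequality to an elementary bound on joint probabilities, exploiting that further conditioning on $Y$ can only spread probability mass over at most $2^\ell$ atoms. Throughout I will use the identity $2^{-\minH(X\mid W=w)} = \max_x \Pr[X=x\mid W=w]$, valid for any random variable $W$ and any $w\in\supp(W)$. Combined with the definition this gives, for any $W$, the clean reformulation $2^{-\aminH(X\mid W)} = \E_{w\sim W}\big[\max_x \Pr[X=x\mid W=w]\big] = \sum_w \max_x \Pr[X=x, W=w]$, where the last equality uses $\Pr[W=w]\cdot\max_x\Pr[X=x\mid W=w] = \max_x \Pr[X=x,W=w]$ (the nonnegative factor $\Pr[W=w]$ pulls inside the maximum).

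Applying this with $W=(Y,Z)$ and with $W=Z$, I would rewrite the target inequality $\aminH(X\mid Y,Z)\ge \aminH(X\mid Z)-\ell$ in the equivalent multiplicative form $\sum_{y,z}\max_x \Pr[X=x,Y=y,Z=z] \;\le\; 2^\ell\cdot \sum_z \max_x \Pr[X=x,Z=z]$, obtained by negating the $\log$ of both sides. The heart of the argument is the pointwise bound $\max_x\Pr[X=x,Y=y,Z=z] \le \max_x \Pr[X=x,Z=z]$ for every fixed $(y,z)$: indeed $\Pr[X=x,Y=y,Z=z]\le \Pr[X=x,Z=z]$ for each individual $x$, so if $x^\star$ attains the left-hand maximum then $\Pr[X=x^\star,Y=y,Z=z]\le \Pr[X=x^\star,Z=z]\le \max_x\Pr[X=x,Z=z]$. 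Now fix $z$ and sum over $y$: the number of $y$ with $\Pr[X=x,Y=y,Z=z]\neq 0$ for some $x$ is at most $|\supp(Y)|\le 2^\ell$, so $\sum_y \max_x\Pr[X=x,Y=y,Z=z] \le 2^\ell\cdot \max_x\Pr[X=x,Z=z]$. Summing this over $z$ gives exactly the displayed inequality, and taking $-\log$ finishes the proof.

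There is no real obstacle here; the proof is short and elementary. The one spot that needs a moment's care is the step that replaces a ``max of a sum'' by a ``sum of maxes'' — one must only assert the inequality used above, never equality — and one should note that the count of relevant $y$-terms is bounded by $2^\ell$ \emph{uniformly in $z$}, which is precisely what the hypothesis $|\supp(Y)|\le 2^\ell$ buys us, so no conditional-support subtleties arise. I would also remark that running the same computation with the trivial (constant) random variable in place of $Z$ recovers the standard unconditional statement $\aminH(X\mid Y)\ge \minH(X)-\ell$ as a special case.
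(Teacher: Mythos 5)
Your proof is correct. Each step is sound: the reformulation $2^{-\aminH(X\mid W)}=\sum_w\max_x\Pr[X=x,W=w]$ follows directly from the definition by pulling the nonnegative factor $\Pr[W=w]$ inside the maximum; the pointwise bound $\max_x\Pr[X=x,Y=y,Z=z]\le\max_x\Pr[X=x,Z=z]$ holds because joint probabilities are dominated by marginals; and the count of nonzero summands in $y$ is bounded by $|\supp(Y)|\le 2^\ell$ uniformly in $z$, exactly as you note, so summing over $z$ and taking $-\log$ closes the argument.

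The paper does not give its own proof; it cites Dodis, Ostrovsky, Reyzin, and Smith (DORS08), where the standard route factors through the intermediate quantity $\aminH((X,Y)\mid Z)$: one first shows $\aminH(X\mid Y,Z)\ge\aminH((X,Y)\mid Z)-\ell$ (using precisely the "at most $2^\ell$ nonzero $y$-terms" count) and then $\aminH((X,Y)\mid Z)\ge\aminH(X\mid Z)$ (joint maxima are smaller than marginal maxima). You fuse these two inequalities into a single sum manipulation — the combinatorial content is identical, but your version is a touch more direct and avoids naming the intermediate quantity. Both arguments are elementary and equally rigorous; the DORS-style split is slightly more modular in that each inequality is useful on its own elsewhere, while yours is the shortest path to exactly this statement. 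Your closing observation that taking $Z$ trivial recovers $\aminH(X\mid Y)\ge\minH(X)-\ell$ is correct and a nice sanity check.
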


We can turn the chain rule above into a high probability statement.
\begin{lemma}[see, e.g., \cite{MW97}]\label{lem:tailboundchainrule}
    Let $X$, $Y$, and $Z$ be random variables such that $|\supp(Y)|\leq 2^\ell$.
    Then,
    \begin{equation*}
        \Pr_{y\sim Y}[\aminH(X|Y=y,Z)\geq \aminH(X|Z)-\ell-\log(1/\delta)]\geq 1-\delta
    \end{equation*}
    for any $\delta>0$.
\end{lemma}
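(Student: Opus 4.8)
The plan is to reduce the statement to the deterministic chain rule of \cref{lem:chainrule} together with Markov's inequality. The key first step is to establish the identity
\[
    \E_{y\sim Y}\!\left[2^{-\aminH(X|Y=y,Z)}\right] = 2^{-\aminH(X|Y,Z)},
\]
where I interpret $\aminH(X|Y=y,Z)$ as the average conditional min-entropy of $X$ given $Z$ computed in the probability space conditioned on $Y=y$ (values $y$ with $\Pr[Y=y]=0$ are vacuous and ignored). This follows purely by unfolding definitions: the right-hand side equals $\E_{(y,z)\sim(Y,Z)}[2^{-\minH(X|Y=y,Z=z)}]$, and splitting the joint expectation over $(Y,Z)$ into an outer expectation over $y\sim Y$ and an inner expectation over $z$ drawn from the conditional distribution of $Z$ given $Y=y$ recovers exactly $\E_{y\sim Y}[2^{-\aminH(X|Y=y,Z)}]$.

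Next I would invoke \cref{lem:chainrule}, instantiated so that it reads $\aminH(X|Y,Z)\geq \aminH(X|Z)-\ell$, using the hypothesis $|\supp(Y)|\leq 2^\ell$. Combining this with the identity above gives
\[
    \E_{y\sim Y}\!\left[2^{-\aminH(X|Y=y,Z)}\right] \leq 2^{\ell-\aminH(X|Z)}.
\]
Finally, I would apply Markov's inequality to the non-negative random variable $W=2^{-\aminH(X|Y=y,Z)}$, where $y$ is drawn according to $Y$: since $\E[W]\leq 2^{\ell-\aminH(X|Z)}$, we get $\Pr_{y\sim Y}[W\geq \tfrac{1}{\delta}\cdot 2^{\ell-\aminH(X|Z)}]\leq\delta$. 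Writing the threshold as $2^{\ell+\log(1/\delta)-\aminH(X|Z)}$ and taking $-\log$ of both sides of the inequality inside the probability (which reverses it, since $-\log W = \aminH(X|Y=y,Z)$), the complementary event is precisely $\aminH(X|Y=y,Z)\geq \aminH(X|Z)-\ell-\log(1/\delta)$, and it has probability at least $1-\delta$, as claimed.

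The only point requiring genuine care is the nested-expectation identity in the first step, i.e., checking that conditioning on $Y=y$ and then averaging over $Z$ composes correctly with averaging over the pair $(Y,Z)$; everything else is a one-line application of a cited result and of Markov. As an alternative that avoids even citing \cref{lem:chainrule}, one can argue directly from joint probabilities, writing $2^{-\aminH(X|Y=y,Z)} = \tfrac{1}{\Pr[Y=y]}\sum_z \max_x \Pr[X=x,Y=y,Z=z]$, bounding $\sum_{y}\sum_z \max_x \Pr[X=x,Y=y,Z=z]\leq 2^\ell\cdot 2^{-\aminH(X|Z)}$ via $|\supp(Y)|\leq 2^\ell$, and then running a weighted Markov argument over $y\sim Y$; but routing through \cref{lem:chainrule} is shorter and reuses work already in place.
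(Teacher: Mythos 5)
Your proof is correct and follows essentially the same route as the standard argument the paper relies on (the paper states \cref{lem:tailboundchainrule} without proof, citing~\cite{MW97}): the nested-expectation identity $\E_{y\sim Y}\left[2^{-\aminH(X|Y=y,Z)}\right]=2^{-\aminH(X|Y,Z)}$, the chain rule of \cref{lem:chainrule}, and Markov's inequality. The steps, including the careful unfolding of the averaging over $(Y,Z)$ and the reversal of the inequality under $-\log$, all check out, so there is no gap.
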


\subsection{Extractors and Condensers}

\begin{definition}[$(n,k)$-source]
    We say that a random variable $X$ is an \emph{$(n,k)$-source} if $X\sim\B^n$ and $\minH(X)\geq k$.
\end{definition}

\begin{definition}[block source]
    A random variable $X$ is an \emph{$((n_1,n_2,\dots,n_t),(k_1,k_2,\dots,k_t))$-block source} if we can write $X=(X_1,X_2,\dots, X_t)$, each $X_i\in \B^{n_i}$, where $\aminH(X_i | X_1,\ldots,X_{i-1})\geq k_i$ for all $i\in[s]$.
    In the special case where $k_i=\alpha n_i$ for all $i\in[t]$, we say that $X$ is an \emph{$((n_1,n_2,\dots,n_t),\alpha)$-block source}.

    We say that $X$ is an \emph{exact} block source if $\minH(X_i|X_1=x_1,\ldots,X_{i-1}=x_{i-1}) \ge k_i$ for \emph{any} prefix $x_1,\dots,x_{i-1}$. \cref{lem:tailboundchainrule} tells
    us that any $((n_1,n_2,\ldots,n_t),\alpha)$-block-source is $\eps$-close to an exact 
    $((n_1,n_2,\ldots,n_t),(1-\zeta)\alpha)$-block-source, where $\eps = \sum_{i=1}^{t}2^{-\alpha\zeta n_i}$.
\end{definition}

\begin{definition}[seeded extractor]
A function $\Ext \colon \B^n \times \B^d \rightarrow \B^m$ is a 
\emph{$(k,\eps)$ seeded extractor} if the following holds. For every $(n,k)$-source $X$,
\begin{equation*}
    \Ext(X,Y)\approx_\eps U_{m},
\end{equation*}
where $Y$ is uniformly
distributed over $\B^d$ and is independent of $X$ and $U_{m}$ is uniformly distributed over $\B^{m}$.
We say that $\Ext$ is \emph{strong} if $(\Ext(X,Y), Y)\approx_\eps U_{m+d}$.

Furthermore, $\Ext$ is said to be an \emph{average-case $(k,\eps)$ (strong seeded) extractor} if for all correlated random variables $X$ and $W$ such that $X$ is supported on $\B^n$ and $\aminH(X|W)\geq k$ we have
\begin{equation*}
    (\Ext(X,Y), Y, W)\approx_\eps (U_{m+d}, W),
\end{equation*}
where $Y$ is uniformly
distributed over $\B^d$ and is independent of $X$, and $U_{m+d}$ is uniformly distributed over $\B^{m+d}$ and independent of $W$.
\end{definition}

\begin{remark}\label{rem:strong}
    \em
    By \cref{lem:tailboundchainrule}, every strong $(k,\eps)$-seeded extractor $\Ext \colon \B^n\times \B^d\to\B^m$ is also an average-case strong $(k'=k+\log(1/\eps),\eps'=2\eps)$-seeded extractor.
\end{remark}

\begin{definition}[condenser]
A function $\Cond \colon \B^n \times \B^d \rightarrow \B^m$ is a 
\emph{$(k,k',\eps)$ (seeded) condenser} if the following holds. For every $(n,k)$-source $X$, it holds that $Z = \Cond(X,Y)$ is $\eps$-close to some $Z'$ with $\minH(Z') \ge k'$, where $Y$ is uniformly
distributed over $\B^d$ and is independent of $X$. 

We say that $\Cond$ is \emph{strong} if
$(Y, \Cond(X,Y))$ is $\eps$-close to some distribution $(Y, Z)$ with min-entropy
$k'$ (and note that here, necessarily, $d$ bits of entropy come from the seed).
Finally, we say that $\Cond$ is
\emph{lossless} if $k' = k+d$.
\end{definition}

We also define seeded extractors tailored to block sources.
\begin{definition}[block source extractor]
    A function $\BExt\colon \B^{n_1}\times\cdots\times \B^{n_t}\times\B^d\to\B^m$ is a \emph{$(k_1,\dots,k_t,\eps)$ strong block-source extractor} if for any $((n_1,\dots,n_t),(k_1,\dots,k_t))$-block-source $X$,
    \begin{equation*}
        (\BExt(X,Y), Y)\approx_\eps U_{m+d},
    \end{equation*}
    where $Y$ is uniformly
    distributed over $\B^d$ and is independent of $X$ and $U_{m+d}$ is uniformly distributed over $\B^{m+d}$.
\end{definition}

We will also require the following extractors based on the leftover hash lemma and fast hash functions.
We state a result from~\cite{TSSR11} which requires seed length $d\approx 2m$, where $m$ is the output length.
\begin{lemma}[\protect{fast hash-based extractors~\cite[Theorem 10]{TSSR11}, adapted. See also~\cite[Table I]{HT16}}]\label{lem:fasthash}
    For any positive integers $n$, $k$, and $m$ and any $\eps>0$ such that $k\leq n$ and $m\leq k-4\log(16/\eps)$ there exists a $(k,\eps)$-strong seeded extractor $\Ext \colon \bits^n\times\bits^d\to\bits^m$ with seed length $d\leq 2(m+\log n+2\log(1/\eps)+4)$.
    Moreover, $\Ext$ can be computed in time $O(n\log n)$.

    Note that by appending the seed to the output of the extractor, we can get the following: There exists a constant $c$ such that for any constant $\theta \le \frac{1}{3}$, $d \ge c\log(n/\eps)$ and $k \ge \theta d+c\log(1/\eps)$, there exists a $(k,\eps)$-seeded extractor $\Ext \colon \B^{n} \times \B^{d} \rightarrow \B^{(1+\theta)d}$.
\end{lemma}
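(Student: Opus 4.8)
\emph{Approach.} Both parts are short. The first (cited) statement is the leftover hash lemma instantiated with a fast, almost-universal hash family, as in \cite[Theorem~10]{TSSR11}; the ``Note'' is then a one-line corollary obtained by appending the seed. For the hash family I would take the standard two-stage construction $\mathcal H=\{h_{a,b}\colon\B^n\to\B^m\}$ indexed by $y=(a,b)\in\F_{2^{m'}}\times\F_{2^{m'}}$, where $m':=\lceil m+\log n+2\log(1/\eps)\rceil+2$ is a mildly inflated output length. On input $x\in\B^n$, $h_{a,b}$ reads $x$ as the coefficient vector of a polynomial $p_x\in\F_{2^{m'}}[t]$ of degree $<\lceil n/m'\rceil$, evaluates $v:=p_x(a)$ by Horner's rule, and outputs the $m$ most significant bits of $b\cdot v$ (elements of $\F_{2^{m'}}$ being identified with strings in $\B^{m'}$ via a fixed $\F_2$-basis). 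Horner's rule costs $\lceil n/m'\rceil$ multiplications in $\F_{2^{m'}}$, each of bit cost $\Otilde(m')$ by the field-arithmetic bounds recalled before \cref{lemma:fast}, for a total of $\Otilde(n)$ (and $O(n\log n)$ using Sch\"onhage's algorithm); the final multiplication costs $\Otilde(m')$.

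\emph{Collision bound and conclusion of the first part.} For $x\ne x'$ the polynomial $p_x-p_{x'}$ is nonzero of degree $<\lceil n/m'\rceil\le n$, so $\Pr_a[p_x(a)=p_{x'}(a)]\le n/2^{m'}$; and conditioned on $w:=p_x(a)-p_{x'}(a)\ne 0$, the map $b\mapsto(\text{$m$ most significant bits of }bw)$ is a surjective $\F_2$-linear map $\F_{2^{m'}}\to\B^m$, so $h_{a,b}(x)=h_{a,b}(x')$ then occurs with probability exactly $2^{-m}$. Hence $\mathcal H$ is $2^{-m}(1+\gamma)$-almost-universal with $\gamma:=n\cdot 2^{m-m'}\le\tfrac14\eps^2$ by the choice of $m'$. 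The generalized leftover hash lemma now gives, for every $(n,k)$-source $X$ and uniform $Y$ of length $d:=2m'$, that $(Y,h_Y(X))$ is $\tfrac12\sqrt{2^{m-k}+\gamma}$-close to $(Y,U_m)$; since $m\le k-4\log(16/\eps)$ forces $2^{m-k}\le\tfrac14\eps^2$, this is at most $\tfrac12\sqrt{\tfrac12\eps^2}<\eps$. Therefore $\Ext(x,y):=h_y(x)$ is a strong $(k,\eps)$-seeded extractor with seed length $d=2m'\le 2(m+\log n+2\log(1/\eps)+4)$, computable in time $O(n\log n)$, which is the first part.

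\emph{The ``Note''.} Apply the first part with output length $m_0:=\theta d$ (for the same $n$, $k$, $\eps$) to get a strong $(k,\eps)$-extractor $E$ with seed length $d_0\le 2(\theta d+\log n+2\log(1/\eps)+4)$; pad the seed of $E$ with unused bits up to length $d$ (strongness is preserved), and set $\Ext(X,Y):=E(X,Y)\circ Y$. Then $\Ext$ has output length $\theta d+d=(1+\theta)d$, and $\Ext(X,Y)=E(X,Y)\circ Y\approx_\eps U_{\theta d}\circ U_d=U_{(1+\theta)d}$ because $E$ is strong. It remains to check (i) $d_0\le d$, i.e.\ $(1-2\theta)d\ge 2\log n+4\log(1/\eps)+8$, which since $\theta\le\tfrac13$ holds once $d\ge c\log(n/\eps)$ for a universal constant $c$; and (ii) $m_0=\theta d\le k-4\log(16/\eps)$, i.e.\ $k\ge\theta d+c\log(1/\eps)$ after enlarging $c$ (still universal). (Appending the seed makes $\Ext$ ``strong'' only in the weak sense that the seed is a substring of the output, which is all the downstream block-source arguments require; a genuinely seed-hiding extractor with output length $>d/2$ is impossible to get from hashing alone.)

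\emph{Main obstacle.} The only non-routine point is the first part, and within it the competing demands on the hash family: it must be almost-universal with collision excess $\gamma=O(\eps^2)$ — otherwise the leftover hash lemma only yields error $\approx\sqrt{n/m}$, which is useless — yet be specified by merely $O(m+\log n+\log(1/\eps))$ seed bits, far fewer than the $\Omega(n)$ bits a genuinely $2$-universal family on $\B^n$ requires, and be computable in near-linear time. The two-stage design (polynomial evaluation to the mildly inflated length $m'$, followed by one field multiplication and a truncation) reconciles all three, and the bookkeeping that pins down $m'\approx m+\log n+2\log(1/\eps)$ is precisely what produces the claimed seed length. Everything else — the leftover hash lemma estimate, the seed-padding, and the $\theta\le\tfrac13$ arithmetic in the ``Note'' — is routine.
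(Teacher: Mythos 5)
Your proposal is correct and follows essentially the same route as the paper, which treats the first part as a black-box citation to \cite{TSSR11} (whose underlying construction is precisely the two-stage polynomial-evaluation-then-truncated-multiplication almost-universal hash you rebuild and analyze via the generalized leftover hash lemma) and obtains the ``Note'' by the same append-the-seed observation, with your seed-padding handling the fact that the natural seed length $d_0$ may be shorter than $d$. Two minor remarks: your Horner-plus-Sch\"onhage count is $O(n\log n\loglog n)$ in the worst case rather than exactly $O(n\log n)$ (immaterial, since only $\widetilde{O}(n)$ is needed downstream), and your caveat that the appended-seed extractor is ``strong'' only in the sense that the seed is part of the output is indeed the correct reading of how the lemma is later used in the block-source extraction steps.
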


\subsection{Averaging Samplers}

In this section we define averaging samplers and state some useful related results and constructions.

\begin{definition}[averaging sampler]
    We say that $\Samp\colon\B^r\to [n]^m$ is a \emph{$(\gamma,\theta)$-averaging sampler} if
    \begin{equation*}
        \Pr_{(i_1,\dots,i_m)\sim \Samp(U_r)}\left[\left|\frac{1}{t}\sum_{j=1}^m f(i_j) - \E[f] \right| \ge \theta \right] < \gamma
    \end{equation*}
    for every function $f\colon[n]\to[0,1]$, where $\E[f]=\frac{1}{n}\sum_{i\in[n]}f(i)$.
    We say that $\Samp$ has \emph{distinct samples} if $\Samp(x)$ outputs $m$ distinct elements of $[n]$ for every input $x$. The parameter $\theta$ is often referred to as the \emph{accuracy} of the sampler, and $\gamma$ is its \emph{confidence} parameter. Moreover, we sometimes refer to $\Samp(U_r) \sim [n]^m$ as a $(\gamma,\theta)$ \emph{sampling distribution}.
\end{definition}

It is worth noting the equivalence between averaging samplers and extractors \cite{Zuc97}, where we think of the samples obtained by enumerating over the seeds of the extractor. Specifically, a $(k,\eps)$-seeded extractor over inputs of length $r$ is a $(2^{k-r+1},\eps)$-averaging sampler, and a $(\gamma,\theta)$-averaging sampler with input length $r$ is a $(k,2\theta)$-seeded extractor for $k = n-\log(1/\gamma)+\log(1/\theta)$.

The following lemma gives guarantees on sub-sampling coordinates from a weak source using an averaging sampler.
\begin{lemma}[\protect{\cite[Lemma 6.2]{Vad04}}]\label{lem:sample-entropy}
    Let $\delta,\gamma,\tau\in (0,1)$ be such that $\delta\geq 3\tau$ and let $\Samp \colon \bits^r\to [n]^m$ be a $(\gamma,\theta = \tau/\log(1/\tau))$-averaging sampler with distinct samples.
    Then, for any $(n,k=\delta n)$-source $X$ and $Y$ uniformly distributed over $\B^r$ we have that
    \begin{equation*}
        \left( Y, X_{\Samp(Y)} \right) \approx_{\gamma + 2^{-\Omega(\tau n)}} (Y, W),
    \end{equation*}
    where $(W|Y=y)$ is an $(m,k'=(\delta-3\tau)m)$-source for every $y$.
\end{lemma}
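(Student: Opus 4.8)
The lemma is the classical statement that sub-sampling coordinates of a weak source preserves its min-entropy \emph{rate} up to a small additive loss (due to Nisan--Zuckerman; the version above is as recorded in \cite{Vad04}), and I would prove it along those lines. First I would reduce to flat sources: assuming $\delta n\in\N$ (otherwise pass to $\lfloor\delta n\rfloor$, losing at most one bit in $k'$), every $(n,\delta n)$-source lies in the convex hull of the sources $U_S$ uniform over sets $S\subseteq\bits^n$ with $|S|=2^{\delta n}$; since statistical distance is convex, a convex combination of $(m,k')$-sources is an $(m,k')$-source, and $Y\circ X_{\Samp(Y)}$ is linear in $X$, it suffices to prove the claim for $X=U_S$ with $S$ fixed. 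Write $k'=(\delta-3\tau)m$, $\theta=\tau/\log(1/\tau)$, and let $\mathsf{d}(x,x')=|\{i:x_i\neq x'_i\}|$ denote Hamming distance.

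Fix a seed $y$, let $T=\Samp(y)$ be the $m$ distinct sampled coordinates, and for $z\in\bits^m$ set $S_{T,z}=\{x\in S:x_T=z\}$, so $\Pr[(U_S)_T=z]=|S_{T,z}|/2^{\delta n}$. A standard ``cap and redistribute'' argument (valid because $m\ge k'$, so the cap $2^{-k'}$ is above the average mass $2^{-m}$) shows that $(U_S)_T$ is $\beta_y$-close to an $(m,k')$-source, where $\beta_y:=\Pr_{x\sim U_S}[\,|S_{T,x_T}|>2^{\delta n-k'}\,]$ is the mass falling on ``heavy'' output patterns. I would then take $W\mid Y=y$ to be the nearest $(m,k')$-source to $(U_S)_{\Samp(y)}$ when $\beta_y\le 2^{-\Omega(\tau n)}$ and the uniform distribution otherwise (which is an $(m,k')$-source since $m\ge k'$), so that it only remains to show $\beta_y>2^{-\Omega(\tau n)}$ for at most a $\gamma$-fraction of seeds. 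To bound $\beta_y$, fix $x\in S$ and estimate $|S_{T,x_T}|=\#\{x'\in S:\{i:x_i\neq x'_i\}\cap T=\emptyset\}$ by splitting on $\mathsf{d}(x,x')$. The number of \emph{near} $x'$, with $\mathsf{d}(x,x')<\theta n$, is at most $\sum_{i<\theta n}\binom{n}{i}\le 2^{\mathsf{h}(\theta)n}$ for the binary entropy function $\mathsf{h}$, and the calibration $\theta=\tau/\log(1/\tau)$ is precisely what makes $\mathsf{h}(\theta)\le(1+o(1))\tau$, so this near count is at most $2^{O(\tau n)}$, negligible against the heavy threshold $2^{\delta n-k'}=2^{\delta(n-m)+3\tau m}\ge 2^{3\tau n}$ (using $\delta\ge 3\tau$). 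For a \emph{far} $x'$, with $\mathsf{d}(x,x')\ge\theta n$, the event $x'_T=x_T$ is exactly the event that $T$ misses a coordinate set of density $\ge\theta$, which the averaging-sampler guarantee bounds by $\gamma$ over $y$; summing over the at most $2^{\delta n}$ far $x'$ gives $\E_y[\#\{\text{far }x'\text{ avoided by }\Samp(y)\}]<2^{\delta n}\gamma$. Combining the deterministic, negligible near contribution with this far contribution and applying Markov's inequality at the right level then yields the required bound on $\beta_y$.

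The step I expect to be the main obstacle is this last quantitative combination: the far contribution is controlled only in expectation, by $2^{\delta n}\gamma$, which must be weighed against the heavy threshold $2^{\delta n-k'}$, so one must arrange the averaging/union bound so that the sampler failure probability $\gamma$ is not amplified by a factor of order $2^{k'}$ --- matching the claimed error $\gamma+2^{-\Omega(\tau n)}$ is the combinatorial core of the Nisan--Zuckerman argument, and it is here that both the calibration $\theta=\tau/\log(1/\tau)$ and the hypothesis $\delta\ge 3\tau$ are genuinely needed. (A simpler but lossier route bounds $\E_y$ of the collision probability of $(U_S)_{\Samp(y)}$ by the same near/far split and then passes to smooth min-entropy, but that loses a constant factor in the min-entropy rate and so does not by itself deliver the clean $k'=(\delta-3\tau)m$.)
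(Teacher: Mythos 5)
Since the paper only quotes this lemma from \cite[Lemma 6.2]{Vad04} and gives no proof, I am comparing your proposal against Vadhan's argument. Your flat-source reduction, the cap-and-redistribute step, the per-seed definition of $W$, and the near-count estimate ($2^{\mathsf{h}(\theta)n}\le 2^{O(\tau n)}$ against the heaviness threshold $2^{\delta n-k'}\ge 2^{3\tau n}$) are all fine. The genuine gap is exactly the step you yourself flag as the ``main obstacle,'' and it is not a matter of arranging Markov cleverly: in your decomposition the far contribution is controlled only through $\E_y\bigl[\#\{\text{far } x' : x'_T=x_T\}\bigr]\le 2^{\delta n}\gamma$, and any averaging argument against the threshold $2^{\delta n-k'}$ produces a failure probability over $y$ of order $\gamma\cdot 2^{k'}$ (passing to pairs $(x,x')$, or to collision probability, only reshuffles this: you either keep the $2^{k'}$ amplification or lose a constant factor in the rate, as you note). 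Since $\gamma$ is an arbitrary confidence parameter --- in this paper it is taken to be $\poly(\eps)$, vastly larger than $2^{-k'}$ --- the claimed error $\gamma+2^{-\Omega(\tau n)}$ is unreachable along this route. Structurally, your argument invokes the sampler guarantee once per pair $(x,x')$, i.e.\ up to $2^{2\delta n}$ times, and no reordering of the union bound or Markov step can make a per-event failure probability $\gamma$ survive that many invocations.

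The proof in \cite{Vad04} (refining \cite{NZ96}) avoids pair counting entirely: it fixes a single source string $x$ and applies the sampler guarantee once, to the $[0,1]$-valued function $f_x(i)=\min\bigl(\log\frac{1}{\Pr[X_i=x_i\mid X_{<i}=x_{<i}]},\,\log\frac{1}{\tau}\bigr)/\log\frac{1}{\tau}$. Since $\sum_i \log\frac{1}{\Pr[X_i=x_i\mid X_{<i}=x_{<i}]}=\log\frac{1}{\Pr[X=x]}\ge \delta n$, and the truncation is affordable for all but a $2^{-\Omega(\tau n)}$-mass of $x$, the sampler guarantee (paid once per $x$, failure probability $\gamma$) shows that for most seeds the sampled coordinates carry large total truncated surprisal; a further averaging/Markov step over $x$ (this is where the conditioning on the sampled prefix, rather than the full prefix, is handled, and where the $2^{-\Omega(\tau n)}$ term arises) converts this into $\Pr[X_T=x_T]\le 2^{-(\delta-3\tau)m}$ for all but a small mass of $x$. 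The calibration $\theta=\tau/\log(1/\tau)$ enters precisely because the sampler's additive accuracy $\theta$, measured in surprisal units, costs only $\theta\log(1/\tau)=\tau$ per sampled coordinate, and $\delta\ge 3\tau$ absorbs the three $\tau$-losses. The essential idea missing from your plan is this ``one bounded function per $x$'' use of the sampler, which is what keeps the final error at $\gamma+2^{-\Omega(\tau n)}$; any repair of the Hamming-distance approach would have to reintroduce it.
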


\paragraph{The ``expander random walk'' sampler.}
We will need the following well-known averaging sampler based on random walks on expanders (see, e.g., \cite{Gil98,Zuc07}).
Let $G$ be a $D$-regular graph with vertex set $[n]$.
We assume that the neighborhood of each vertex is ordered in some predetermined way.
Then, the associated averaging sampler parses its input $x$ as $(i_1,b_1,b_2,\dots,b_{t-1})$, where $i_1\in[n]$ and $b_1,\dots,b_{t-1}\in [D]$, and outputs $\Samp(x) = (i_1,\dots,i_t)$,
where $i_j$ is the $b_{j-1}$-th neighbor of $i_{j-1}$ when $j>1$.

The performance of $\Samp$ as an averaging sampler is determined by the spectral expansion of $G$.\footnote{We say that an undirected graph $G$ over $n$ vertices has spectral expansion $\lambda$ if $\lambda \le \max_{i \ge 2}|\lambda_i|$, where $\lambda_n \le \ldots \le \lambda_2 \le \lambda_1 = 1$ are the eigenvalues of $G$'s random walk matrix.}
In fact, if $G$ has spectral expansion $\theta/2$ then a direct application of the expander Chernoff bound \cite{Gil98} gives that $\Samp$ is a $(\gamma,\theta)$-averaging sampler with $t=O(\log(1/\gamma)/\theta^2)$ and $r=\log n + O(t\log(1/\theta))$~\cite[Section 8.2]{Vad04}.

To ensure distinct samples while maintaining accuracy, we follow \cite{Vad04} and modify the standard random walk sampler as follows. As a ``base'' sampler, we use the above walk, but only take the first $(1-\theta/2)t$ distinct vertices (if there are such). Letting $r_0 = \log n + O(t\log(1/\theta))$ be the corresponding randomness complexity, we then take a random walk of length $\ell = O(\log(1/\gamma))$ over an expander $G_0$ with $2^{r_0}$ vertices and constant spectral expansion. Each vertex of $G_0$ corresponds to a \emph{random walk on $G$}, and out of those $\ell$ chosen walks, we take the first one that indeed has $(1-\theta/2)t$ distinct vertices (and output some arbitrary value if the process failed). In \cite[Lemma 8.2]{Vad04}, it is shown that setting the spectral expansion of $G$ to be $\Theta(\theta)$ (and suitably choosing the constants inside the  $O()$ notation), one still gets a $(\gamma,\theta)$-averaging sampler.

We instantiate $G$ and $G_0$ with the
regular expander graphs from the following result of Alon~\cite{Alo21}.
\begin{lemma}[\protect{\cite[Theorem 1.2]{Alo21}, adapted}]\label{lem:exp-any-vertex}
    Fix any prime $p$ such that $p\equiv 1\mod 4$.
    Then, there is a constant $C_p$ such that for every integer $n\geq C_p$ there exists a $(D=p+2)$-regular graph $G_n$ on $n$ vertices with spectral expansion $\lambda \leq \frac{(1+\sqrt{2})\sqrt{D-1}+o(1)}{D}$, where the $o(1)$ tends to $0$ as $n\to\infty$.
    Furthermore, the family $(G_n)_n$ is strongly explicit.

    In particular, for any $\theta>0$ 
    there exist constants $C_\theta > 0$ and $D_\theta = O(\theta^{-2})$ and a strongly explicit family of $D_\theta$-regular graphs $(G_n)_{n}$ with spectral expansion $\lambda\leq \theta$ for any $n\geq C_\theta$.
\end{lemma}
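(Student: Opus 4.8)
The plan is to read this off Alon's explicit near-Ramanujan construction. The first half of the statement --- the existence, for each fixed prime $p \equiv 1 \pmod{4}$ and all large $n$, of a strongly explicit $(D=p+2)$-regular $n$-vertex graph $G_n$ with spectral expansion $\lambda \le \frac{(1+\sqrt{2})\sqrt{D-1}+o(1)}{D}$ --- is precisely \cite[Theorem~1.2]{Alo21}, so there is nothing to prove there. The only content to supply is the \emph{``In particular''} clause: that for every $\theta > 0$ one can pick a degree $D_\theta = O(\theta^{-2})$ for which the spectral expansion of the corresponding family drops below $\theta$ once $n$ is large enough.

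For that clause I would proceed as follows. Bounding $\sqrt{D-1}/D \le 1/\sqrt{D-1}$, the first half gives
\[
\lambda \;\le\; \frac{1+\sqrt{2}}{\sqrt{D-1}} + o(1), \qquad D = p+2 .
\]
Hence it suffices to fix a prime $p \equiv 1 \pmod{4}$ with $\frac{1+\sqrt{2}}{\sqrt{p+1}} \le \theta/2$ --- equivalently $p \ge 4(1+\sqrt{2})^2\theta^{-2} - 1$, so $p = O(\theta^{-2})$ --- set $D_\theta := p + 2 = O(\theta^{-2})$, and then take the threshold $C_\theta$ large enough that the residual $o(1)$ term, which for this now-fixed $p$ tends to $0$ as $n \to \infty$ by \cite{Alo21}, is at most $\theta/2$ for all $n \ge C_\theta$; this gives $\lambda \le \theta$. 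The existence of a prime $p \equiv 1 \pmod{4}$ of size $O(\theta^{-2})$ is standard: every interval $[x, 2x]$ with $x$ sufficiently large contains a prime congruent to $1$ modulo $4$ (the prime number theorem for arithmetic progressions), and the finitely many remaining small values of $\theta$ are absorbed into the implicit constant. Strong explicitness of $(G_n)_{n \ge C_\theta}$ is inherited verbatim, since $p$ (hence $D_\theta$) is a constant depending only on $\theta$.

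There is essentially no obstacle here. The one point that needs a little care is the order of quantifiers around the $o(1)$ in Alon's bound: it asserts ``for each fixed $p$, the error tends to $0$ as $n \to \infty$'', so $C_\theta$ must be chosen \emph{after} $\theta$ (and hence after $p$), which is exactly what the statement allows. One also needs $D_\theta = p + 2 \ge 3$ for the construction to apply, which holds automatically since $p \ge 5$.
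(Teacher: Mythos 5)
Your proposal is correct and matches the intended argument: the paper treats the first half as a direct citation of Alon's Theorem 1.2 and the ``In particular'' clause as the immediate consequence of fixing a prime $p\equiv 1\pmod 4$ of size $\Theta(\theta^{-2})$ (guaranteed by Dirichlet/PNT in arithmetic progressions) and taking $n$ large enough to absorb the $o(1)$ term, exactly as you do. No gaps; your remark about the quantifier order for the $o(1)$ term is the right point of care.
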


Given a graph $G$, its $t$-th power $G^t$ is a graph over the same number of vertices, and each edge corresponds to a $t$-step walk over $G$. It is well-known that
taking the $t$-th power of a $\lambda$-spectral expander improves its expansion to $\lambda^t$. This readily gives us the following corollary.
\begin{corollary}\label{cor:expander}
For every large enough $n$, and any $\lambda = \lambda(n) > 0$, there exists a $D$-regular graph $G = (V = [n],E)$ with spectral expansion $\lambda$, where $D= \poly(1/\lambda)$, and given $x \in [n]$ and $i \in [D]$, the $i$-th neighbor of $x$ can be computed in time $\log(1/\lambda) \cdot \polylog(n) = \polylog(n)$. 
\end{corollary}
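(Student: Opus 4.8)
The plan is to amplify the constant-degree expanders of \cref{lem:exp-any-vertex} to arbitrary spectral expansion by taking a suitable power, and then implement the neighbor oracle of the power graph by composing the base-graph oracle with itself. First I would fix an absolute constant, say $\theta_0 = 1/2$, and invoke the ``in particular'' part of \cref{lem:exp-any-vertex} to obtain an absolute constant $D_0 = O(\theta_0^{-2}) = O(1)$, an absolute constant $C_0$, and a strongly explicit family $(H_n)_{n \ge C_0}$ of $D_0$-regular graphs on vertex set $[n]$ with spectral expansion at most $\theta_0$. We may assume $\lambda \le \theta_0$: otherwise $H_n$ itself already satisfies the corollary, since its degree $D_0$ is constant (hence trivially $\poly(1/\lambda)$) and its spectral expansion is at most $\theta_0 \le \lambda$.

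Next, given $\lambda \le \theta_0$, I would set $t = \lceil \log(1/\lambda) \rceil$ and take $G := H_n^{t}$, the $t$-th power of $H_n$, whose edges correspond to length-$t$ walks in $H_n$. Then $G$ is $D$-regular with $D = D_0^{t}$, and since raising a graph to the $t$-th power raises its spectral expansion to the $t$-th power, $G$ has spectral expansion at most $\theta_0^{t} = 2^{-t} \le \lambda$. Moreover $D = D_0^{t} \le D_0 \cdot (1/\lambda)^{\log D_0} = \poly(1/\lambda)$ because $D_0$ is a universal constant independent of $\lambda$, which gives the structural claim.

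For the neighbor oracle, given $x \in [n]$ and $i \in [D] = [D_0^{t}]$, I would write $i$ in base $D_0$ as $(i_1, \ldots, i_t) \in [D_0]^{t}$ and compute the walk $x = v_0, v_1, \ldots, v_t$, where $v_j$ is the $i_j$-th neighbor of $v_{j-1}$ in $H_n$; the output $v_t$ is by definition the $i$-th neighbor of $x$ in $H_n^{t}$. Each of the $t$ base-graph queries costs $\polylog(n)$ time by strong explicitness of $(H_n)_n$ (using $D_0 = O(1)$), and extracting the base-$D_0$ digits of $i$ contributes only a lower-order term. Hence the oracle runs in time $t \cdot \polylog(n) = \log(1/\lambda) \cdot \polylog(n)$, as desired.

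This argument is essentially bookkeeping around the standard ``power up an explicit constant-degree expander'' construction, so I do not expect a genuine obstacle. The points that need the most care are: (i) verifying that $D_0^{\log(1/\lambda)}$ is genuinely $\poly(1/\lambda)$, which crucially relies on $D_0$ not depending on $\lambda$; (ii) the trivial reduction handling the regime $\lambda > \theta_0$; and (iii) checking that a neighbor query in the power graph decomposes into exactly $t$ consecutive base-graph queries, so that the explicitness and running-time bounds compose cleanly.
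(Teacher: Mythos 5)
Your proposal is correct and matches the paper's own (one-line) argument: the paper also obtains \cref{cor:expander} by taking the $t$-th power of the strongly explicit constant-degree expanders of \cref{lem:exp-any-vertex}, which raises the spectral expansion to $\lambda^t$ while the degree becomes $\poly(1/\lambda)$, and a neighbor query in the power graph is just $t=O(\log(1/\lambda))$ successive base-graph queries at $\polylog(n)$ cost each. Your added bookkeeping (the trivial case $\lambda>\theta_0$, the bound $D_0^{t}\leq D_0\cdot(1/\lambda)^{\log D_0}$, and the base-$D_0$ digit decomposition of the neighbor index) is exactly what the paper leaves implicit.
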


Combining the discussion above with \cref{lem:exp-any-vertex} (or \cref{cor:expander}) immediately yields the following, observing that the runtime is dominated by $\ell \cdot t \log(1/\lambda)  \polylog(n)$.

\begin{lemma}[\protect{\cite[Lemma 8.2]{Vad04}, appropriately instantiated}]\label{lem:baseSampler}
    For every large enough integer $n$ and every $\theta,\gamma\in(0,1)$, there exists a $(\gamma,\theta)$-averaging sampler $\Samp \colon \bits^r\to[n]^t$ with distinct samples with $t = O(\log(1/\gamma)/\theta^2)$ and $r = \log n + O(t\log(1/\theta))$.
    Furthermore, $\Samp$ is computable in time $t\log(1/\gamma)\cdot \polylog n$.
\end{lemma}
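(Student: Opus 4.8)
The plan is to instantiate the ``expander random walk'' sampler described immediately above the statement, using the explicit expander family supplied by \cref{cor:expander}. First I would apply \cref{cor:expander} with spectral expansion parameter $\lambda = \theta/2$ to obtain, for every large enough $n$, a $D$-regular graph $G$ on vertex set $[n]$ with $D = \poly(1/\theta)$ such that, given a vertex and an index in $[D]$, the corresponding neighbor is computable in time $\log(1/\theta)\cdot\polylog(n)$. The sampler $\Samp\colon\bits^r\to[n]^t$ then parses its input as $(i_1,b_1,\dots,b_{t-1})$ with $i_1\in[n]$ and $b_1,\dots,b_{t-1}\in[D]$, takes the length-$t$ walk on $G$ starting at $i_1$ and moving at step $j$ to the $b_{j-1}$-th neighbor of the current vertex, and outputs the sequence of vertices visited, skipping any vertex that has already appeared so as to guarantee distinct samples.

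For the sampling guarantee I would invoke the expander Chernoff bound of Gillman~\cite{Gil98} exactly as recorded in \cite[Section 8.2]{Vad04}: since $G$ has spectral expansion $\theta/2 < 1/2$, a length-$t$ random walk on $G$ yields a $(\gamma,\theta)$-averaging sampler for $t = O(\log(1/\gamma)/\theta^2)$, and the ``distinct samples'' refinement (skipping repeats does not harm the averaging property) is precisely \cite[Lemma 8.2]{Vad04}. It then remains to read off the parameters: encoding the start vertex costs $\lceil\log n\rceil$ bits and each of the $t-1$ walk steps costs $\lceil\log D\rceil = O(\log(1/\theta))$ bits, so the seed length is $r = \log n + O(t\log(1/\theta))$, as required.

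For the running time, $\Samp$ carries out $t$ steps, each consisting of one neighbor computation in $G$ (time $\log(1/\theta)\cdot\polylog(n)$ by \cref{cor:expander}) together with one membership query against the set of at most $t\leq n$ vertices output so far, which is supported in $\polylog(n)$ time per query if that set is maintained in a balanced search tree (or hash table); hence the total running time is $O(t\cdot\polylog(n))$. I expect the only even mildly delicate point of the whole argument to be the bookkeeping around the distinct-samples requirement --- checking that the skipping step neither breaks the averaging guarantee nor inflates the running time --- but the first half of this is already packaged in \cite[Lemma 8.2]{Vad04} and the second half is handled by the data structure just mentioned, so everything reduces to a direct assembly of the expander Chernoff bound with the explicit expanders of \cref{lem:exp-any-vertex,cor:expander}.
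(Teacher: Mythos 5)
Your proposal matches the paper's argument exactly: instantiate the expander random walk sampler described immediately before the statement with the explicit expanders of \cref{lem:exp-any-vertex} / \cref{cor:expander} at spectral expansion $\theta/2$, invoke Gillman's expander Chernoff bound as recorded in \cite[Section 8.2]{Vad04} together with the distinct-samples refinement of \cite[Lemma 8.2]{Vad04}, and read off the seed length and running time. The paper treats this combination as immediate; your only addition is the explicit remark about maintaining a search tree or hash table to support the repeat-skipping in $\polylog(n)$ time per step, which is a reasonable implementation detail and changes nothing substantive.
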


We can extend \cref{lem:baseSampler} to output more distinct samples while not increasing $r$ via the following simple lemma.
\begin{lemma}[\protect{\cite[Lemma 8.3]{Vad04}}]\label{lem:increaseSamplesDomain}
    Suppose that $\Samp_0 \colon \bits^r \to [n]^{t}$ is a $(\gamma,\theta)$-averaging sampler with distinct samples.
    Then, for every integer $m\geq 1$ there exists a $(\gamma,\theta)$-averaging sampler $\Samp \colon \bits^r \to [m\cdot n]^{m\cdot t}$ with distinct samples.
\end{lemma}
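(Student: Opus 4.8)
The plan is to use the standard ``block duplication'' trick. Identify $[m\cdot n]$ with $[m]\times[n]$ by writing each $\ell\in[m\cdot n]$ uniquely as $\ell=(b-1)n+i$ with $b\in[m]$ and $i\in[n]$. On input $x\in\bits^r$, define $\Samp(x)$ to first compute $(i_1,\dots,i_t)=\Samp_0(x)$ and then output the $m\cdot t$ indices $\{(b-1)n+i_j\colon b\in[m],\,j\in[t]\}$, listed in some fixed order. Distinctness is immediate: indices coming from different blocks $b$ lie in disjoint length-$n$ intervals of $[m\cdot n]$, while within a fixed block the values $i_1,\dots,i_t$ are distinct because $\Samp_0$ has distinct samples.

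For the sampling guarantee, fix an arbitrary test function $f\colon[m\cdot n]\to[0,1]$ and define $g\colon[n]\to[0,1]$ by $g(i)=\frac1m\sum_{b=1}^m f((b-1)n+i)$. Then $g$ takes values in $[0,1]$, being an average of such values, and $\E[g]=\frac1n\sum_{i\in[n]}g(i)=\frac1{mn}\sum_{\ell\in[m\cdot n]}f(\ell)=\E[f]$. The one computation to carry out is the identity
\[
\frac{1}{m\cdot t}\sum_{b=1}^{m}\sum_{j=1}^{t} f\bigl((b-1)n+i_j\bigr)=\frac{1}{t}\sum_{j=1}^{t} g(i_j),
\]
i.e., the empirical average of $f$ over the $m\cdot t$ samples produced by $\Samp(x)$ equals the empirical average of $g$ over the $t$ samples produced by $\Samp_0(x)$. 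Invoking the $(\gamma,\theta)$-averaging property of $\Samp_0$ with the test function $g$, the right-hand side differs from $\E[g]=\E[f]$ by less than $\theta$ except with probability $<\gamma$ over $x\sim U_r$; by the identity, the same holds for $\Samp$, establishing that $\Samp$ is a $(\gamma,\theta)$-averaging sampler. Crucially, $r$ does not grow and $\gamma$ is preserved, because we are not composing two samplers but merely post-processing a single run of $\Samp_0$.

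For the running time, $\Samp$ invokes $\Samp_0$ once (time $T$) and then emits $m$ shifted copies of its $t$ outputs; each shift is an addition of integers bounded by $m\cdot n$, contributing $O(m\cdot t)$ extra operations, and since writing down the $t$ samples of $\Samp_0$ already costs $T\ge t$, the total is $O(mT)$ as claimed.

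There is no real obstacle here; the proof is a short verification. The only points that require a little care are that the averaged function $g$ remains $[0,1]$-valued with $\E[g]=\E[f]$, the index arithmetic behind the displayed identity, and the observation that distinctness is preserved across blocks. I would check these and be done.
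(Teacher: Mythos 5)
Your construction is exactly the one the paper uses (parsing $[m\cdot n]$ as $[m]\times[n]$ and outputting the $m$ block-shifted copies of one run of $\Samp_0$), and your verification via the averaged test function $g$ correctly fills in the details the paper leaves as "follows easily." The distinctness, error/confidence preservation, and $O(mT)$ runtime arguments are all sound.
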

\cref{lem:increaseSamplesDomain} follows easily by parsing $[m\cdot t]=[m]\times [t]$ and considering the sampler $\Samp(x)_{i,j}=(i,\Samp_0(x)_j)$ for $i\in[m]$ and $j\in[t]$.
If we can compute $\Samp_0(x)$ in time $T$, then we can compute $\Samp(x)$ in time $T+O(mt\log(mn))$.
The following is an easy consequence of \cref{lem:baseSampler,lem:increaseSamplesDomain}.
\begin{lemma}[\protect{\cite[Lemma 8.4]{Vad04}, with additional complexity claim}]\label{lem:extendedSampler}
    There exists a constant $C>0$ such that the following holds.
    For every large enough $n$ and $\theta,\gamma\in(0,1)$, there exists a $(\gamma,\theta)$-averaging sampler $\Samp \colon \bits^r\to[n]^t$ with distinct samples for any $t\in[t_0,n]$ with $t_0 \leq C\log(1/\gamma)/\theta^2$  and $r = \log(n/t) + \log(1/\gamma)\cdot\poly(1/\theta)$.
    Furthermore, $\Samp$ is computable in time $t_0 \log(1/\gamma) \cdot \polylog n + O(t\log n)$.

    In particular, if $\theta$ is constant then $t_0=O(\log(1/\gamma))$, $r=\log (n/t)+O(\log(1/\gamma))$, and $\Samp$ is computable in time $\log^{2}(1/\gamma) \cdot\polylog n + O(t\log n)$.
\end{lemma}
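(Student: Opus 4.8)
The plan is to obtain \cref{lem:extendedSampler} by plugging the base sampler of \cref{lem:baseSampler} into the domain-expansion construction of \cref{lem:increaseSamplesDomain} with a suitably chosen expansion factor. Write $t_{\mathrm{b}}=O(\log(1/\gamma)/\theta^2)$ for the number of samples produced by \cref{lem:baseSampler}, and fix the constant $C$ in the statement large enough that $t_0:=C\log(1/\gamma)/\theta^2\ge t_{\mathrm{b}}$. Given a target sample count $t\in[t_0,n]$, I would set $m:=\lfloor t/t_{\mathrm{b}}\rfloor\ge 1$ and run \cref{lem:baseSampler} on the ground set $[N]$ with $N:=\lfloor n/m\rfloor$; since $t\le n$ we have $N\ge t_{\mathrm{b}}$, so for $n$ large enough this ground set is admissible. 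This produces a $(\gamma,\theta)$-averaging sampler $\Samp_0\colon\bits^{r_0}\to[N]^{t_{\mathrm{b}}}$ with distinct samples, where $r_0=\log N+O(t_{\mathrm{b}}\log(1/\theta))$ and $\Samp_0$ is computable in time $O(t_{\mathrm{b}}\cdot\poly(1/\theta,\log n))$. Applying \cref{lem:increaseSamplesDomain} with parameter $m$ then yields a $(\gamma,\theta)$-averaging sampler $\Samp_1\colon\bits^{r_0}\to[mN]^{mt_{\mathrm{b}}}$ with distinct samples, computable in time $O(m\cdot t_{\mathrm{b}}\cdot\poly(1/\theta,\log n))=O(t\cdot\poly(1/\theta,\log n))$, since $mt_{\mathrm{b}}\le t$.

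It then remains to read off the parameters and to reconcile $mN$ and $mt_{\mathrm{b}}$ with the exact targets $n$ and $t$. By construction $n-m< mN\le n$ and $t-t_{\mathrm{b}}< mt_{\mathrm{b}}\le t$, so both the ground set and the sample count already match up to the slack hidden in the $O(\cdot)$'s; to land on $[n]^t$ on the nose one performs the standard minor adjustments (padding the ground set from $mN$ up to $n$ and appending the missing $t-mt_{\mathrm{b}}<t_{\mathrm{b}}$ distinct indices), which perturbs each empirical average by at most $O(m/n+t_{\mathrm{b}}/t)$. Here $m/n=O(\theta)$ always, so the only problematic case is $t_{\mathrm{b}}/t=\omega(\theta)$, equivalently $t=O(t_{\mathrm{b}}/\theta)$; in that narrow regime I would instead realize $\Samp$ directly as a length-$\Theta(t)$ expander random walk over $[n]$ (via \cref{cor:expander} with spectral expansion $\Theta(\theta)$), whose confidence is $\exp(-\Omega(\theta^2 t))\le\exp(-\Omega(C\log(1/\gamma)))\le\gamma$ by the choice of $C$, and whose seed length $\log n+O(t\log(1/\theta))$ already obeys the claimed bound since there $t=\Theta(\log(1/\gamma)/\theta^2)$. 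For the main construction the seed length is $r=r_0=\log N+O(t_{\mathrm{b}}\log(1/\theta))$; using $m\ge t/(2t_{\mathrm{b}})$ gives $\log N\le\log(2nt_{\mathrm{b}}/t)=\log(n/t)+O(\log\log(1/\gamma)+\log(1/\theta))$, while $O(t_{\mathrm{b}}\log(1/\theta))=\log(1/\gamma)\cdot\poly(1/\theta)$, so altogether $r=\log(n/t)+\log(1/\gamma)\cdot\poly(1/\theta)$; the runtime bound $t\cdot\poly(1/\theta,\log n)$ was established above. The ``in particular'' clause is immediate by specializing to constant $\theta$, so that $t_{\mathrm{b}},1/\theta^2=O(\log(1/\gamma))$ and every $\poly(1/\theta)$ factor collapses.

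Conceptually this is a routine composition, and I do not expect a genuine obstacle; the only points needing care are bookkeeping ones. The first is keeping the intermediate ground set $N=\lfloor n/m\rfloor$ above the (explicitness) threshold required by \cref{lem:baseSampler} — this is exactly where the ``large enough $n$'' hypothesis and, at the boundary, the direct-random-walk fallback come in. The second is matching the output type $[n]^t$ exactly rather than merely up to an additive $O(t_{\mathrm{b}})$ in the sample count and $O(m)$ in the ground-set size, which never affects the stated orders and, in all uses in this paper (in particular in \cref{lem:sample-entropy}), is in any case only needed up to constant factors.
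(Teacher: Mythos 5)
Your proposal matches the intended route: the paper offers no proof of this lemma beyond asserting that it is ``an easy consequence'' of \cref{lem:baseSampler,lem:increaseSamplesDomain}, and your composition (base sampler of \cref{lem:baseSampler} on $[\lfloor n/m\rfloor]$ fed through the product construction of \cref{lem:increaseSamplesDomain} with $m=\lfloor t/t_{\mathrm{b}}\rfloor$) is exactly that reduction. Your bookkeeping is sound --- the seed length telescopes to $\log(n/t)+O(t_{\mathrm{b}}\log(1/\theta))=\log(n/t)+\log(1/\gamma)\poly(1/\theta)$, the runtime is $m\cdot T(\Samp_0)=t\cdot\poly(1/\theta,\log n)$, and the $O(m/n+t_{\mathrm{b}}/t)$ perturbation from rounding $mN$ up to $n$ and $mt_{\mathrm{b}}$ up to $t$ is absorbed by a constant-factor loss in $\theta$ --- and your observation that the sample-count rounding can dominate $\theta$ only when $t=O(t_{\mathrm{b}}/\theta)$, in which case a direct length-$t$ expander walk over $[n]$ already meets all three bounds because $t\geq t_0$, is a real (if elementary) point that the paper glosses over; the two cases together give the claim as stated.
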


\subsection{Standard Composition Techniques for Extractors}

We collect some useful classical techniques for composing seeded extractors.

\begin{lemma}[boosting the output length~\cite{WZ99,RRV02}]\label{lem:boost-output}
    Suppose that for $i\in\{1,2\}$ there exist strong $(k_i,\eps_i)$-seeded extractors $\Ext_i \colon \bits^{n}\times\bits^{d_i}\to\bits^{m_i}$ running in time $T_i$, with $k_2\leq k_1-m_1-g$.
    Then, $\Ext \colon \bits^n\times\bits^{d_1+d_2}\to\bits^{m_1+m_2}$ given by $\Ext(X,(Y_1,Y_2))=(\Ext_1(X,Y_1), \Ext_2(X,Y_2))$ is a strong $(k_1,\frac{\eps_1}{1-2^{-g}}+\eps_2)$-seeded extractor running in time $O(T_1+T_2)$.
\end{lemma}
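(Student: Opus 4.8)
The plan is to run the standard sequential (block-style) composition argument: apply $\Ext_1$ to get a first chunk $Z_1 = \Ext_1(X,Y_1)$ that is close to uniform even given $Y_1$, observe that conditioning on a typical value of $(Y_1,Z_1)$ costs the source only about $m_1$ bits of min-entropy, and then apply $\Ext_2$ with the fresh seed $Y_2$ to the conditioned source. Write $Z_2 = \Ext_2(X,Y_2)$. Since permuting coordinates is a bijection and $U_{m_1+m_2+d_1+d_2}$ is permutation-invariant, it suffices to bound $\Delta\big((Z_1,Y_1,Z_2,Y_2),\, U_{m_1+d_1+m_2+d_2}\big)$.

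First I would pin down the entropy loss. Define the "bad" set $\mathrm{BAD}\subseteq\bits^{d_1}\times\bits^{m_1}$ to be the pairs $(y_1,z_1)$ with $\Pr[Z_1=z_1\mid Y_1=y_1]<2^{-(m_1+g)}$. Using that $X$ and $Y_1$ are independent, for $(y_1,z_1)\notin\mathrm{BAD}$ one gets $\Pr[X=x\mid Y_1=y_1,Z_1=z_1]\le 2^{-k_1}/\Pr[Z_1=z_1\mid Y_1=y_1]\le 2^{-(k_1-m_1-g)}\le 2^{-k_2}$, so the conditioned source $X\mid Y_1=y_1,Z_1=z_1$ is an $(n,k_2)$-source. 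The quantitatively important point — and the step I expect to need the most care — is bounding the mass the \emph{actual} distribution $P$ of $(Z_1,Y_1)$ puts on $\mathrm{BAD}$: every pair in $\mathrm{BAD}$ has $P$-probability less than $2^{-g}$ times its uniform probability, so $\Pr_P[\mathrm{BAD}] < 2^{-g}\,\Pr_{U}[\mathrm{BAD}]$; combining with $\Pr_{U}[\mathrm{BAD}]-\Pr_P[\mathrm{BAD}]\le\Delta(P,U_{m_1+d_1})\le\eps_1$ (strongness of $\Ext_1$) yields $\Pr_U[\mathrm{BAD}]<\frac{\eps_1}{1-2^{-g}}$ and hence $\Pr_P[\mathrm{BAD}]<\frac{2^{-g}\eps_1}{1-2^{-g}}$. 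This "charging" of the bad event against $\eps_1$ (rather than a crude additive $2^{-g}$) is exactly what produces the $\frac{\eps_1}{1-2^{-g}}$ in the statement, so I want to get this bookkeeping right.

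Then I would finish with two triangle-inequality hops. Let $\mu_{y_1,z_1}$ be the conditional law of $(Z_2,Y_2)$ given $(Y_1,Z_1)=(y_1,z_1)$. Because $Y_2$ is independent of $(X,Y_1)$, hence of $(X,Y_1,Z_1)$, conditioning on $(y_1,z_1)$ keeps $Y_2$ uniform and independent of $X$; so for $(y_1,z_1)\notin\mathrm{BAD}$ the strong $(k_2,\eps_2)$-extractor property of $\Ext_2$ gives $\Delta(\mu_{y_1,z_1},U_{m_2+d_2})\le\eps_2$. Therefore, with $A:=(Z_1,Y_1,Z_2,Y_2)$ and $B:=(Z_1,Y_1)\times U_{m_2+d_2}$, we get $\Delta(A,B)=\E_{(y_1,z_1)\sim P}\big[\Delta(\mu_{y_1,z_1},U_{m_2+d_2})\big]\le\Pr_P[\mathrm{BAD}]+\eps_2<\frac{2^{-g}\eps_1}{1-2^{-g}}+\eps_2$, while $\Delta(B,U_{m_1+d_1+m_2+d_2})=\Delta((Z_1,Y_1),U_{m_1+d_1})\le\eps_1$; adding the two gives the claimed $\frac{\eps_1}{1-2^{-g}}+\eps_2$. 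The time bound $O(T_1+T_2)$ is immediate, since $\Ext$ merely evaluates $\Ext_1$ and $\Ext_2$ on disjoint parts of its seed and concatenates. Everything outside the $\mathrm{BAD}$-mass estimate is routine.
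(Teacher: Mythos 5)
Your proof is correct: the definition of the bad set, the charging of $\Pr_P[\mathrm{BAD}]$ against $\eps_1$ via $\Pr_P[\mathrm{BAD}]<2^{-g}\Pr_U[\mathrm{BAD}]$ and $\Pr_U[\mathrm{BAD}]-\Pr_P[\mathrm{BAD}]\le\eps_1$, the conditional min-entropy bound $k_1-m_1-g\ge k_2$, and the two triangle-inequality steps combine to give exactly $\frac{\eps_1}{1-2^{-g}}+\eps_2$, and the strongness and runtime claims are handled properly. The paper states this lemma without proof, citing prior work, and your argument is essentially the standard composition proof from those references, so there is nothing further to reconcile.
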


\begin{lemma}[block source extraction]\label{lem:block-ext}
    Let $X=(X_1, \dots, X_t)$ be an $((n_1,\dots,n_t),(k_1,\dots,k_t))$-block-source, and let $\Ext_i \colon \bits^{n_i}\times \bits^{d_i}\to\bits^{m_i}$ be average-case $(k_i,\eps_i)$-seeded extractors running in time $T_i$ with output length $m_i\geq d_{i-1}$ for $i\geq 2$, \emph{that output their seed}.
    Then, there exists a strong $(k_1,\dots,k_t,\eps=\sum_{i\in[t]}\eps_i)$-block-source extractor $\BExt \colon \bits^{n_1}\times\cdots\times\bits^{n_t}\times\bits^{d_t}\to\bits^{m}$ with output length $m=m_1-d_t$ that runs in time $O(\sum_{i\in[t]}T_i)$.
    If $X$ is an exact block source, then the $\Ext_i$'s do not need to be average-case.
\end{lemma}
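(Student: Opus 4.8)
The plan is to realize $\BExt$ through the textbook right-to-left recursion: extract from the \emph{smallest} block $X_t$ first using the external seed $Y$, use (part of) that output to seed the extractor applied to $X_{t-1}$, and so on, collecting the ``leftover'' bits produced at each stage into the final output. Concretely, I would define $\BExt$ recursively: the base case $t=1$ is $\BExt(X_1,Y)=\Ext_1(X_1,Y)$, and for $t\ge 2$ set $V:=Y\circ\Ext_t(X_t,Y)\in\B^{d_t+m_t}$, write $V=Y'\circ R$ with $|Y'|=d_{t-1}$ (legal since $m_t\ge d_{t-1}-d_t$) and $|R|=m_t-(d_{t-1}-d_t)$, recursively compute $W:=\BExt(X_1\circ\cdots\circ X_{t-1},Y')$ using $\Ext_1,\dots,\Ext_{t-1}$, and output $W\circ R$. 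Placing the seed \emph{first} inside $V$ (rather than appending it) is the crucial design choice: provided $d_t\le d_{t-1}\le\cdots\le d_1$ -- which holds in all applications of this lemma in the paper, and can otherwise be arranged by mildly padding the earlier extractors' seeds -- the original seed $Y$ is a prefix of $Y'$, so it is threaded all the way down the recursion and is ultimately ``absorbed'' by $\Ext_1$, never appearing among the output bits; this is exactly what makes re-appending $Y$ for the strong guarantee legitimate. A short induction on the construction gives the claimed output length $m=m_1+\sum_{i=2}^{t}(m_i-(d_{i-1}-d_i))$ and the running time $O(\sum_{i\in[t]}T_i)$, since each $\Ext_i$ is called once and the remaining work (concatenations and splits) is $O(m)=O(\sum_i m_i)$.

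For correctness I would prove, by induction on the number of blocks, the slightly strengthened statement: for every $((n_i)_i,(k_i)_i)$-block source $X=X_1\circ\cdots\circ X_t$, every seed $Y\sim U_{d_t}$ independent of $X$, and every random variable $Z$ jointly independent of $(X,Y)$, one has $\BExt(X,Y)\circ Y\circ Z\approx_{\eps_1+\cdots+\eps_t}U_{m+d_t}\circ Z$ -- the auxiliary $Z$ being there only to carry the leftover bits generated at outer recursion levels into inner calls. The base case is the strongness of $\Ext_1$ with $Z$ appended. For the inductive step I first pass to an ``ideal'' hybrid: since $\aminH(X_t\mid X_1,\dots,X_{t-1})\ge k_t$ and $Y$ is uniform and independent, the average-case strong guarantee of $\Ext_t$ with side information $(X_1,\dots,X_{t-1},Z)$ shows that $V$ is $\eps_t$-close to a uniform $\tilde V=\tilde Y'\circ\tilde R$ independent of $(X_1,\dots,X_{t-1},Z)$; as $\BExt(X,Y)\circ Y\circ Z$ is a deterministic function of $(X_1,\dots,X_{t-1},Z,V)$, by data processing it suffices to work in this hybrid. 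There, $X_1\circ\cdots\circ X_{t-1}$ is still a block source with parameters $(k_1,\dots,k_{t-1})$, $\tilde Y'\sim U_{d_{t-1}}$ is independent of it, and $(\tilde R,Z)$ is jointly independent of $(X_1,\dots,X_{t-1},\tilde Y')$; applying the induction hypothesis to $X_1\circ\cdots\circ X_{t-1}$ with seed $\tilde Y'$ and auxiliary variable $(\tilde R,Z)$ gives $\BExt(X_1\circ\cdots\circ X_{t-1},\tilde Y')\circ\tilde Y'\circ\tilde R\circ Z\approx_{\eps_1+\cdots+\eps_{t-1}}U_{m+d_{t-1}}\circ Z$. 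Since in the hybrid $\BExt(X,Y)\circ Y$ equals $\BExt(X_1\circ\cdots\circ X_{t-1},\tilde Y')\circ\tilde R\circ Y$ with $Y$ the length-$d_t$ prefix of $\tilde Y'$, reordering coordinates and projecting $\tilde Y'$ onto that prefix turns the uniform block into a shorter uniform block, yielding $\BExt(X,Y)\circ Y\circ Z\approx_{\eps_1+\cdots+\eps_{t-1}}U_{m+d_t}\circ Z$ in the hybrid; adding back the hybrid's $\eps_t$ closes the induction.

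For the ``exact block source'' variant the same argument applies verbatim after conditioning on an arbitrary prefix $X_1=x_1,\dots,X_{t-1}=x_{t-1}$ and averaging over it at the end: under such conditioning $X_t$ has min-entropy at least $k_t$, so only the ordinary (non-average-case) strong guarantee of $\Ext_t$ is used, and recursing, only ordinary strong extractors are needed throughout.

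I expect the main obstacle to be entirely bookkeeping: tracking at each recursion level precisely which bits are consumed seed bits versus fresh output bits (so that re-appending $Y$ does not double-count), and verifying at each hybrid step that the just-revealed leftover bits are genuinely independent of the remaining source blocks and of the next seed to be used, which is what legitimizes folding them into the side information of the next average-case extractor call. The seed-first convention together with the order $d_1\ge d_2\ge\cdots\ge d_t$ is precisely what keeps this manageable.
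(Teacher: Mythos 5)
Your construction and inductive argument are exactly the standard right-to-left composition that this lemma refers to -- the paper itself states it without proof as a classical technique -- and your strengthened induction (carrying an independent side-information variable $Z$, passing to the hybrid where $Y\circ\Ext_t(X_t,Y)$ is replaced by a uniform string via the average-case strong guarantee, then invoking the induction hypothesis with auxiliary $(\tilde R,Z)$) is correct, including the exact-block-source variant. The only caveat is the one you already flag: the seed-first convention yields strongness with respect to $Y$ only when $d_1\ge d_2\ge\cdots\ge d_t$, which indeed holds in every application in the paper (the lemma's hypothesis $m_i\ge d_{i-1}-d_i$ alone does not force it), so your proof covers the intended setting.
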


We discuss how the fast hash-based extractor from \cref{lem:fasthash} can be used to construct a fast extractor with seed length any constant factor smaller than its output length for high min-entropy sources.
We need the following lemma, which is an easy consequence of the chain rule for min-entropy.
\begin{lemma}[\protect{\cite[Corollary 4.16]{GUV09}}]
\label{lem:block-chain-rule}
    Let $X$ be an $(n,k=n-\Delta)$-source and let $X_1,\dots,X_t$ correspond to disjoint subsets of coordinates of $X$, with each $X_i$ of length $n_i\geq n'$.
    Then, $(X_1,\dots,X_t)$ is $t\eps$-close to an exact $((n_1,\dots,n_t),(k',\dots,k'))$-block-source for $k'=n'-\Delta-\log(1/\eps)$.\footnote{\cite[Corollary 4.16]{GUV09} is originally stated only for the special case where $X_1,\dots,X_t$ partition the coordinates of $X$. This extends easily to the statement presented here by noting that the puncturing of $X$ to its $\widetilde{n}=\sum_{i=1}^t n_i$ bits corresponding to $X_1,\dots,X_t$ is an $(\widetilde{n},\widetilde{n}-\Delta)$-source, and applying the original statement in~\cite{GUV09}.}
\end{lemma}

The following appears in~\cite{GUV09} without the time complexity bound.
We appropriately instantiate their approach and analyze the time complexity below.
\begin{lemma}[fast extractors with seed shorter than output \protect{\cite[Lemma 4.11]{GUV09}}]\label{lem:fasthash-short}
    For every integer $t\geq 1$ there exists a constant $C>0$ such that
    for any positive integer $n$ and $\eps>2^{-\frac{n}{50t}}$ there exists a strong $(k=(1-\frac{1}{20t})n,\eps)$-seeded extractor $\Ext \colon \B^n\times\B^d\to\B^m$ with $m\geq k/2$ and $d \leq k/t + C\log(n/\eps)$ computable in time $O(tn\log n)$.
\end{lemma}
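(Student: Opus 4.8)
The plan is to realize the claimed extractor by the standard two-step recipe: first turn the high min-entropy source $X$ into a block source with $t$ geometrically decreasing blocks, and then apply block source extraction (Lemma~\ref{lem:block-ext}) using the fast hash-based extractors from Lemma~\ref{lem:fasthash}. Concretely, I would split $X\in\B^n$ into $t$ consecutive blocks $X_1\circ\cdots\circ X_t$ whose lengths decrease geometrically, say $|X_i|\approx n/2^{i-1}$ up to the constraint that the last block still has length $\Theta(\log(n/\eps))$ (this is where $t\geq 1$ being a constant and $\eps>2^{-n/(50t)}$ are used: they guarantee the blocks can be chosen so every block retains entropy rate bounded away from its length, and the smallest block is long enough to seed). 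Since $X$ has entropy deficiency $\Delta=n/(20t)$, each prefix-conditioned block loses at most $\Delta+\log(1/\eps)$ bits of entropy relative to its length by Lemma~\ref{lem:block-chain-rule} (applied block-by-block, or directly), so with the chosen block lengths each block has min-entropy rate at least $1-\frac{1}{10t}$ or so conditioned on the previous blocks, after passing to a nearby exact block source at cost $t\eps$.

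Next I would instantiate the per-block extractors. For the innermost (smallest) block $X_t$ of length $\ell_t=\Theta(\log(n/\eps))$, use a fresh seed of length $d=d_t\leq \ell_t/t+C\log(n/\eps)$ — here I can afford the crude bound since $\ell_t=O(\log(n/\eps))$ — and extract $m_t\approx \ell_t/2$ bits via Lemma~\ref{lem:fasthash}. Working outward, for block $X_i$ I want an extractor whose output length $m_i$ is at least $d_{i-1}-d_i$ (so block source extraction chains), and whose seed length $d_i$ is roughly $|X_i|/t+O(\log(n/\eps))$; the second ("append the seed") part of Lemma~\ref{lem:fasthash} with parameter $\theta\approx 1/t$ gives exactly a strong extractor $\B^{n}\times\B^{d_i}\to\B^{(1+\theta)d_i}$ whenever $d_i\geq c\log(n/\eps)$ and the conditional min-entropy of the block exceeds $\theta d_i+c\log(1/\eps)$, which holds because the block has entropy rate close to $1$ and $d_i$ is a $\Theta(1/t)$-fraction of $|X_i|$. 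Passing from strong to average-case extractors costs the usual $O(\log(1/\eps))$ entropy and a factor $2$ in error (per the remark after the extractor definition), which is absorbed since I set each $\eps_i=\eps/\poly(t)$.

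Now assemble: by Lemma~\ref{lem:block-ext} the composed block source extractor $\BExt$ has output length $m=m_1+\sum_{i=2}^t(m_i-(d_{i-1}-d_i))$, which telescopes to roughly $\sum_i m_i - d_1 \gtrsim \frac12\sum_i|X_i| - O(n/t) = \Omega(n)$, and in particular I can arrange $m\geq k/2$ by taking the leading constants with a little room. The total seed length used is just $d_t\leq \ell_t/t+C\log(n/\eps)$ for the innermost block, but wait — in the block-source-extraction composition only the innermost seed is "fresh," so the overall seed length is $d_t$; however $d_t=\Theta(\log(n/\eps))$ whereas the lemma allows $d\leq k/t+C\log(n/\eps)$, so we are comfortably within budget (and if one instead wants the $k/t$ term to be meaningful one simply allocates the outermost block a larger fresh seed; either way the bound $d\leq k/t+C\log(n/\eps)$ holds). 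The total error is $t\eps$ (from the block-source approximation) plus $\sum_i\eps_i=O(\eps)$, which is $O(\eps)$; rescaling $\eps$ by a constant gives error exactly $\eps$. For the running time, each of the $t$ block extractors runs in time $O(|X_i|\log|X_i|)=O(n\log n)$ by Lemma~\ref{lem:fasthash}, and splitting $X$ into blocks is linear, so the composition runs in $O(tn\log n)$ as claimed.

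The main obstacle I anticipate is bookkeeping the chain of inequalities among the $d_i$, $m_i$, $\ell_i=|X_i|$ and the per-block conditional entropies so that simultaneously (i) every block still has entropy rate bounded away from $1$ by only $O(1/t)$ after conditioning (needs $\Delta+\log(1/\eps)\ll \ell_i/t$, hence the hypothesis $\eps>2^{-n/(50t)}$ and the geometric schedule must not let the blocks get too small too fast), (ii) $m_i\geq d_{i-1}-d_i$ for the chaining, and (iii) the final $m\geq k/2$. This is the kind of constant-juggling that is routine but easy to get slightly wrong; the cleanest route is probably to fix the block lengths as $\ell_i=\lceil n/2^{i}\rceil$ for $i<t$ with $\ell_t$ absorbing the remainder, set $\theta=\frac{1}{20t}$ throughout, and verify the three conditions with the constant $C$ in the statement chosen large enough as a function of $t$ and the universal constant $c$ from Lemma~\ref{lem:fasthash}.
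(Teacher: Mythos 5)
Your approach has a genuine gap in the output-length calculation, and the culprit is the choice of geometrically decreasing blocks together with hash-based per-block extractors. The paper's proof instead splits $X$ into $t$ \emph{equal} blocks of length $n'=\lfloor n/t\rfloor$ and uses the \emph{same} extractor $\Ext'\colon\B^{n'}\times\B^d\to\B^{m}$ from the first part of \cref{lem:fasthash} on every block, with output $m=\lceil k/(2t)\rceil$ and seed $d\approx 2m+O(\log(n/\eps))\approx k/t+O(\log(n/\eps))$. Because all the $d_i$ are equal, the chaining slack $d_{i-1}-d_i$ vanishes and \cref{lem:block-ext} gives total output $t\cdot m\approx k/2$: every block contributes its full $m$ bits to the final output. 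That is the whole trick.

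With your geometric schedule this does not happen. Using the ``append the seed'' part of \cref{lem:fasthash}, each block's output is $m_i=(1+\theta)d_i$, and you set $d_i\approx|X_i|/t$, so $m_i\approx|X_i|/t$, not $|X_i|/2$. Thus $\sum_i m_i\approx n/t$, not $\gtrsim\frac12\sum_i|X_i|=n/2$ — the inequality $\sum_i m_i\gtrsim\frac12\sum_i|X_i|$ you rely on is false. Moreover, the block-source-extraction output $m=m_1+\sum_{i\geq 2}\bigl(m_i-(d_{i-1}-d_i)\bigr)$ collapses: with $|X_{i-1}|=2|X_i|$ you have $d_{i-1}-d_i\approx d_i\approx m_i$, so each summand is $\approx 0$ and $m\approx m_1=O(n/t)$, a factor $t$ short of $k/2$. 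Trying to instead output $m_i\approx|X_i|/2$ per block forces $d_i\approx 2m_i\approx|X_i|$ and the chaining condition $m_i\geq d_{i-1}-d_i$ becomes $|X_i|/2\geq|X_i|$, which fails outright whenever the blocks halve. This is the same obstruction that elsewhere in the paper caps the plain geometric-block-plus-hash chain at $n^{\Omega(1)}$ output and necessitates the extra boosting steps in Section~\ref{sec:non-rec}; the equal-block-lengths trick is precisely what sidesteps it here, and it is the step missing from your proposal. (Your error accounting and the $O(tn\log n)$ time bound would be fine if the output-length claim held.)
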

\begin{proof}
    Let $X$ be an $(n,k=(1-\frac{1}{20t})n)$-source and $\eps'=\frac{\eps}{2t}$.
    Let $X_1, \dots, X_t$ correspond to disjoint subsets of coordinates of $X$, with $|X_i|= \lfloor n/t\rfloor=n'$ for all $i$.
    Then, \cref{lem:block-chain-rule} guarantees that $(X_1,\ldots,X_t)$ is $(t\eps')$-close to an exact $((n_1=n',\dots,n_t=n'),k'=n'-\frac{n}{20t}-\log(1/\eps'))$-block-source $X'$.
    Note that
    \begin{equation*}
        k'=n'-\frac{n}{20t} - \log(1/\eps')\geq \frac{19n}{20t} - 1 - \log(1/\eps') \geq 0.9n',
    \end{equation*}
    since we have assumed that $\eps> 2^{-\frac{n}{50t}}$.
    Now, let $\Ext' \colon \bits^{n'}\times\bits^d\to\bits^m$ be the strong $(k',\eps')$-seeded extractor from \cref{lem:fasthash} with output length $m=\left\lceil\frac{k}{2t}\right\rceil \leq k-4\log(16/\eps')$ and corresponding seed length $d\leq k/t + 4\log(n'/\eps')+9\leq k/t + C\log(n/\eps)$ for a large enough constant $C>0$ depending only on $t$.
    Then, we apply block source extraction (\cref{lem:block-ext}) to $X'$ with $\Ext_1=\cdots=\Ext_t=\Ext'$ to get the desired strong $(k,2t\eps'=\eps)$-extractor $\Ext$ with output length $t\cdot m\geq k/2$ and seed length $d$.
    Since $\Ext'$ is computable in time $O(n\log n)$, then $\Ext$ is computable in time $O(tn\log n)$.
\end{proof}

In addition to \cref{lem:boost-output}, one can potentially boost the output length of a high min-entropy extractor by first treating the source as a block source and then performing a simple block source extraction. The next corollary appears in \cite[Lemma 6.27]{Vad12}.
\begin{lemma}\label{cor:boosting2}
Let $\Ext_{\mathsf{in}} \colon \B^{n/2} \times \B^{\ell}  \rightarrow  \B^{d}$
and $\Ext_{\mathsf{out}} \colon \B^{n/2} \times \B^{d} \rightarrow \B^{m}$ be $(k',\eps)$-extractors. 
Then, for any $(n,k = \delta n)$-source 
$(X_1, X_2)$, each $X_i \sim \B^{n/2}$, and an independent uniform $Y \sim \B^{\ell}$, we have that
\[
\Ext((X_1,X_2),Y) = \Ext_{\mathsf{out}}(X_1,\Ext_{\mathsf{in}}(X_2,Y))
\]
is $4\eps$-close to uniform, assuming that $k' \le (\delta-\frac{3}{4})n$ and $\eps \ge 2^{-n/4}$.
In other words, $\Ext$ is a $(k,4\eps)$-extractor.
Moreover, if $\Ext_{\mathsf{in}}$ is strong then $\Ext$ is also strong, and if $\Ext_{\mathsf{in}}$ and $\Ext_{\mathsf{out}}$ run in time $T_1$ and $T_2$, respectively, then $\Ext$ runs in time $O(T_1+T_2)$. 
\end{lemma}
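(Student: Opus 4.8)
The plan is to view $\Ext((X_1,X_2),Y)=\Ext_{\mathsf{out}}(X_1,\Ext_{\mathsf{in}}(X_2,Y))$ as a two‑block source extraction and obtain it by combining \cref{lem:block-chain-rule} with a short direct analysis (which is precisely the $t=2$ instance of \cref{lem:block-ext}). The composition is the standard one: treating $X_1\circ X_2$ as an approximate block source, $\Ext_{\mathsf{in}}$ extracts from the second block $X_2$ --- which carries good min‑entropy \emph{conditioned on} the first block --- producing an almost‑uniform string that then seeds $\Ext_{\mathsf{out}}$ applied to $X_1$, which carries good min‑entropy marginally.

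Concretely, I would first apply \cref{lem:block-chain-rule} with $t=2$, block length $n/2$, entropy deficiency $\Delta=n-k=(1-\delta)n$, and auxiliary error $\eps$, to get that $X_1\circ X_2$ is $2\eps$-close to an exact $((n/2,n/2),(k'',k''))$-block source $X_1'\circ X_2'$ with $k''=n/2-(1-\delta)n-\log(1/\eps)$; a one‑line computation using $\eps\geq 2^{-n/4}$ and the hypothesis on $k'$ gives $k''\geq (\delta-\frac34)n\geq k'$, so both $\Ext_{\mathsf{in}}$ and $\Ext_{\mathsf{out}}$ (being $(k',\eps)$-extractors) apply to this block source. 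Since a fixed map does not increase statistical distance, it suffices to show $\Ext_{\mathsf{out}}(X_1',\Ext_{\mathsf{in}}(X_2',Y))\approx_{2\eps}U_m$ and add the $2\eps$ from the block‑source conversion. For that, I would fix $x_1\in\supp(X_1')$: exactness makes $X_2'\mid X_1'=x_1$ an $(n/2,k')$-source with $Y$ still uniform and independent of it, so $\Ext_{\mathsf{in}}(X_2',Y)\mid X_1'=x_1\approx_\eps U_d$; averaging over $x_1\sim X_1'$ gives $(X_1',\Ext_{\mathsf{in}}(X_2',Y))\approx_\eps(X_1',\bar U_d)$ for an independent uniform $\bar U_d$; applying $\Ext_{\mathsf{out}}$ keeps this within $\eps$, and $\Ext_{\mathsf{out}}(X_1',\bar U_d)\approx_\eps U_m$ because $X_1'$ has min‑entropy $\geq k'$. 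A triangle inequality yields $2\eps$ for this step and $4\eps$ overall. For the strong statement (when $\Ext_{\mathsf{in}}$ is strong) I would run the same chain with the seed $Y$ attached throughout --- using $\Ext_{\mathsf{in}}(X_2',Y)\circ Y\mid X_1'=x_1\approx_\eps U_{d+\ell}$ in the first step, and noting the fresh $\bar U_\ell$ stays independent of $\Ext_{\mathsf{out}}$'s output --- to conclude $\Ext((X_1,X_2),Y)\circ Y\approx_{4\eps}U_{m+\ell}$, with no strongness needed from $\Ext_{\mathsf{out}}$. The runtime claim is immediate, since $\Ext$ just calls $\Ext_{\mathsf{in}}$ on $(X_2,Y)$ and then $\Ext_{\mathsf{out}}$ on $X_1$ together with that output, for total time $O(T_1+T_2)$.

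The one place that requires care --- and the reason I route through \cref{lem:block-chain-rule} rather than simply asserting $X_1\circ X_2$ is a block source --- is that block‑source extraction with plain, non‑average‑case extractors needs the block source to be \emph{exact}: the argument above uses $\minH(X_2'\mid X_1'=x_1)\geq k'$ for \emph{every} $x_1$, not merely on average. This is fine because the block source produced by \cref{lem:block-chain-rule} can be taken to be exact, as its proof conditions on the event that the prefix $X_1$ is not atypically unlikely (an event of probability $\geq 1-2\eps$). Everything else is routine triangle‑inequality bookkeeping together with the fact that statistical distance cannot grow under a fixed post‑processing function; in particular there is no subtlety in using the source‑dependent string $\Ext_{\mathsf{in}}(X_2',Y)$ as the seed of $\Ext_{\mathsf{out}}$, since conditioning on $X_1'$ is exactly what restores the required independence.
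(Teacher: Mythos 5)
Your proposal is correct and takes essentially the paper's route: the paper gives no separate proof, stating that the lemma follows from \cref{lem:block-chain-rule} together with \cref{lem:block-ext}, and your argument is exactly that combination (the $t=2$ block-source extraction, with the exactness issue handled so that plain rather than average-case extractors suffice). One remark: your calculation uses the hypothesis in the form $k' \le (\delta-\tfrac{3}{4})n$, i.e.\ the blocks' min-entropy $(\delta-\tfrac34)n$ dominates the extractors' requirement $k'$, which is the direction actually needed for the lemma to hold; the inequality $k' \ge (\delta-\tfrac{3}{4})n$ printed in the statement appears to be a typo, so you have in effect proved the corrected statement.
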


\section{Additional Building Blocks}

\subsection{Fast Generation of Small-Bias Sets}\label{sec:biased}

A distribution $S \sim \B^n$ is \emph{$\eps$-biased} if it is indistinguishable from uniform by every linear test. Namely, if for every nonempty $T \subseteq [n]$ it holds that $\Pr_{s \sim S}[\bigoplus_{i \in T} s_i=1] \in [\frac{1-\eps}{2},\frac{1+\eps}{2}]$. We say that a \emph{set} $S \subseteq \B^n$ is $\eps$-biased if the flat distribution over its support is $\eps$-biased.
We say that a linear code $\cC \subseteq \B^n$ is $\eps$-balanced if the Hamming weight of each nonzero codeword lies in $[\frac{1-\eps}{2}n,\frac{1+\eps}{2}n]$. It is known that these two objects are essentially the same: $S$ is $\eps$-biased if and only if the $|S| \times n$ matrix whose rows are the elements of $S$ is a generator matrix of an $\eps$-balanced code.

One prominent way of constructing $\eps$-balanced codes is via
\emph{distance amplification}, namely, starting with a code of some
bias  $\eps_0 \gg \eps$ and, using a parity sampler, amplify its bias. We will use a specific, simple, instantiation of a parity sampler -- the random walk sampler.
\begin{lemma}[\protect{RWs amplify bias \cite[Theorem 3.1]{TS17}\footnote{The argument for $t=2$ was suggested already by Rozenman and Wigderson (see \cite{bogdanov}). Note that the goal of the RW is to \emph{reduce} the bias, $\eps \ll \eps_0$, but we choose to use ``amplify'' and follow the existing nomenclature.}}]\label{lem:rws-code}
Let $\cC_0 \subseteq \B^n$ be an $\eps_0$-balanced code, and let 
$G = (V = [n],E)$ be a $D$-regular $\lambda$-spectral expander, and for an even $t \in \mathbb{N}$, let $\mathcal{W}_t = \set{w_1,\ldots,w_{\bar{n}}} \subseteq [n]^t$ be the set of walks of length $t$ on $G$, noting that $\bar{n}=n \cdot D^t$. Define
$\cC \subseteq \B^{\bar{n}}$ such that
\[
\mathcal{C} = \set{\mathrm{dsum}_{\mathcal{W}_t}(c_0) : c_0 \in \cC_0},
\]
where $y = \mathrm{dsum}_{\mathcal{W}_t}(x)$ at location $i \in [\bar{n}]$ is given by $\bigoplus_{j \in w_i} x_j$.

Then, $\cC$ is $\eps$-balanced, for
\[
\eps = \left( \eps_0 + 2\lambda \right)^{t/2}.
\]
\end{lemma}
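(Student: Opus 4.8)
The plan is to reduce to a single nonzero codeword and then run a short linear-algebraic computation; this is essentially the Rozenman--Wigderson/Ta-Shma parity-sampling argument. Since $\mathrm{dsum}_{\cW_t}$ is an $\F_2$-linear map, every nonzero codeword of $\cC$ is of the form $c=\mathrm{dsum}_{\cW_t}(c_0)$ for some nonzero $c_0\in\cC_0$, and (writing $\bias(c):=\frac{1}{\bar n}\sum_{i\in[\bar n]}(-1)^{c_i}$, so that $\eps$-balancedness is exactly $|\bias(c)|\le\eps$ for every nonzero codeword $c$) it suffices to show $|\bias(c)|\le(\eps_0+2\lambda)^{t/2}$ for each such $c$. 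Unwinding the definition of $\mathrm{dsum}_{\cW_t}$ and regarding a uniformly random element of $\cW_t$ as a random walk $(v_1,\dots,v_t)$ on $G$ with $v_1$ uniform in $[n]$ and each $v_{j+1}$ a uniformly random neighbour of $v_j$, one gets
\[
\bias(c)=\E_{(v_1,\dots,v_t)}\left[\prod_{j=1}^{t}(-1)^{(c_0)_{v_j}}\right].
\]

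Next I would translate this into matrix language. Let $z\in\{-1,1\}^n$ be given by $z_j=(-1)^{(c_0)_j}$, let $A$ be the normalized random-walk matrix of $G$ (symmetric, $A\mathbf 1=\mathbf 1$, operator norm at most $\lambda$ on $\mathbf 1^\perp$), let $D_z=\mathrm{diag}(z)$ (an orthogonal matrix), and set $e=\mathbf 1/\sqrt n$ and $R=D_zA$. Telescoping the conditional expectations over the walk from the innermost vertex outward, and then using $Ae=e$ to absorb the leftover factor of $A$, one obtains the clean identity
\[
\bias(c)=\langle e,\,R^{t}e\rangle .
\]
Moreover, because $\cC_0$ is $\eps_0$-balanced and $c_0\neq 0$, the bias of $c_0$ itself satisfies $\left|\langle e,D_z e\rangle\right|=\left|\frac1n\sum_j z_j\right|\le\eps_0$.

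The heart of the proof — and the step that needs the most care — is the operator-norm bound $\|R^2\|\le\eps_0+2\lambda$, which I would prove by analyzing how $R^2=D_zAD_zA$ acts on the two relevant subspaces. For $x\perp\mathbf 1$ we have $Ax\perp\mathbf 1$ and $\|Ax\|\le\lambda\|x\|$, so, since $D_z$ and $A$ are contractions, $\|R^2x\|\le\lambda\|x\|$. For $x=e$ we write $D_z e=\beta_0 e+y$ with $y\perp\mathbf 1$, $|\beta_0|\le\eps_0$ and $\|y\|\le 1$; then $R^2e=D_zAD_z e=\beta_0 D_z e+D_zAy$, which has norm at most $\eps_0+\lambda$. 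For a general unit vector $x=\alpha e+x'$ with $x'\perp\mathbf 1$ and $\alpha^2+\|x'\|^2=1$, the triangle inequality and Cauchy--Schwarz give $\|R^2x\|\le |\alpha|(\eps_0+\lambda)+\lambda\|x'\|\le\sqrt{(\eps_0+\lambda)^2+\lambda^2}\le\eps_0+2\lambda$.

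Finally, since $t$ is even, $R^{t}=(R^2)^{t/2}$, and therefore
\[
|\bias(c)|=\left|\langle e,(R^2)^{t/2}e\rangle\right|\le \|R^2\|^{t/2}\le(\eps_0+2\lambda)^{t/2},
\]
as desired. I expect the only mildly delicate point to be the bookkeeping in the telescoping identity $\bias(c)=\langle e,R^te\rangle$ — making sure the parities attach to the vertices in the right order and that the extra factor of $A$ is correctly absorbed via $Ae=e$; everything else is routine. This argument also makes transparent why the hypothesis that $t$ is even is used: for odd $t$ one can only peel off $(t-1)/2$ copies of $R^2$ and bound the remaining $R$ by $\|R\|\le 1$, giving the weaker exponent $(t-1)/2$.
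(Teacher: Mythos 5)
Your proof is correct: the paper does not prove this lemma itself (it imports it from Ta-Shma~\cite{TS17}), and your argument --- reducing to a single nonzero codeword, writing the bias as $\langle e,(D_zA)^t e\rangle$ with $e=\mathbf{1}/\sqrt{n}$, and bounding $\|(D_zA)^2\|\le \eps_0+2\lambda$ via the decomposition along $\mathbf{1}$ and $\mathbf{1}^{\perp}$ --- is exactly the standard parity-sampling proof from that source, with all the key steps (the identity $Ae=e$ absorbing the extra factor, $|\langle e,D_ze\rangle|\le\eps_0$ from balancedness of $\cC_0$, and the Cauchy--Schwarz step giving $\sqrt{(\eps_0+\lambda)^2+\lambda^2}\le\eps_0+2\lambda$) carried out correctly. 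The only bookkeeping worth flagging is that you read $\mathcal{W}_t\subseteq[n]^t$ as walks with $t$ vertices ($t-1$ steps), which is what makes the exponent $t/2$ come out; the paper's count $\bar{n}=n\cdot D^t$ corresponds instead to $t$ edges, a harmless discrepancy in the statement rather than a flaw in your argument.
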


For $\cC_0$, we will use the Justesen code. 
\begin{lemma}[\cite{Jus72}]\label{lem:jus}
There exist constants $R \in (0,1)$ and $\eps_0 \in (0,1)$ such that there exists an explicit family of codes $\set{\mathrm{Jus}_n}$ parameterized by block length $n$, with rate $R$ and $\eps_0$-balanced. 
Moreover, given $x \in \B^{k = Rn}$, $\mathrm{Jus}_{n}(x)$ can be computed in $\widetilde{O}(n)$.
\end{lemma}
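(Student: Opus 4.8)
The plan is to take $\mathrm{Jus}_n$ to be \emph{not} Justesen's original code, but a Justesen-style concatenated code engineered so that the weight of every nonzero codeword is controlled on \emph{both} sides: Justesen's construction, or indeed any construction resting purely on a good-distance code (expander codes, algebraic-geometry codes, $\dots$), only gives the lower bound $\mathrm{wt}(c)\ge\delta_0 n$, whereas being $\eps_0$-balanced also demands the matching upper bound $\mathrm{wt}(c)\le\tfrac{1+\eps_0}{2}n$. Concretely, let $\cC_{\mathrm{in}}\colon\F_q\to\B^{m}$ be a binary linear code with $q=2^{\mu}$ and $m=\Theta(\mu)$ (so rate $\mu/m=\Theta(1)$) that is $\eps_{\mathrm{in}}$-balanced for some absolute constant $\eps_{\mathrm{in}}\in(0,1)$; let the outer code be the Reed--Solomon code over $\F_q$ of some constant rate $\rho\in(0,1)$, with evaluation set all of $\F_q$ (so block length $q$ and relative distance $1-\rho$); and let $\mathrm{Jus}_n$ be their concatenation, with $q$ chosen a power of two and $q=\Theta(n/\log n)$ so that the binary block length $qm$ equals $n$. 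Then $\mathrm{Jus}_n$ is binary, linear, and has rate $R=\rho\cdot(\mu/m)$, a constant.

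The first step is to produce the inner code. A uniformly random generator matrix of rate $1/2$ sends every fixed nonzero message to a codeword whose Hamming weight is $\mathrm{Bin}(m,1/2)$-distributed; a Chernoff bound followed by a union bound over the $2^{m/2}$ nonzero messages shows the resulting code is $\eps_{\mathrm{in}}$-balanced with probability tending to $1$, once $\eps_{\mathrm{in}}$ is a large enough constant below $1$. Since such a code exists, it can be located by exhaustive search over generator matrices. To make this one-time construction (and the auxiliary lookup table used by the encoder below) genuinely efficient rather than quasipolynomial, I would first compose the outer Reed--Solomon code with an intermediate Reed--Solomon code over a field of polylogarithmic order, chosen so that the intermediate layer itself has constant rate; this brings the block length of the innermost binary code down to $O(\log\log n)$, so the exhaustive search costs only $2^{O((\log\log n)^{2})}=n^{o(1)}$, and the standard facts that distances multiply and weights add under concatenation apply verbatim at each level while the overall rate stays constant.

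The second step is to verify the balance property. Fix a nonzero codeword $c$ of $\mathrm{Jus}_n$, arising from a nonzero outer codeword with $\F_q$-symbols $s_1,\dots,s_q$, at least a $(1-\rho)$-fraction of which are nonzero by the Reed--Solomon distance bound. Since $\cC_{\mathrm{in}}$ is linear, the zero symbols contribute $0$ to $\mathrm{wt}(c)=\sum_{i}\mathrm{wt}(\cC_{\mathrm{in}}(s_i))$, while each nonzero symbol contributes a value in $\bigl[\tfrac{1-\eps_{\mathrm{in}}}{2}m,\tfrac{1+\eps_{\mathrm{in}}}{2}m\bigr]$, so
\[
\mathrm{wt}(c)\in\left[(1-\rho)\cdot\frac{1-\eps_{\mathrm{in}}}{2}\cdot qm,\;\frac{1+\eps_{\mathrm{in}}}{2}\cdot qm\right]\subseteq\left[\frac{1-\eps_0}{2}\,n,\;\frac{1+\eps_0}{2}\,n\right]
\]
for $\eps_0:=\max\bigl(\eps_{\mathrm{in}},\,1-(1-\rho)(1-\eps_{\mathrm{in}})\bigr)$, which is a constant strictly below $1$ since $\rho<1$ and $\eps_{\mathrm{in}}<1$. (For lengths $n$ not of the exact form above one uses a routine padding argument, which I omit.)

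Finally, the encoding time. The outer Reed--Solomon encoding is a multipoint evaluation of a polynomial of degree below $\rho q<q$ at the $q$ points of $\F_q$, computable in time $\widetilde{O}(q)\cdot\log q=\widetilde{O}(n)$ by \cref{lemma:fast} (taking its degree parameter to be $q$, and using that $q$ is a power of two and the bound $\widetilde{O}(d)\cdot\log q$ valid since $q\le 2^{q}$); the intermediate Reed--Solomon layer, acting on $q$ blocks of polylogarithmic length, likewise costs $\widetilde{O}(n)$; and the innermost binary encoding is $\poly(\log n)$ evaluations of a fixed linear map on $O(\log\log n)$ bits, each realized by a single lookup into a precomputed table of size $\polylog(n)$, for $\widetilde{O}(n)$ overall. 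The only point that requires genuine care — everything else is bookkeeping — is securing the two-sided weight bound, and this is exactly what concatenating with a \emph{balanced} rather than a merely good-distance inner code provides, at the negligible price of an exhaustive search over a binary code of length $O(\log\log n)$.
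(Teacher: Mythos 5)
Your proposal is correct, but it proves the lemma by a genuinely different route than the paper. The paper simply takes $\mathrm{Jus}_n$ to be Justesen's original code (outer Reed--Solomon concatenated with the Wozencraft-ensemble maps $p(\alpha)\mapsto(p(\alpha),\alpha\cdot p(\alpha))$), cites \cite{Jus72} for the rate and the balance constants ($R=1/8$, $\eps_0=37/40$), and only verifies the running time: one fast multipoint evaluation via \cref{lemma:fast} plus $\widetilde{O}(\log q)$ work per position, giving $\widetilde{O}(n)$. You instead build a new code --- Reed--Solomon, an intermediate Reed--Solomon layer over a $\polylog(n)$-size field, and an innermost length-$O(\log\log n)$ binary code found by brute force and certified balanced via Chernoff plus a union bound --- and you prove the two-sided weight bound directly from the concatenation structure ($\eps_0=\max(\eps_{\mathrm{in}},1-(1-\rho)(1-\eps_{\mathrm{in}}))$, with the extra $(1-\rho')$ factor at the second level), together with the same $\widetilde{O}(n)$ encoding bound. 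What your approach buys is self-containedness: the paper's one-line appeal to \cite{Jus72} glosses over the fact that balancedness needs an \emph{upper} weight bound, which for Justesen's code follows from the same Wozencraft-ensemble counting argument that gives the distance (so your remark that the original construction ``only gives the lower bound'' is a slight overstatement, but the paper does leave this unargued), whereas in your construction the upper bound is transparent since every inner codeword is itself balanced. What the paper's choice buys is a uniformly explicit code with no $n^{o(1)}$ search step, and a structural property exploited later in \cref{cor:biased} (each coordinate of $\cC_0(e_j)$ is a bit of $(\alpha^{j},\alpha^{j+1})$, computable in $\widetilde{O}(\log n)$ time); your code admits a comparable single-coordinate computation via repeated squaring, but that would need to be said explicitly if your construction were substituted into the downstream corollary. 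Within the statement of the lemma itself, your argument is complete up to the routine details you flag (padding to arbitrary $n$, and the multi-level version of the displayed weight estimate).
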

\begin{proof}
The parameters of the codes follow from the original construction (and specifically, the lemma holds, say, with $R = \frac{1}{8}$ and $\eps_0 = \frac{37}{40}$), so we will just show that the code is efficiently computable. Given a message $x$, we first encode it with a full Reed--Solomon code of constant relative distance over a field $\F_q$ of characteristic $2$, where $q\log q = O(n)$. By \cref{lemma:fast}, this can be done in time $\widetilde{O}(q) = \widetilde{O}(n)$. Then,
we replace each Reed--Solomon symbol $p_{x}(\alpha)$,
for $\alpha \in \F_q$, with the binary representation of $(p(\alpha),\alpha \cdot p(\alpha))$. (In other words, we concatenate Reed--Solomon with the Wozencraft ensemble.) This takes $\widetilde{O}(q)$ time as well.
\end{proof}

\begin{corollary}\label{cor:biased}
There exist a constant $c > 1$, and an explicit family of balanced codes, such that for every $\bar{n} \in \mathbb{N}$ and any $\eps > 0$, $\cC \subseteq \F_2^{\bar{n}}$ is $\eps$-balanced of rate $R = \eps^{c}$, and given $x \in \F_2^{k = R\bar{n}}$, any $m$ bits of
$\cC(x)$ can be computed in time $\widetilde{O}(k)+m\log(1/\eps) \cdot \polylog(k)$.

Moreover, for every $k \in \mathbb{N}$ and any $\eps > 0$ there exists an explicit $\eps$-biased set over $\F_2^k$ generated by a function $\mathsf{SmallBias} \colon [\bar{n}] \rightarrow \B^k$ computable in time $(k + \log(1/\eps)) \cdot  \polylog(k)$.
\end{corollary}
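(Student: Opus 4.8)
The plan is to construct the $\eps$-balanced code of the first part in two stages — start from the constant-rate, constant-bias Justesen code of \cref{lem:jus} and drive its bias down to $\eps$ using the random-walk distance amplification of \cref{lem:rws-code} over a constant-degree expander — and then to read the small-bias set off the (column) duality between $\eps$-biased sets over $\F_2^{k}$ and $\eps$-balanced linear codes of dimension $k$. Concretely, let $\cC_0=\mathrm{Jus}_n$ have absolute constant rate $R_0$ and bias $\eps_0<1$, let $\lambda$ be a small enough absolute constant so that $\beta:=\eps_0+2\lambda<1$, and take $G$ to be the $D$-regular, $\lambda$-spectral expander on $[n]$ with $D=O(1)$ from \cref{cor:expander} (equivalently \cref{lem:exp-any-vertex}). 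Choosing the least even $t$ with $\beta^{t/2}\le\eps$, so that $t=\Theta(\log(1/\eps))$, \cref{lem:rws-code} gives that $\cC=\{\mathrm{dsum}_{\mathcal{W}_t}(c_0):c_0\in\cC_0\}$ is $\eps$-balanced; it has block length $\bar n=n\cdot D^{t}=n\cdot\eps^{-\Theta(1)}$ and rate $R_0\cdot D^{-t}=\Theta(\eps^{\Theta(1)})$. Since zero-padding a message preserves balancedness (a padded message is zero iff the original is), and since we are free to choose $n$ — and, if one insists on an exactly prescribed $\bar n$ rather than one up to a constant factor, to restrict to a suitable sub-collection of the walks — we may arrange the dimension to be $k=\eps^{c}\bar n$ for a suitable constant $c>1$, which forces $n=k/R_0=\Theta(k)$. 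Explicitness of the resulting family is inherited from that of $\mathrm{Jus}$ and of the expander family.

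To produce $m$ prescribed coordinates of $\cC(x)$, first compute $\mathrm{Jus}_n(x)$ in full in time $\Otilde(n)$ by \cref{lem:jus}; then, for each target coordinate $i\in[\bar n]$, parse $i\in[n]\times[D]^{t}$ as a start vertex together with $\Theta(t)$ edge labels, recover the corresponding length-$t$ walk $w_i$ with $\Theta(t)$ expander-neighbor queries (each in time $\log(1/\lambda)\cdot\polylog n=O(\log n\loglog n)$ by \cref{cor:expander}, since $\lambda$ is a constant), and output the single bit $\bigoplus_{l\in w_i}\mathrm{Jus}_n(x)_l$. This costs $\Otilde(n)+O(m\cdot t\cdot\log n\loglog n)=\Otilde(n)+O(m\log(1/\eps)\log n\loglog n)$, as claimed.

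For the small-bias set, recall that $S\subseteq\F_2^{k}$ is $\eps$-biased precisely when the matrix whose columns are the elements of $S$ generates an $\eps$-balanced code of dimension $k$. Apply the construction above with dimension $k$, so that $\bar n=k/\eps^{c}=\poly(k/\eps)$; writing the linear map $\mathrm{Jus}_n\colon\F_2^{k}\to\F_2^{n}$ as $x\mapsto Jx$ and the amplification step as $z\mapsto Mz$, the code $\cC$ has generator matrix $(MJ)^{T}=J^{T}M^{T}$, whose $i$-th column equals $J^{T}\mathbf{1}_{w_i}$ — the image, under the transpose of $\mathrm{Jus}_n$, of the weight-$t$ indicator vector of the walk $w_i$. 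We therefore set $\mathsf{SmallBias}(i)=J^{T}\mathbf{1}_{w_i}$. The fast Justesen encoder of \cref{lem:jus} is a straight-line $\F_2$-linear program of size $\Otilde(n)=\Otilde(k)$ (its multipoint-evaluation and inner-encoding steps use only $\F_2$-additions and multiplications by input-independent field constants), so by the transposition principle there is a straight-line program of size $\Otilde(k)$ for the map $v\mapsto J^{T}v$. Evaluating it on $\mathbf{1}_{w_i}$, together with the $\Theta(\log(1/\eps))$ neighbor queries (each $\Otilde(\log k)$, since $n=\Theta(k)$) needed to build $w_i$, yields total time $(k+\log(1/\eps))\cdot\Otilde(\log k)$.

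The main obstacle is precisely this last bound: the naive implementation would form $\mathsf{SmallBias}(i)$ as a sum of $t$ columns of the Justesen generator matrix, which already costs $\Otilde(k\log(1/\eps))$, so one must instead recognize $\mathsf{SmallBias}(i)$ as a single application of the \emph{transposed} Justesen encoder to a sparse vector and invoke the transposition principle. The two points needing care are checking that the near-linear-time Justesen encoder is genuinely an $\F_2$-linear straight-line program of size $\Otilde(n)$ (so that its transpose also has size $\Otilde(n)$, with the log factors as stated), and using an expander whose single neighbor queries run in time $\Otilde(\log n)$; the remaining steps — the parameter choice in the amplification, the zero-padding, and fitting a prescribed block length — are routine.
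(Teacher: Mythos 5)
Your first part — Justesen base code plus random-walk distance amplification over a constant-degree expander with $t=\Theta(\log(1/\eps))$ steps, and the per-coordinate cost of $t$ neighbor queries plus one $\Otilde(n)$ Justesen encoding — matches the paper's proof essentially step for step.

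For the small-bias generator you take a genuinely different route. The paper's proof computes each entry $\mathsf{SmallBias}(i)_j=\bigoplus_{l\in w_i}\cC_0(e_j)_l$ directly, using that for the Justesen code each coordinate of $\cC_0(e_j)$ is a single bit of the encoding of $(\alpha^j,\alpha^{j+1})$ for a coordinate-determined $\alpha\in\F_q$ and so is computable in $\Otilde(\log n)$. You instead recognize $\mathsf{SmallBias}(i)$ as the image of the $t$-sparse vector $\mathbf{1}_{w_i}$ under the transpose $J^{T}$ of the Justesen generator matrix, and invoke the transposition (Tellegen) principle to transfer the $\Otilde(n)$-size $\F_2$-linear straight-line program for $x\mapsto Jx$ to one of the same size for $v\mapsto J^{T}v$. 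Both arrive at the bound $(k+\log(1/\eps))\cdot\Otilde(\log k)$, but your argument is arguably the cleaner and more robust one: a naive count of the paper's direct method reads as $k\cdot t\cdot\Otilde(\log n)$ (a $t$-term XOR, each bit $\Otilde(\log n)$, for each of $k$ coordinates), so reaching the additive form $k\cdot\Otilde(\log n)+t\cdot\Otilde(\log n)$ requires some unstated amortization, whereas the transposition argument yields $\Otilde(k)+t\cdot\Otilde(\log k)$ immediately with no factor-of-$t$ slack. You correctly flag the one nontrivial point you lean on — that the near-linear-time Justesen encoder is genuinely an $\F_2$-linear SLP of size $\Otilde(n)$; this does hold (multipoint evaluation is $\F_q$-linear in the message coefficients, since the subproduct tree and the polynomial multiplications/divisions involve only additions and multiplications by input-independent constants, and $\F_q$-linearity unpacks to $\F_2$-linearity), but a fully rigorous write-up should make this explicit. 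Modulo that check, your proof is correct and takes a different, arguably tighter, path than the paper's for the second part.
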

\begin{proof}
Let $\cC_0 \colon \F_2^{k = \Theta(n)} \rightarrow \F_2^{n}$ be the $\eps_0$-balanced code guaranteed to us by \cref{lem:jus}, and let
$G = (V = [n],E)$ be the $D$-regular $\lambda$-spectral expander of \cref{cor:expander}, instantiated with $\lambda = \frac{1-\eps_0}{4}$ (so $D=D(\eps_0)$). Letting $\cC \colon \F_2^{k} \rightarrow \F_2^{\bar{n}}$ be the amplified code of \cref{lem:rws-code} set with \[t = \frac{2\log(1/\eps)}{\log\left(\frac{2}{1+\eps_0}\right)} = O(\log(1/\eps)),\] the lemma tells us that it is
$
( \eps_0 + 2\lambda )^{t/2} \le \eps
$
balanced. Given $x \in \F_2^n$ and $i \in [\bar{n}]$, computing $\cC(x)_i$ amounts to XORing $t$ coordinates of $\cC_0(x)$ determined by $i = (v,i_1,\ldots,i_t)$, which indexes a random walk over $G$.
Computing $\cC_0(x)$ takes $\widetilde{O}(n)$ time, and computing a length-$t$ walk over $G$ takes $t \cdot \log(1/\lambda) \cdot \polylog(n)$ time. The corollary then follows, observing that $\bar{n} = n \cdot D^t = n \cdot \poly(1/\eps)$.

For the ``Moreover'' part, recall that we can take the rows of the generator matrix of $\cC$ as our $\eps$-biased set $S$. Thus, for any
$i \in [\bar{n}]$, we can compute $\mathsf{SmallBias}(i)$ as follows: Compute the corresponding random walk on $G$, and then, for any $j \in [k]$, 
$\mathsf{SmallBias}(i)_j$ is obtained by XORing the bits of $\cC_0(e_j)$ indexed by the $i$-th random walk. Observing that each bit of $\cC_0(e_j)$ can be computed in time $\widetilde{O}(\log n)$,\footnote{Indeed, each coordinate of $\cC_0(e_j)$ is a bit in the encoding of $(\alpha^{j},\alpha^{j+1})$ for some $\alpha \in \F_q$, where $q\log q = O(n)$.} the runtime of $\mathsf{SmallBias}$ is
\[
t \cdot \log(1/\lambda) \cdot \polylog(n) + k \cdot \widetilde{O}(\log n) = (k + \log(1/\eps)) \cdot  \polylog(k), 
\]
recalling that $k = \Theta(n)$.
\end{proof}

\begin{remark}\label{remark:spielman}
\em
Instead of using Justesen's code from \cref{lem:jus} as our inner code $\mathcal{C}_0$, we can instead use the linear-time encodable code of Spielman \cite{Spi96}. While not stated as \emph{balanced} codes, but rather as constant relative distance codes, one can verify that the distance can also be bounded by above.
The construction is more involved than Justesen's one. 
However, in the logarithmic-cost RAM model, in which basic register operations over $O(\log n)$ bit registers count as a single time step, Spielman's code can be implemented in $O(n)$ time.
\end{remark}

\subsection{A Sampler from Bounded Independence}\label{sec:bounded-indep-sampler}

Recall that $X_1,\ldots,X_n \sim \Sigma^n$ is a $b$-wise independent distribution, if for every distinct $i_1,\ldots,i_b \in [n]$ it holds that $(X_{i_1},\ldots,X_{i_b}) = U_{\Sigma^b}$, the uniform distribution over $\Sigma^b$.

\begin{lemma}\label{lem:kwise}
For any prime power $n$, any $m \le n$, and any $b \le m$,  there exists an explicit $b$-wise independent generator 
$\mathsf{BI}_{b} \colon \B^{d} \rightarrow [n]^m$ with $d = O(b\log n)$\footnote{That is, the distribution formed by picking $z \sim U_d$ and outputting $\mathsf{BI}_{b}(z)$ is $b$-wise independent over $[n]^m$.} that satisfies the following.
\begin{enumerate}
    \item Given $z \in \B^d$, $\mathsf{BI}_{b}(z)$ is computable in time $\widetilde{O}(n)$.
    \item\label{it:repeat} For any $\theta > 0$ the following holds. With probability at least $1-2^{-\Omega(\theta b)}$ over $z \sim U_d$,  $\mathsf{BI}_{b}(z)$ has at least $m - (1+\theta)\frac{m^2}{n}$ distinct elements.
\end{enumerate}
\end{lemma}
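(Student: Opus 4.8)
The plan is to build $\mathsf{BI}_{b,\eps}$ by composing a small-bias generator with the standard reduction from small bias to almost $b$-wise independence over a finite alphabet, and then to analyze the number of distinct samples via a second-moment (pairwise collision) argument that exploits the almost-$b$-wise independence for $b\ge 2$.

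First, for the construction and part (1): pick the alphabet $\Sigma=[n']$ where $n'=2^{\lceil \log n\rceil}$ is the next power of two, so that symbols are $\ell:=\lceil\log n\rceil$-bit strings; we will output in $[n]$ by rejection-free folding (e.g. reduce mod $n$), which only perturbs the target distribution by a bounded amount that can be absorbed into $\eps$ after a constant blow-up, or alternatively one simply works over $[n']$ and notes $n'\le 2n$. The classical construction (Naor--Naor / Alon--Goldreich--Håstad--Peralta style, in the $q$-ary form) says: to get an $(b,\eps)$-wise independent distribution on $m$ symbols of $[n']$, it suffices to take a string from an $\eps'$-biased set over $\F_2^{m\ell}$ with $\eps'=\eps\cdot 2^{-b\ell/2}$ and parse it into $m$ blocks of $\ell$ bits; any $b$ blocks are then $\eps$-close to uniform over $\Sigma^b$ since the bias bound over the relevant $b\ell$-dimensional subspace translates to statistical distance at most $2^{b\ell/2}\eps'=\eps$ by the standard Fourier/$L_2$-to-$L_1$ bound. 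Plugging in the generator $\mathsf{SmallBias}$ from \cref{cor:biased}, which produces an $\eps'$-biased set over $\F_2^{k}$ with $k=m\ell=O(m\log n)$ via a seed of length $d=O(k+\log(1/\eps'))=O(m\log n+\log(1/\eps)+b\log n)=O(b\log n+\log(1/\eps))$ (using $b\le m\le n$, so $m\log n$ dominates and is $O(b\log n)$? — here one must be careful: $m$ can be much larger than $b$, so in fact $d=O(m\log n+\log(1/\eps))$; matching the claimed $d=O(b\log n+\log(1/\eps))$ requires a tighter generator, see the obstacle paragraph). Computing $\mathsf{BI}_{b,\eps}(z)$ then amounts to computing all $\bar n$-indexed walk coordinates needed, i.e. evaluating the balanced code on the message and reading off the $m\ell$ relevant output bits; by \cref{cor:biased} the first $\poly(n)$-many coordinates of $\cC(x)$ cost $\widetilde O(n)+O(m\log n\cdot\log(1/\eps)\cdot\polylog n)=\widetilde O(n)$ time when $m\log(1/\eps)\le \widetilde O(n)$ — which holds in the intended regime — giving part (1).

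For part (2): let $X_1,\dots,X_m$ be the output symbols, each marginally (almost) uniform over $\Sigma$ of size $N\in[n,2n]$, and pairwise (in fact $b$-wise) $\eps$-close to uniform. The number of distinct elements is $m$ minus the number of "extra" collisions, and $\E[\#\{i<j: X_i=X_j\}]\le \binom{m}{2}(1/N+\eps)\le \frac{m^2}{2n}(1+2\eps n)$, which under the hypothesis $\eps\le\theta N^{-b/2}\le \theta/N$ (for $b\ge2$) is at most $\frac{m^2}{2n}(1+2\theta)$. A union bound / Markov on the count of collisions only gives constant probability, so to get the claimed $1-2^{-\Omega(\theta b)}$ failure probability one uses the higher-order independence: bound a high moment of the collision count. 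Concretely, consider the indicator variables $Y_{ij}=\mathbf 1[X_i=X_j]$ and estimate $\E\big[(\sum_{i<j}Y_{ij})^{s}\big]$ for $s=\Theta(\theta b)$; any such product involves at most $2s\le b$ distinct indices, so the almost-$b$-wise independence lets us replace the true distribution by the truly uniform one up to an additive error $\eps\cdot\binom{m}{2}^{s}$ (negligible since $\eps$ is tiny), and for the uniform distribution the collision count concentrates — its $s$-th moment is $O((m^2/n)^s\cdot s^s)$ or so by a standard bounded-independence moment computation — so by Markov $\Pr[\#\text{collisions}\ge (1+4\theta)\frac{m^2}{2n}]\le 2^{-\Omega(s)}=2^{-\Omega(\theta b)}$. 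Since $\#\text{distinct}\ge m-\#\text{collisions}$, we conclude that with probability $1-2^{-\Omega(\theta b)}$ there are at least $m-(1+4\theta)\frac{m^2}{n}$ distinct elements (absorbing the constant $\tfrac12$ into the $4\theta$ vs the cleaner bound).

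The main obstacle is getting the seed length down to $d=O(b\log n+\log(1/\eps))$ rather than the naive $O(m\log n+\log(1/\eps))$ coming from a black-box small-bias generator over $\F_2^{m\ell}$. The fix is the standard one: do not generate $m$ blocks independently from a small-bias string of length $m\ell$; instead use a degree-$\le b$ polynomial representation — sample $b+1$ random field elements over $\F_{2^{\ell'}}$ with $\ell'=\Theta(\log n)$ from an $\eps'$-biased distribution on $\F_2^{(b+1)\ell'}$ and define $X_i$ as (a truncation of) the evaluation of the resulting degree-$b$ polynomial at the $i$-th point. Any $b$ evaluations of a uniformly random degree-$b$ polynomial are exactly uniform and independent (Reed--Solomon / polynomial interpolation), and replacing "uniform coefficients" by "$\eps'$-biased coefficients" costs only $2^{b\ell'/2}\eps'=\eps$ in statistical distance over any $b$ coordinates, exactly as before; now the small-bias string has length only $(b+1)\ell'=O(b\log n)$, so $d=O(b\log n+\log(1/\eps))$. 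The time bound is preserved because evaluating one fixed degree-$b$ polynomial at $m$ points is a multipoint evaluation, doable in $\widetilde O((b+m)\log n)=\widetilde O(n)$ by \cref{lemma:fast}, and the collision analysis in part (2) is unchanged since it only used pairwise-through-$b$-wise $\eps$-closeness, which this construction also provides.
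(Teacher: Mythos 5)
Your construction is essentially the paper's: take a $\gamma$-biased string of length $\approx n\log q$ bits with $\gamma=\eps\cdot 2^{-b\log q/2}$ (\cref{cor:biased}), parse it into $\log q$-bit symbols, invoke the XOR lemma to get $(b,\eps)$-wise independence over $[q]^n$, and truncate to $m$ symbols. But the ``main obstacle'' you identify is illusory: you conflated the \emph{running time} of $\mathsf{SmallBias}$ in \cref{cor:biased}, which is $(k+\log(1/\eps'))\cdot\widetilde{O}(\log k)$, with its \emph{seed length}. The seed needed to index a point of an $\eps'$-biased set over $\F_2^{m\ell}$ is only $O(\log(m\ell)+\log(1/\eps'))$, and since $\eps'=\eps\cdot 2^{-b\ell/2}$ this is already $O(b\log n+\log(1/\eps))$ — the $b\log n$ term comes from the bias requirement, not from the output length. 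So the detour through $\eps'$-biased coefficients of a degree-$b$ polynomial is unnecessary (though your surjectivity argument shows it is a valid alternative). Separately, your claim that reducing $\lceil\log n\rceil$-bit symbols mod $n$ ``only perturbs the distribution by a constant blow-up in $\eps$'' is false: for general $n$ the mod-$n$ reduction of a uniform $\lceil\log n\rceil$-bit value can be at \emph{constant} statistical distance from uniform on $[n]$; the paper handles non-powers-of-two by allotting more bits per symbol before reducing (and working over $[n']$ does not meet the statement, which requires outputs in $[n]$).

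The genuine gap is in item (2). Your plan — bound $\E[W^s]$ for the pair-collision count $W$ with $s=\Theta(\theta b)$, subtract an $\eps\binom{m}{2}^s$ error, and apply Markov at threshold $(1+4\theta)\frac{m^2}{2n}$ — cannot yield $2^{-\Omega(\theta b)}$ as written. Even for fully independent samples, $\E[W^s]^{1/s}$ grows like $\max\bigl(\E W,\, s/\log(s/\E W)\bigr)$, and with your own stated moment bound $O((m^2/n)^s s^s)$, Markov at a threshold that is only a constant factor above $\E W\approx m^2/(2n)$ gives $\bigl(\Theta(s)\bigr)^s\geq 1$, i.e.\ a vacuous bound; a moment-plus-Markov argument at a $(1+O(\theta))$-factor threshold only beats $1$ when $s$ is tied to $m^2/n$, a relation the lemma does not assume and you never use. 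The paper's proof works differently: it considers the duplicate indicators $Z_i$ (``$X_i$ equals some earlier $X_j$''), proves via a union bound over the choices of earlier indices $j_\ell<i_\ell$ that $\Pr[Z_{i_1}=\cdots=Z_{i_t}=1]\leq(1+\theta)(m/q)^t$ for every $t\leq b/2$ (this per-tuple bound is not just ``$\eps$-close to a product''), and then invokes a generalized-Chernoff/symmetric-moment tail bound for Boolean families with such bounded joint probabilities. Your proposal is missing both that per-tuple estimate and a correct tail-bound step, and since this is the heart of item (2), the argument as written does not establish the claimed $1-2^{-\Omega(\theta b)}$ guarantee.
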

\begin{proof}
We use the standard polynomial-based construction of bounded independence sample spaces. Concretely, 
given $z \in \F_n^{b}$, we interpret it as a polynomial $f_{z}(x) = \sum_{i=1}^{b}z_{i}x^{i-1}$, and we let $\mathsf{BI}_b(z)$ output $(f_{z}(\alpha_1),\ldots,f_{z}(\alpha_m))$, where the $\alpha_i$-s are distinct elements in $\F_{n}$. This gives us a $b$-wise generator over $[n]^m$ (the correctness can be found, e.g., in \cite[Chapter 3]{Vad12}). By \cref{arithmetic:it1} of \cref{lemma:fast}, 
$\mathsf{BI}_{b}(z)$ is computable in time $\widetilde{O}(n)$ (and this is true for any $b \le n$). The seed length $d$ is given by $O(b\log n)$.

Next, we argue that most samples contain mostly distinct elements. Towards this end, let $X_1,\ldots,X_m$ be our $b$-wise independent distribution $\mathsf{BI}_{b}(U_d)$, and let $Z_i$ denote the indicator random variable that is $1$ if and only if $X_i$ is a duplicate element (namely, there exists $j < i$ such that $X_i = X_j$). We are looking to bound $\sum_{i \in [m]}Z_i$ with high probability. This will follow from the fact that $X$ is $b$-wise independent.
\begin{claim}\label{claim:repeat2}
Assume that $t \le b/2$. Then, for any distinct $i_1,\ldots,i_t \in [m]$, it holds that $\Pr[Z_{i_1}=\ldots=Z_{i_t}=1] \le (m/n)^t$.
\end{claim}
\begin{proof}
Fix indices $j_1,\ldots,j_t$, where each $j_\ell < i_\ell$. The probability that $X_{i_{\ell}} = X_{j_{\ell}}$ for all $\ell \in [t]$ is at most $n^{-t}$, since this event depends on at most $2t \le b$ random variables. Union-bounding over all choices of $j$-s incurs a multiplicative factor of $
\prod_{\ell \in [t]}(i_{\ell}-1) \le m^{t},
$
so overall, 
$\Pr[Z_{i_1}=\ldots=Z_{i_t}=1] \le (m/n)^{t}$.
\end{proof}
Now, \cref{claim:repeat2} is sufficient to give us good tail bounds (see, e.g., \cite[Section 3]{HH15}\footnote{In the notation of \cite{HH15}, the distribution $Z$ is $(0,b)$-growth-bounded. The tail bound then follows from \cite[Theorem 3.2]{HH15}.}). In particular, denoting $\mu = \frac{m}{n}$ there exists a universal constant $c>0$ such that
\[
\Pr\left[ \sum_{i \in [m]}Z_i \ge (1+\theta)\mu m \right] \le 2^{-c\theta b},
\]
which implies \cref{it:repeat}.
\qedhere
\end{proof}

Towards introducing our sampler, we will need the following tail bound for $b$-wise independent random variables.
\begin{lemma}[e.g., \protect{\cite[Chapter 3]{Vad12}}]\label{lem:tail}
Let $X \sim \Sigma^m$ be a $b$-wise independent distribution, and fix some $\eps > 0$. Then, $X$ is also
a $(\delta,\eps)$ sampling distribution, where
$
\delta = \left( \frac{b}{2\eps\sqrt{m}} \right)^{b}.
$
\end{lemma}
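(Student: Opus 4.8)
\emph{Proof plan.} The plan is to apply the even-moment method to a centered version of the sampled average, and to track how the $(b,\gamma)$-wise (rather than fully) independent hypothesis perturbs each moment term — this is exactly what produces the additive $\gamma/\eps^{b}$ loss. Fix a test function $f\colon\Sigma\to[0,1]$, let $\mu=\E_{s\sim\Sigma}[f(s)]$ be its mean under the uniform distribution on $\Sigma$, and set $W=\sum_{j=1}^{m}\bigl(f(X_j)-\mu\bigr)$, so that the goal is to bound $\Pr[\,|W|\ge\eps m\,]$. We may assume $b$ is even (if $b$ is odd, run the argument with the $(b-1)$-st moment, using that $X$ is also $(b-1,\gamma)$-wise independent) and that $\eps>5\sqrt{b/m}$, since otherwise $\delta\ge 1$ and there is nothing to prove (note that in this nontrivial regime one also has $m\gg b$, which is all the classical moment bound below will need). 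Applying Markov's inequality to the nonnegative variable $W^{b}$ gives
\[
\Pr[\,|W|\ge\eps m\,]=\Pr\bigl[W^{b}\ge(\eps m)^{b}\bigr]\le\frac{\E[W^{b}]}{(\eps m)^{b}},
\]
so it suffices to prove the moment bound $\E[W^{b}]\le(C_0\,b\,m)^{b/2}+O(\gamma)\cdot m^{b}$ for an absolute constant $C_0$; since the statement carries the generous factor $5$ in the base, dividing by $(\eps m)^{b}$ then yields exactly the two summands of $\delta$.

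First I would expand $\E[W^{b}]=\sum_{(j_1,\dots,j_b)\in[m]^{b}}\E\bigl[\prod_{\ell=1}^{b}(f(X_{j_\ell})-\mu)\bigr]$. Each summand depends on at most $b$ coordinates of $X$, which by $(b,\gamma)$-wise independence are jointly $\gamma$-close to the uniform distribution on the corresponding power of $\Sigma$; since the product inside the expectation has absolute value at most $1$, each summand differs by $O(\gamma)$ from the value $\prod_{i}\E_{u\sim\Sigma}\bigl[(f(u)-\mu)^{a_i}\bigr]$ it would take under the fully uniform product distribution, where the product is over the distinct indices $i$ occurring in the tuple and $a_i\ge 1$ is the multiplicity of $i$. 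Summing the $O(\gamma)$ errors over all $m^{b}$ tuples accounts for the $O(\gamma)\cdot m^{b}$ term, so it remains to bound the ``ideal'' sum $S_{\mathrm{id}}:=\sum_{\mathrm{tuples}}\prod_i\E_{u\sim\Sigma}[(f(u)-\mu)^{a_i}]$.

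Since $\E_{u\sim\Sigma}[f(u)-\mu]=0$, any tuple having an index of multiplicity exactly $1$ contributes $0$ to $S_{\mathrm{id}}$; hence only tuples all of whose multiplicities are $\ge 2$ — equivalently, with at most $b/2$ distinct indices — survive. For such a tuple with $\ell\le b/2$ distinct indices, each factor, having exponent $a_i\ge 2$, has absolute value at most $v:=\E_{u\sim\Sigma}[(f(u)-\mu)^{2}]\le\tfrac14$ (using $|f(u)-\mu|\le 1$ to dominate higher moments by the second), so the product is at most $v^{\ell}$. The number of length-$b$ tuples with a prescribed set of $\ell$ distinct indices equals the number of surjections $[b]\to[\ell]$ whose fibres all have size $\ge 2$; multiplying by $\binom{m}{\ell}\le m^{\ell}$ and summing over $\ell\le b/2$, the dominant contribution comes from $\ell=b/2$, where that count is the number $(b-1)!!$ of perfect matchings of $[b]$, and $(b-1)!!\le(b/2)^{b/2}$ by AM--GM. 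Collecting the lower-order terms in the standard way (for instance via the exponential generating function $\exp\bigl(mv(e^{x}-1-x)\bigr)$) gives $S_{\mathrm{id}}\le(C_0\,b\,m\,v)^{b/2}\le(C_0\,b\,m)^{b/2}$ for an absolute constant $C_0$, which together with the previous paragraph establishes the required moment bound and hence the lemma.

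The main obstacle is the combinatorial moment estimate of the last paragraph: once the multiplicity-one terms are killed, one must verify that the sum over all remaining partition types of $[b]$ into blocks of size $\ge 2$ is, up to an absolute constant factor, dominated by the perfect-matching type $\ell=b/2$, which is what forces the $(bm)^{b/2}$ shape with a usable constant. This is the classical higher-moment bound for sums of $b$-wise independent bounded random variables; the only points requiring care beyond the textbook argument are (i) replacing full independence by $(b,\gamma)$-wise independence, which is precisely the source of the $\gamma/\eps^{b}$ term, and (ii) the reduction to even $b$.
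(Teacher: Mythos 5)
The paper itself does not prove \cref{lem:tail} --- it imports it verbatim from \cite{XZ24} --- so there is no internal proof to compare against; your route (bound the $b$-th moment of the centered sum $W$, compare each of the $m^b$ expectation terms to its value under exact uniformity via the $(b,\gamma)$-closeness, which is what produces the $\gamma/\eps^b$-type term, and bound the ideal moment by the multiplicity-$\geq 2$ partition count) is the standard argument and is essentially the expected one. For even $b$ the skeleton is sound: Markov on $W^b$, killing tuples with a multiplicity-one index, $(b-1)!!\le (b/2)^{b/2}$, and domination by $\ell=b/2$ once $m\gg b$, which your nontriviality reduction $\eps>5\sqrt{b/m}$ indeed guarantees. (One intermediate claim is not literally correct: $S_{\mathrm{id}}\le (C_0 bmv)^{b/2}$ can fail when $mv\ll 1$, since the $\ell=1$ term alone can be $\approx mv>(C_0bmv)^{b/2}$; but the $v$-free bound $S_{\mathrm{id}}\le (C_0bm)^{b/2}$ that you actually need does follow from your own counting with $v\le 1/4$, so this is cosmetic.)

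Two points are genuine problems. First, the odd-$b$ reduction does not deliver the stated inequality: running the argument with the $(b-1)$-st moment proves the bound with exponent $b-1$, and in the nontrivial regime where $5\sqrt{b}/(\eps\sqrt{m})<1$ the quantity $\left(5\sqrt{b-1}/(\eps\sqrt{m})\right)^{b-1}$ is \emph{larger} than the claimed $\left(5\sqrt{b}/(\eps\sqrt{m})\right)^{b}$ (e.g.\ $b=3$, $\eps\sqrt{m}=100$), and the gap grows with $\eps\sqrt m$, so no constant slack absorbs it. In fact the lemma as literally stated is false for odd $b$: for $b=1$, $\gamma=0$, take all $m$ coordinates equal to one uniform symbol --- every marginal is uniform, yet the empirical average of a balanced $f$ deviates by $1/2$ with probability $1$ while the claimed $\delta$ tends to $0$. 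So the statement must be read with $b$ even (which is what your even-moment method proves, and which suffices for its use in \cref{lem:k-sampler1}); you should say this explicitly rather than assert that the odd case follows. Second, with the paper's definition of $\gamma$-closeness as statistical distance, a $[-1,1]$-valued product can shift by $2\gamma$ per term, so as written you obtain $2\gamma/\eps^b$, and the generous constant $5$ sits in the \emph{other} summand, so it cannot compensate when $\gamma$ dominates. This is repairable: each factor $f(X_j)-\mu$ lies in the length-one interval $[-\mu,1-\mu]$, and a product of $b$ such factors lies in an interval of length at most $\max(\mu,1-\mu)^{b-1}\le 1$; since $|\E_P[g]-\E_Q[g]|\le (\max g-\min g)\cdot\Delta(P,Q)$, each term moves by at most $\gamma$, giving exactly $\gamma m^b$ total and hence $\gamma/\eps^b$.
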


While the error in \cref{it:repeat} above is small, it is not small enough for us to simply combine \cref{lem:tail,lem:kwise}, and we will need to do a mild error reduction. We do this via random walks on expanders and discarding repeating symbols, as was also done in \cite[Section 8]{Vad04}. This gives us the following bounded-independence based sampler.
\begin{lemma}\label{lem:k-sampler1}
For any positive integers $m \le n$, any $\delta_{\Gamma} \in (0,1)$, and any constant $\eta \in (0,1)$ such that $m \le \frac{\eta}{8}n$, there exists an explicit $(\delta_{\Gamma},\eps_{\Gamma} = 2\eta)$-averaging sampler 
$\Gamma \colon  \B^{d} \rightarrow [n]^{m}$ with $d = O\left( \frac{\log n}{\log m} \cdot \log\frac{1}{\delta_{\Gamma}} \right)$, that satisfies the following additional properties.
\begin{enumerate}
    \item Every output of $\Gamma$ contains distinct symbols of $[n]$, and
    \item Given $y \in \B^d$, $\Gamma(y)$ is computable in time  $\widetilde{O}(n)+\poly\left(\log\frac{1}{\delta_{\Gamma}} \cdot \frac{\log n}{\log m}\right)$.
\end{enumerate}
\end{lemma}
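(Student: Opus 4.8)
The idea is to take the almost $b$-wise independent generator $\mathsf{BI}_{b,\eps}$ from \cref{lem:kwise} and boost it, first so that it outputs (mostly) distinct symbols with overwhelming probability, and then so that its confidence parameter $\delta_\Gamma$ becomes as small as we wish while keeping the seed length $O\left(\frac{\log n}{\log m}\cdot\log(1/\delta_\Gamma)\right)$. The key point is that \cref{lem:tail} already tells us that a $(b,\gamma)$-wise independent distribution over $[n]^m$ is a $(\delta,\eps)$-sampler with $\delta \approx (5\sqrt{b}/(\eps\sqrt m))^b + \gamma/\eps^b$; choosing $b$ a suitable constant multiple of $\log(1/\delta_\Gamma)/\log m$ (so that $\sqrt{m}^{\,b}$ beats the target error) and $\gamma=\eps^{-b}$-small — which by \cref{lem:kwise} costs only $O(b\log n + \log(1/\eps))=O\left(\frac{\log n}{\log m}\log(1/\delta_\Gamma)\right)$ seed bits — gives a sampler with the right accuracy $\eps_\Gamma=2\eta$, the right confidence, and the right seed length, but whose samples are not guaranteed distinct.

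**Making the samples distinct.** First I would invoke \cref{it:repeat} of \cref{lem:kwise} with the constraint $m\le \frac{\eta}{8}n$: setting $\theta$ to be a small constant (depending on $\eta$), the generator outputs at least $m-(1+4\theta)\frac{m^2}{n}\ge m - O(\eta)\cdot m$ distinct elements except with probability $2^{-\Omega(\theta b)}$, which is already tiny for our choice of $b$. To actually \emph{output} $m$ distinct symbols, I would run $\mathsf{BI}_{b',\eps}$ with a slightly larger number of output coordinates $m' = m + O(\eta)\cdot m$ (still $\le \frac{\eta}{8}\cdot 2n$-ish, adjusting constants) and then discard duplicates, keeping the first $m$ distinct symbols encountered; a bad event here (fewer than $m$ distinct symbols survive) is absorbed into the confidence parameter. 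Throwing away coordinates can only help a sampler: the averaging guarantee on a subsequence of the samples follows because the accuracy bound of \cref{lem:tail} is monotone under passing to a sub-distribution (one can always pad back to $m'$ coordinates, or simply re-run the tail bound argument on the retained indices). The accuracy degrades by at most the $O(\eta)$ fraction of discarded mass, which is why the statement asks only for accuracy $\eps_\Gamma = 2\eta$ rather than $\eta$.

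**Error reduction via expander walks.** The confidence $2^{-\Omega(\theta b)}$ coming out of \cref{lem:kwise,lem:tail} with $b=\Theta(\log(1/\delta_\Gamma)/\log m)$ is in fact already of the form $\mathrm{poly}(\delta_\Gamma)$ or better, so in the easy regime no further reduction is needed. When it is not small enough — e.g. when $m$ is large so $b$ is only a small constant — I would compose with a short random walk on an expander in the style of \cref{lem:baseSampler,lem:extendedSampler}: take the base sampler $\Gamma_0\colon\B^{d_0}\to[n]^{m_0}$ from the previous steps with confidence a fixed small constant and accuracy $\eta$, build the expander random-walk sampler of length $t=O(\log(1/\delta_\Gamma))$ over the "meta-graph" whose vertices are seeds of $\Gamma_0$ using \cref{cor:expander} with constant spectral expansion, concatenate the $t$ resulting blocks of samples, and discard repeats across blocks (exactly as in \cite{Vad04}). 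The expander Chernoff bound drives the confidence down to $\delta_\Gamma$ while the walk seed adds only $O(\log(1/\delta_\Gamma))$ bits on top of $d_0=\log(n/m_0)+O(\log(1/\delta_\Gamma))$, and since we are packing $m_0 = \Theta(m/t)$ samples per block the domain reshaping of \cref{lem:increaseSamplesDomain} keeps the total seed at $O\left(\frac{\log n}{\log m}\cdot\log(1/\delta_\Gamma)\right)$. The running time is: one invocation of $\mathsf{BI}$, costing $\widetilde O(n)$ by \cref{lem:kwise}, plus computing a walk of length $t=O(\log(1/\delta_\Gamma))$ over an expander of degree $\mathrm{poly}(1/\lambda)=O(1)$ with neighbor queries costing $\polylog n$ each (\cref{cor:expander}), and the bookkeeping for deduplication, giving $\widetilde O\!\left(n + \log^2(1/\delta_\Gamma)\cdot\frac{\log n}{\log m}\right)$ as claimed.

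**Main obstacle.** The delicate part is the accounting that keeps the seed length at $O\left(\frac{\log n}{\log m}\cdot\log(1/\delta_\Gamma)\right)$ rather than $O\left(\log n + \log(1/\delta_\Gamma)\right)$ or $O\left(\frac{\log n}{\log m}\log(1/\delta_\Gamma) + \log n\right)$: the $b\log n$ term in \cref{lem:kwise}'s seed must be charged entirely to the $\frac{\log n}{\log m}\log(1/\delta_\Gamma)$ budget, which forces $b = \Theta(\log(1/\delta_\Gamma)/\log m)$ and hence requires that \cref{lem:tail} with exactly this $b$ and with $\gamma$ only $\mathrm{poly}(\eps, 1/n)^{-b}$-small already yields accuracy $\eta$ and confidence at most (a constant, to be boosted) — one has to check the two terms $(5\sqrt b/(\eps\sqrt m))^b$ and $\gamma/\eps^b$ are both controlled, which constrains how small $\eps=\eps_\Gamma/2=\eta$ can be taken and is exactly why $\eta$ is required to be a constant. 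The expander-composition step must then be arranged so the base sampler already has domain $[n]^{m_0}$ with $m_0$ close to $m$ (not $m_0$ a small constant), so that the $\log(n/m_0)$ contribution to the seed is genuinely smaller than $\log n$ when $m$ is polynomially large; this is where the $\frac{\log n}{\log m}$ savings really come from and where the argument needs the most care.
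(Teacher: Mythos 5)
Your first step is essentially the paper's: choose $b=\Theta\left(\frac{\log(1/\delta_\Gamma)}{\log m}\right)$ so that the term $\left(\frac{5\sqrt{b}}{\eta\sqrt{m}}\right)^{b}$ in \cref{lem:tail} is at most $\delta_\Gamma/8$, take $\gamma\le\min\left\{\frac{1}{8}\eta^b\delta_\Gamma,\ \theta n^{-b/2}\right\}$, oversample to $m'=(1+O(\eta))m$, delete duplicates, and pay $O(\eta)$ in accuracy and a $\frac{1}{1-p}$ factor in confidence. The genuine gap is in how you treat the event that the sample does \emph{not} contain $m$ distinct symbols. You claim that ``in the easy regime'' the failure probability $2^{-\Omega(\theta b)}$ from \cref{lem:kwise} is already polynomial in $\delta_\Gamma$, so no further reduction is needed. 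That is false: with $b=\Theta\left(\frac{\log(1/\delta_\Gamma)}{\log m}\right)$ one has $2^{-\Omega(\theta b)}=\delta_\Gamma^{\Theta(\eta/\log m)}$, which is vastly larger than $\delta_\Gamma$ for every $m$ beyond a constant (to get $2^{-\Omega(\theta b)}\le\delta_\Gamma$ you would need $\log m=O(\eta)$). Note also that what needs boosting is not the sampling confidence of \cref{lem:tail} --- that is at most $\delta_\Gamma/4$ by the very choice of $b$ and $\gamma$, in every regime, including when $b=O(1)$ --- but the probability of distinctness; so error reduction is needed for essentially all parameters, not only ``when $m$ is large''.

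Your proposed fix for the hard regime --- an expander walk over seeds of a weakened base sampler with constant confidence and $m_0=\Theta(m/t)$ samples per block, concatenating the $t$ blocks and ``discarding repeats across blocks'' --- does not obviously give the lemma. The output must consist of $m$ distinct symbols and be a $(\delta_\Gamma,2\eta)$ sampler, so you must show that, except with probability $\delta_\Gamma$ (possibly as small as $2^{-\poly(m)}$), the number of cross-block collisions is $O(\eta m)$. The expander Chernoff bound controls sums of a fixed per-vertex function along the walk, not pairwise collision statistics between blocks, and within each block you only have $O(1)$-wise independence; the concentration you would need here is precisely what forced $b\approx\frac{\log(1/\delta_\Gamma)}{\log m}$ in the single-shot analysis, so this step is a real missing argument, not bookkeeping. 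The paper avoids it: it keeps the full-$b$ generator $\mathsf{BI}_{b,\gamma}$ (already a $(\delta_\Gamma/4,\eta)$ sampler after conditioning) and uses a walk of length $\ell=O\left(\frac{\log(1/\delta_\Gamma)}{b}\right)=O(\log m)$ on an expander over its seed space purely as a \emph{hitter}: evaluate $\mathsf{BI}_{b,\gamma}$ at each walk vertex and output the first sample containing at least $m$ distinct symbols (distinctness is efficiently checkable), declaring failure otherwise. The walk then only needs to avoid the density-$p=2^{-\Omega(\theta b)}$ set of ``non-distinct'' seeds, which fails with probability $(2p)^{\ell}\le\delta_\Gamma/2$, keeping the seed at $O(\log(1/\gamma))$ and the time at $\ell\cdot\widetilde{O}(n)+\widetilde{O}\left(\log^{2}\frac{1}{\delta_\Gamma}\cdot\frac{\log n}{\log m}\right)=\widetilde{O}\left(n+\log^{2}\frac{1}{\delta_\Gamma}\cdot\frac{\log n}{\log m}\right)$. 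To repair your proof, replace the block-concatenation step by this hitting step (or supply the missing high-probability bound on cross-block collisions).
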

\begin{proof}
Set $b$ to be the smallest integer such that $b\log\frac{2\eta\sqrt{m'}}{b} \ge \log\frac{4}{\delta_{\Gamma}}$ and set $m' = (1+\eta)m$, $\theta = \eta/2$. Notice that $b = O\left(\frac{\log(1/\delta_{\Gamma})}{\log m}\right)$. Assume first that $n$ is a prime power, and 
let  \[\mathsf{BI}_{b} \colon \B^{d'} \rightarrow [n]^{m'}\] be the
$b$-wise independent generator guaranteed to us by \cref{lem:kwise}, with $d' = O(b\log n)$. By \cref{lem:tail}, $X = \mathsf{BI}_{b}(U_{d'})$ is a $(\delta_{\mathsf{b}},\eta)$ sampling distribution,
where 
\[
\delta_{\mathsf{b}} = \left( \frac{b}{2\eta\sqrt{m'}} \right)^{b}  \le  \frac{\delta_{\Gamma}}{4}.
\]
Also, we know from \cref{lem:kwise} that with probability at 
least $1 - 2^{-\Omega(\theta b)} \triangleq 1-p$, each sample from $X$ has 
at least $m' - (1+\theta)\frac{m'^2}{n} \ge m$ distinct symbols, using the fact that $\frac{n}{m} \ge \frac{(1+\theta)(1+\eta)^2}{\eta}$.
Conditioned on seeing at least $m$ distinct symbols, $X$ as a sampling distribution, when we remove the non-distinct elements, has confidence $\frac{\delta_{\Gamma}/4}{1-p} \le \frac{\delta_{\Gamma}}{2}$ and accuracy $2\eta$ (where the second $\eta$ comes from the fact that $\eta m$ symbols were removed).

Next, in order to improve the probability of sampling a good sequence to match the confidence, let $G = (V = \B^{d'},E)$ be the $D$-regular $\lambda$-spectral expander of \cref{cor:expander}, instantiated with $\lambda = p$, so $D \le p^{-c}$ for some universal constant $c$. 
Write $d =d'+\ell'$ for $\ell' = \ell \cdot \log D$, where $\ell = c_{\ell} \cdot \frac{\log(1/\delta_{\Gamma})}{b}$ for some constant $c_{\ell}$ soon to be determined. 
Given $y = (z,w) \in \B^{d'} \times [D]^{\ell}$, let $z = v_0,v_1,\ldots,v_{\ell}$ denote the corresponding random walk over $G$. 
Our sampler $\Gamma$, on input $y$, computes $\mathsf{BI}_{b}(v_i)$  for every $i \in [\ell]$ and outputs the first $m$ distinct
symbols of the first sequence with at least $m$ distinct symbols. 
If no such sequence was found, $\Gamma$ simply outputs $(1,\ldots,m)$ (in which case we say it \emph{failed}). By the expander hitting property (see, e.g., \cite[Chapter 4]{Vad12}), $\Gamma$ fails with probability at most
\[
(p+\lambda)^{\ell} = (2p)^{\ell} \le \frac{\delta_{\Gamma}}{2}
\]
over $y \sim U_{d}$, upon choosing the appropriate constant $c_{\ell} = c_{\ell}(\eta)$. We then have that $\Gamma(U_{d})$ is indeed a $(\delta_{\Gamma},2\eta)$ sampling distribution, that can be generated using a seed of length $d' + \ell' = O(\log(1/\gamma))$.  In terms of runtime, computing $v_1,\ldots,v_{\ell}$ can be done in time 
\[
\ell \cdot \log\frac{1}{p} \cdot \polylog(d') = \poly\left( \log\frac{1}{\delta_{\Gamma}} \cdot \frac{\log n}{\log m} \right),
\]
and computing the sequences themselves takes $\ell \cdot \widetilde{O}(n)$ time, and observe that $\ell = O(\log m)$.

Finally, we need to argue that we can also handle the case where $n$ is not a prime power. One option to handle arbitrary $n$-s is to resort to almost $b$-wise independence. 
Specifically, we can start with a $\gamma$-biased distribution, with a small enough $\gamma$, over $n_{\mathsf{b}} = \lceil \tau^{-1}\log n \rceil n$ bits, and map each consecutive  $\lceil \tau^{-1}\log n \rceil$ bits to $[n]$, for a small enough $\tau$, by simply taking the corresponding integer modulo $n$. We skip the details (to see the corresponding sampling property, see \cite{XZ24}), and note that the result that uses the bounded-independence sampler, \cref{thm:main-nonrecursive}, only needs to invoke the sampler with $n$ being a prime power.\footnote{In some more detail, in the proof of \cref{thm:block-source} we use the $b$-wise distribution over $[n]$ in order to sub-sample from $X \sim \B^n$.
In the context of \cref{thm:main-nonrecursive}, we can assume that $X$ has entropy rate $1-\alpha$ for an arbitrarily small constant $\alpha>0$, and the goal is to retain its high entropy rate when sub-sampling. Letting $p$ be the largest prime smaller than or equal to $n$ (which satisfies $p\geq n/2$ by Bertrand's postulate), we have that $X'=X_{[1,p]}$ has entropy rate at least $1-2\alpha$, and one can verify that sampling from $X'$ works equally well, up to negligible loss in parameters. Moreover, we can find $p$ in time $\Otilde(n)$ using, e.g., the deterministic AKS primality test~\cite{AKS04} that runs in time $\polylog n$ applied to each integer in $[n/2,n]$.
}
\end{proof}

We will need to somewhat extend \cref{lem:k-sampler1} and use the simple, yet crucial, property of our bounded independence sampling: A subset of the coordinates of a $b$-wise independent distribution with distinct samples is itself a $b$-wise  independent distribution with distinct samples.\footnote{We note that the distinct-samples sampler given in \cite{Vad04}, an instantiation of which we use in \cref{lem:extendedSampler}, does not seem to enjoy a similar property. The advantage of \cref{lem:extendedSampler} over \cref{lem:k-sampler2} that will appear soon, is that it has better seed length.}
Thus, if we wish to sample multiple times, say using $m_1 \le \ldots \le m_t < n$ samples, we can use \emph{one sample} from a sampler that outputs $m_t$ coordinates, and truncate accordingly to create the other samples. 
\begin{lemma}\label{lem:k-sampler2}
For any positive integers $n$ and $m_1 < \ldots < m_t \le n$, any $\delta \in (0,1)$ and any constant $\eps$ such that $m_t \le \frac{\eps}{16}n$, there exists an explicit function $\Gamma \colon \B^d \rightarrow [n]^{m_t}$ with $d = O\left( \frac{\log n}{\log m_1} \cdot \log\frac{1}{\delta} \right)$ that satisfies the following.
\begin{enumerate}
    \item For any $i \in [t]$, the function $\Gamma_i$, that on input $y \in \B^d$ outputs $\Gamma(y)|_{[1,m_i]}$, is a $(\delta,\eps)$-averaging sampler, and each sample contains distinct symbols.
    \item On input $y \in \B^d$, $\Gamma(y)$ can be computed in time $\widetilde{O}(n) + \poly(\log(1/\delta),(\log n)/(\log m_1))$.
\end{enumerate}
\end{lemma}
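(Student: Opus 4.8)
The plan is to reuse the construction and analysis of \cref{lem:k-sampler1} almost verbatim, exploiting the fact (noted just above the lemma) that a prefix of a $(b,\gamma)$-wise independent string is again $(b,\gamma)$-wise independent and that truncating a distinct-samples tuple preserves distinctness. For $i\in[t]$ write $M_i=\lceil(1+\eps)m_i\rceil$, let $\mathsf{BI}_{b,\gamma}\colon\B^{d'}\to[n]^{M_t}$ be the $(b,\gamma)$-wise independent generator of \cref{lem:kwise}, and let $G=(V=\B^{d'},E)$ be the constant-degree spectral expander of \cref{cor:expander}. On input $y=(z,w)$ I would take the length-$\ell$ walk $z=v_0,v_1,\dots,v_\ell$ on $G$ driven by $w$, let $v_j$ be the first \emph{usable} vertex --- one for which $\mathsf{BI}_{b,\gamma}(v_j)|_{[1,M_i]}$ has at least $m_i$ distinct symbols for \emph{every} $i\in[t]$ --- and output $\Gamma(y)$ equal to the first $m_t$ distinct symbols appearing in $\mathsf{BI}_{b,\gamma}(v_j)$, falling back to $(1,2,\dots,m_t)$ if no walk vertex is usable. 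Then, whenever $v_j$ is usable, $\Gamma_i(y)=\Gamma(y)|_{[1,m_i]}$ is exactly $\mathrm{dedup}\!\left(\mathsf{BI}_{b,\gamma}(v_j)|_{[1,M_i]}\right)$ truncated to its first $m_i$ entries, i.e.\ precisely the object whose sampler guarantee is proven inside the proof of \cref{lem:k-sampler1} with $m,M$ there replaced by $m_i,M_i$.

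\textbf{Parameters.} I would take $b=\Theta\!\left(\log(1/\delta)/\log m_1\right)$, large enough that by \cref{lem:tail} even the shortest prefix $\mathsf{BI}_{b,\gamma}(\cdot)|_{[1,M_1]}$ is a $(\delta/c,\eps/c)$ sampler for a suitable constant $c$; this is exactly why the seed depends on $\log m_1$ (not $\log m_t$), since the confidence of \cref{lem:tail} is worst for the smallest block. I would also force $b$ to be at least a large constant times $1/\eps$, so that the ``too few distinct symbols'' failure probability $p=2^{-\Omega(\eps b)}$ from \cref{it:repeat} is bounded away from $1$, and set $\gamma=\min\{\eps^{b}\delta/c,\ n^{-b/2}/c\}$ to meet the hypotheses of \cref{lem:kwise} (note $\log(1/\gamma)=\Theta(b\log n)$). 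The hypothesis $m_t\le\frac{\eps}{16}n$ yields $M_i\le M_t\le\frac{\eps}{8}n$, so \cref{it:repeat} guarantees that each length-$M_i$ prefix keeps at least $M_i-(1+\eps)M_i^2/n\ge m_i$ distinct symbols with probability $\ge 1-p$. Finally I would amplify with a walk of length $\ell=O(\log m_1)$, so that the expander hitting property forces the probability that \emph{no} walk vertex is usable below $\delta/2$; the other $\delta/2$ of the confidence budget comes from the sampler analysis of the selected vertex, exactly as in \cref{lem:k-sampler1}. Since $\eps$ is constant, the resulting accuracy $O(\eps)$ rescales to $\eps$ by adjusting constants (the slack in the ``$16$'' of the hypothesis absorbing this).

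\textbf{Accounting.} Distinct samples are immediate: $\Gamma(y)$ is always a tuple of $m_t$ distinct symbols, hence so is each prefix $\Gamma_i(y)$. The seed length is $d=d'+\ell\log D=O(b\log n+\log(1/\gamma))+O(\log m_1)=O\!\left(\frac{\log n}{\log m_1}\log(1/\delta)\right)$, and the runtime is dominated by the $O(\log m_1)$ evaluations of $\mathsf{BI}_{b,\gamma}$ (each $\Otilde(n)$ by \cref{lem:kwise}) together with the $O(\log m_1)$ walk steps (each $\Otilde(d')$), for a total of $\Otilde(n+\log^2(1/\delta))$, identical in form to \cref{lem:k-sampler1}.

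\textbf{Main obstacle.} The one genuinely new issue is that a \emph{single} seed must make $\Gamma_i$ a good sampler \emph{simultaneously} for all $i$. The accuracy half of this comes for free from the prefix-closure of bounded independence; the delicate half is the usable-vertex predicate, strengthened to require enough distinct symbols in all $t$ relevant prefixes at once. Its failure probability is bounded by a union over those $t$ prefixes --- or, more frugally, over a $(1+\eps)$-multiplicative net of $O(\eps^{-1}\log(m_t/m_1))$ of them, with the remaining prefixes handled via monotonicity of the number of distinct symbols --- which costs only a lower-order additive $O(\log t)$ (respectively $O(\log\log n)$) in $b$, absorbed by the $\Theta(\log(1/\delta)/\log m_1)$ term in the regimes where the lemma is invoked. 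A second, milder point is that the slack must be multiplicative ($M_i=(1+\eps)m_i$): dedup-and-truncate then discards only an $\eps$-fraction of each prefix, so the accuracy degrades by just $O(\eps)$, whereas a common additive slack would destroy the guarantee for the smallest blocks.
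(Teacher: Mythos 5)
Your proposal takes essentially the same route as the paper. The paper presents \cref{lem:k-sampler2} as a one-paragraph consequence of \cref{lem:k-sampler1} plus the prefix-closure observation, without spelling out a construction or proof, so your filled-in version is a faithful elaboration. Crucially, you identify and address a point the paper leaves implicit: the walk's ``usable vertex'' predicate must be strengthened to require that \emph{every} prefix $\mathsf{BI}_{b,\gamma}(v_j)|_{[1,M_i]}$ contains at least $m_i$ distinct symbols, not merely the full $M_t$-tuple. Without this, the first $m_i$ entries of the deduplicated output need not come from the first $M_i$ positions of the bounded-independence string, so one cannot invoke prefix-closure of bounded independence to analyze $\Gamma_i$; your union bound over the $t$ prefixes (or over a multiplicative net) repairs this at the cost of an $O(\log t/\theta)$ additive increase in $b$, which is indeed absorbed since the applications use $t=O(\log n)$ and $m_1=n^{\Omega(1)}$. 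This matches the paper's stated caveat that ``the sampling parameters are determined by the different $m_i$-s (and in particular, $m_1$ should be large enough).'' One small arithmetic slip: the walk contributes $\ell\log D = O\left(\frac{\log(1/\delta)}{b}\cdot \theta b\right) = O(\theta\log(1/\delta))$, not $O(\log m_1)$, to the seed; this is still dominated by the $O\!\left(\frac{\log n}{\log m_1}\log\frac{1}{\delta}\right)$ term, so the claimed seed length stands.
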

\begin{proof}[Proof Sketch]
Let $\Gamma \colon \B^d \rightarrow [n]^{m_t}$ be the 
$(\delta,\eps)$-averaging sampler of \cref{lem:k-sampler1}, set with 
$\delta_\Gamma = \delta$ and $\eta = \eps/2$.
Fix some $i \in [t]$, and let $\Gamma_i$ be as described in the lemma's statement. Inspecting the proof of 
\cref{lem:k-sampler1}, and using the same notation, we see that the output of $\Gamma_i$ can be obtained by truncating the output of $\mathsf{BI}_b$ to the first $m'_i = (1+\eta)m_i$ bits and using the same expander random walk in order to get $m_i$ distinct symbols with high probability. Note that:
\begin{enumerate}
    \item (sampling) $X_{[1,m'_i]}$ is a $b$-wise independent distribution, whose sampling properties are determined by $m'_i$. We then need to make sure that the smallest $m'_i$ is large enough, by setting $b = O\left(\frac{\log(1/\delta_{\Gamma})}{\log m_1}\right)$, as in the proof of \cref{lem:k-sampler1}. 
    \item (distinctness)
    In order for the probability that $X_{[1,m'_i]}$ has at least $m_i$ distinct symbols to be large enough (recall that $X$ itself is over $m'_t$ symbols), $m_i \le m_t$ needs to be small enough compared to $n$. And indeed, we take $m_i \le \frac{\eta}{8}n = \frac{\eps}{16}n$.
\end{enumerate}
Once the two properties above are guaranteed, we can amplify the probability to sample a string with sufficiently many distinct symbols via expander random walks, exactly as in the proof of \cref{lem:k-sampler1}.
\end{proof}

\subsection{Nearly-Linear Time Condensers}\label{sec:fast-condensers}

In this section we argue that modern constructions of condensers with nearly-optimal parameters~\cite{GUV09,KT22} are computable in nearly-linear time, possibly after a reasonable one-time preprocessing step.
We will repeatedly use the fact that these condensers allow us to transform, in time $\Otilde(n)$ and using nearly-optimal seed length $O(\log(n/\eps))$, an arbitrary input $(n,k)$-source $X$ into an output $\eps$-close to a source $X'$ of length $m\approx k$ and min-entropy rate $1-\alpha$, for any constant $\alpha\in(0,1)$ of our choice.
We provide formal statements below, without being explicit about the precise relationship between the parameter $\alpha$ and the various constants in the seed length and entropy requirement for the sake of readability.
More precise control over the constants in these condensers can be found, for example, in~\cite[Theorems 1 and 2]{KT22}.

\subsubsection{The Lossless KT Condenser}\label{sec:kt-cond}

We first give the lossless KT condenser based on multiplicity codes over $\F_q$, due to Kalev and Ta-Shma \cite{KT22}.
This condenser works when the min-entropy requirement $k=\Omega(\log^2(n/\eps))$.
Then, we discuss how it can be converted to a condenser over bits with only a negligible loss in parameters.

\begin{theorem}[\protect{the lossless KT condenser over $\F_q$ \cite[adapted]{KT22}}]\label{thm:F_q-ktcond}
For any constant $\alpha \in (0,1)$ there exist constants $C_\alpha,C'_\alpha>0$ such that the following holds
for every $n \in \mathbb{N}$, $\eps>0$,
$k \ge C_\alpha\log^{2}(n/\eps)$, and a prime power $q=p^r$ with $p\geq n$ satisfying $ \frac{C'_\alpha}{2} \log(n/\eps)\leq \log q \leq  C'_\alpha \log(n/\eps)$. 
There exists a strong $(k,k'=k+\log q,\eps)$-condenser
\[
    \mathsf{KTCond}' \colon \F_q^{n'} \times \F_q \rightarrow \F_q^{m'}
\]
where $n' = \left\lceil \frac{n}{\lfloor \log q\rfloor} \right\rceil$ and $m' \leq (1+\alpha)\frac{k}{\log q}$.
Moreover, given $p$, an irreducible polynomial over $\F_p$ of degree $r$, $x \in \F_q^n$, and $y \in \F_q$, the output $\mathsf{KTCond}'(x,y)$ can be computed in $\widetilde{O}(n'\log q)=\Otilde(n)$ time.
\end{theorem}
\begin{proof}
We only need to establish the construction's runtime. 
Given $x \in \F_q^{n'}$ and $y \in \F_q$, 
interpret $x$ as a polynomial $f\in \F_q[X]$ of degree at most $n'-1$.  
The output  $\mathsf{KTCond}(x,y)$ is the sequence of derivatives
\[
    \left(f(y),f'(y),\ldots,f^{(m'-1)}(y) \right).
\]
By \cref{lemma:fast}, computing the derivatives takes $\widetilde{O}(n') \cdot \log q = \widetilde{O}(n'\log q)=\Otilde(n)$ time  since we are also given $p$ and an irreducible polynomial over $\F_p$ of degree $r$. 
The rest of the auxiliary operations 
are negligible compared to computing the derivatives.
\end{proof}

\begin{remark}[the KT condenser preprocessing step]\label{remark:KT-preproc}
    \em
    The KT condenser must be instantiated with an appropriate field size $q=\poly(n/\eps)$ and requires performing operations over $\F_q$.
    Hence, as stated in \cref{thm:F_q-ktcond}, we need to know the characteristic $p$ of $\F_q$ and an irreducible polynomial over $\F_p$ of the appropriate degree.
    Since these objects only need to be generated once for a given set of extractor parameters $(n,k,\eps,m)$, we view this is as a one-time preprocessing step.
    Our choice of $q$ influences the complexity of constructing a model of $\F_q$.
    For example, if we aim for prime field size, then we need to generate a prime $q=\poly(n/\eps)$.
    We know how to do that in randomized time $\polylog(n/\eps)$ (e.g., see~\cite[Section 9.4]{Sho05}).
    We believe that this is already reasonable when seen as a one-time preprocessing step, because generating large primes is a well-studied problem and practical randomized algorithms exist.
    
    Nevertheless, we can do even better and avoid the preprocessing step if $\eps$ is not very small by exploiting the fact that the KT condenser works with any prime power $q=p^r=\poly(n/\eps)$ for any prime $p\geq n$~\cite[Theorem 3]{KT22}.
    First, a prime $p\in [n,2n]$ is guaranteed to exist by Bertrand's postulate, and we can find it deterministically in time $\Otilde(n)$.\footnote{The procedure that sequentially tests the primality of all integers in $[n,2n]$ is already sufficiently quick. Each primality test takes time $\polylog(n)$ using, for example, the AKS primality test~\cite{AKS04}, and so the overall complexity is $\Otilde(n)$.}
    Second, a deterministic algorithm of Shoup~\cite[Theorem 3.2]{Sho90b} finds an irreducible polynomial of degree $r$ over $\F_p$ in time $\Otilde(\sqrt{p}\cdot r^{4+\delta})$ for any constant $\delta>0$.
    Since $p\leq 2n$ and $r=\log_p(\poly(n/\eps))=O(\log(n/\eps))$, we conclude that Shoup's algorithm runs in time $\Otilde(n)$ provided that $\eps\geq 2^{-C n^{\gamma}}$ for any constant $\gamma<1/8$ and a sufficiently large constant $C>0$.
    In particular, it suffices to have, say, $\eps\geq 2^{-C n^{0.1}}$.

    In sum, the preprocessing step for the KT condenser requires either (1) randomized $\polylog(n/\eps)$ time, or (2) deterministic $\Otilde(n+\sqrt{n}\cdot \log^{4+\delta}(1/\eps))$ time.
    In particular, this leads to the distinction between \cref{it:case1,it:case2} in \cref{thm:rec-ext-intro}.
\end{remark}

The condenser from \cref{thm:F_q-ktcond} receives and outputs vectors over $\F_q$.
We would like to have a version of this condenser that works with vectors over bits.
Towards that end, for completeness, we formally state and prove a (standard) transformation.
\begin{lemma}\label{lem:q-to-2}
    Suppose that $\Cond_0\colon \F_q^{n'}\times \F_q\to\F_q^{m'}$ is a strong $(k,k',\eps)$-condenser.
    Then, there exists $\Cond\colon \bits^n\times \bits^\ell \to\bits^m$ with $n=n'\cdot \lfloor \log q\rfloor$, $\ell=\lfloor \log q\rfloor$, and $m=m'\cdot\lceil\log q\rceil$ that is a strong $(k+\ell,k',3\sqrt{\eps})$-condenser.
    Furthermore, if $\Cond_0$ is computable in time $T$ then $\Cond$ is computable in time $T+\Otilde(n)$.
\end{lemma}
\begin{proof}
    The condenser $\Cond$ works as follows.
    Given an input $x \in \B^n$, we interpret it as an element of $\F_q^{n'}$ by mapping each consecutive block of $\lfloor \log q \rfloor$ bits to an $\F_q$-element and discarding up to $\lceil \frac{n}{\lfloor \log q\rfloor}\rceil\cdot \lfloor \log q\rfloor - n \leq \lfloor \log q\rfloor$ bits at the end of $x$.
    Denote this mapping by $\psi\colon \B^n \to \F_q^{n'}$.
    Given a seed $y\in\B^\ell$, we interpret it as an $\F_q$-element, since $\ell=\lfloor \log q\rfloor$.
    Denoting this mapping by $\phi \colon \F_2^{\ell} \rightarrow \F_q$, we have that $\minH(\phi(U_{\ell})) \ge \ell$.

    Equipped with an $\F_q^{n'}$-element $x$, and an $\F_q$-element $y$, we are ready to compute $\Cond_0(x,y)\in\F_q^{m'}$. 
    Mapping the output into bits is done similarly, by writing the binary encoding of each $\F_q$-element using $\lceil \log q \rceil$ bits.
    There are a few possible losses along the way:
\begin{enumerate}
    \item We get an $(n,k)$ source $X$, but feed the condenser a source $X' \sim \F_q^{n'}$ with entropy at least $k-\lfloor \log q \rfloor=k-\ell$.
    \item We do not use a uniform seed, but rather a seed that lacks at most $1$ bit of entropy.
    \item Mapping $\Cond_0(\psi(x),\phi(y))$ to the binary $\Cond(x,y)$ may increase the output by $m'$ non-entropic bits. This simply increases the output length slightly, but not the output entropy.
\end{enumerate}
To handle (1), we simply increase the entropy of the source. To handle (2), we prove that strong condensers work with entropy-deficient random seeds, as long as the error is good enough.
\begin{claim}\label{claim:def-seed}
Let $\Cond \colon \F_q^n \times \F_q \rightarrow \F_q^m$ be a strong $(k,k',\eps)$ condenser, and let $Y$ be a random variable with min-entropy $\log(q) - g$. Then, $(Y, \Cond(X,Y))$ is $\eps'$-close to a random variable with min-entropy $k'$, where $\eps' = (2^{g}+1)\sqrt{\eps}$.
\end{claim}
\begin{proof}
Let $X$ be a random variable over $\F_q^n$ with min-entropy $k$. By an averaging argument, we know that there exists a set $\mathbf{B} \subseteq \F_q$ of size $|\mathbf{B}| \le \sqrt{\eps} \cdot q$ such that for any $y \notin \mathbf{B}$ it holds that $\Cond(X,y)$ is $\sqrt{\eps}$-close to a random variable with min-entropy $k'-\log q$. But now,
\[
\Pr[Y' \in \mathbf{B}] = \sum_{y \in \mathbf{B}}\Pr[Y=y] \le |\mathbf{B}| \cdot 2^{-(\log(q)-g)} \leq \sqrt{\eps} \cdot 2^{g}. \qedhere
\]
\end{proof}
In our case we invoke \cref{claim:def-seed} with $g=1$, so we can simply set the new error parameter $\eps'$ to be $\eps' =  \eps^2/3$, where $\eps$ is the designated distance to high min-entropy. A bit more formally, guaranteeing that $(U_d, \Cond(X,U_d))$ is $\eps'$-close to entropy $k'$ implies that $(\phi(U_d), \Cond(X,\phi(U_d)))$ is $\eps$-close to entropy $k'$ as well. 
Finally, for the running time claim note that computing the mappings $\psi$ and $\phi$ and converting the output in $\F_q^{m'}$ to $\B^m$ can be done in time $\Otilde(n)$.
\end{proof}

Combining \cref{thm:F_q-ktcond} and \cref{lem:q-to-2} yields the following.
\begin{theorem}[the KT condenser over bits \cite{KT22}]\label{thm:ktcond}
    For any constant $\alpha \in (0,1)$ there exist constants $C_\alpha,C'_\alpha>0$ such that the following holds
    for every $n \in \mathbb{N}$, $\eps>0$,
    and $k \ge C_\alpha\log^{2}(n/\eps)$. 
    There exists a strong $(k,k',\eps)$-condenser
    \[
        \mathsf{KTCond} \colon \B^n \times \B^\ell \rightarrow \B^m
    \]
    where $\ell \leq C'_\alpha\log(n/\eps)$, $m \leq (1+\alpha)k$, and $k'=k\geq \ell + (1-\alpha)m$.
    Moreover,
    \begin{itemize}
        \item Given $x \in \B^n$ and $y \in \B^{\ell}$, a prime power $q=p^r$ in $[2^\ell,2n\cdot 2^{\ell}]$ with $p\in [n,2n]$ and an irreducible polynomial over $\F_p$ of degree $r$, the output $\mathsf{KTCond}(x,y)$ can be computed in $\widetilde{O}(n)$ time.

        \item Given only $x \in \B^n$ and $y \in \B^{\ell}$, $\mathsf{KTCond}(x,y)$ can be computed in $\widetilde{O}(n+\sqrt{n}\cdot \log(1/\eps)^5)$ time.
        Note that this is $\Otilde(n)$ when $\eps\geq 2^{-C n^{0.1}}$.
    \end{itemize}

    In particular, if $\eps' = \sqrt{\eps}$ and $C_\alpha$ is chosen large enough compared to $C'_\alpha$, then for all $(n,k)$-sources $X$ and a $(1-\eps')$-fraction of seeds $y$ it holds that $\KTCond(X,y)\approx_{\eps'} Z_y$, where $Z_y$ is an $(m,k'-\ell\geq (1-\alpha)m)$-source.
\end{theorem}
\begin{proof}
    Fix a target $\alpha\in (0,1)$, and let $\beta\in(0,1)$ be a constant to be set appropriately small depending on $\alpha$.
    We invoke \cref{thm:F_q-ktcond} with $\beta$ in place of $\alpha$
    and $q$ a prime such that $\frac{C'_\beta}{2}\log(n/\eps)\leq \log q \leq C'_\beta\log(n/\eps)$
    to get $\KTCond'\colon\F_q^{n'}\times \F_q\to\F_q^{m'}$, a strong $(k-\log q,k'=k,\eps^2/3)$-condenser with $n' = \left\lceil \frac{n}{\lfloor \log q\rfloor} \right\rceil$ and $m' \leq (1+\beta)\frac{k}{\log q}$.
    Then, we apply \cref{lem:q-to-2} to $\KTCond'$. This yields our $\KTCond\colon \B^n\times \B^\ell\to\B^m$, a strong $(k,k'=k,\eps)$-condenser with $\ell=\lfloor \log q\rfloor = \Theta_\beta(\log(n/\eps))$ and
    \begin{equation*}
        m = m'\cdot \lceil\log q\rceil \leq (1+\beta)k+m' \leq (1+\beta)k + (1+\beta)k/\ell \leq (1+2\beta)k\leq (1+\alpha)k,
    \end{equation*}
    provided that $\beta\leq \alpha/2$.
    Furthermore, if $C_\beta$ is set sufficiently larger than $C'_\beta$, then $k'-\ell = k-\ell \geq (1-\beta)k$, and so
    \begin{equation*}
        \frac{k'-\ell}{m}\geq \frac{(1-\beta)k}{(1+2\beta)k}\geq 1-\alpha,
    \end{equation*}
    provided that $\beta>0$ is sufficiently smaller than $\alpha$.
    The first part of the theorem statement follows if we set the new $C_\alpha$ and $C'_\alpha$ to be $C_\beta$ and $C'_\beta$, respectively.
    For the running time, note that $\KTCond'$ is computable in time $\Otilde(n)$ given $p$ and an irreducible polynomial over $\F_p$ of degree $r$, and the transformation in \cref{lem:q-to-2} still runs in time $\Otilde(n)$.
    The remaining claim about the running time given only $x$ and $y$ was discussed in \cref{remark:KT-preproc}.
    
    To see the ``In particular'' part of the theorem statement, fix an $(n,k)$-source $X$ and note that $(Y,\KTCond(X,Y))\approx_\eps (Y, Z)$ for some $Z$ such that $\minH(Y, Z)\geq k'$.
    Let $Z_y = (Z|Y=y)$.
    Then, an averaging argument gives that for a $(1-\sqrt{\eps})$-fraction of seeds $y$ we have $\KTCond(X,y)\approx_{\sqrt{\eps}} Z_y$.
    Since $Y$ is uniformly random over $\B^\ell$, we get that $\minH(Z_y)\geq k'-\ell$, as desired.
\end{proof}

\subsubsection{The Lossy RS Condenser}

The downside of \cref{thm:ktcond} is that it requires the entropy in the source to be $\Omega(\log^{2}(n/\eps))$, instead of the optimal $\Omega(\log(n/\eps))$.
Instead, we can use a lossy condenser\footnote{Our extractor will lose a small constant fraction of the entropy, so losing a small constant fraction of the entropy in the condensing step will not make much difference.} based on Reed--Solomon codes. Unfortunately, this comes at the expense of computing a primitive element of a field of size $\poly(n/\eps)$, which we do not know how to do in nearly-linear time in $n$ for arbitrary $\eps$-s. 
As in \cref{sec:kt-cond}, we consider it a one-time preprocessing step, as it does not depend on the inputs to the condenser.
Luckily, we have freedom in choosing the field size, as long as it is large enough.
Therefore, if we choose the field size to be prime (as opposed to a power of $2$ as in~\cite{GUV09}), then we can implement this one-time preprocessing step in time $\polylog(n/\eps)$ using randomness, as discussed in \cref{remark:preproc3}.

We first state the lossy condenser over a prime field $\F_q$ and argue about its running time.
This condenser can be converted to a condenser over bits with only a negligible loss in parameters as in \cref{sec:kt-cond}.

\begin{theorem}[\protect{the lossy RS condenser over $\F_q$ \cite[adapted]{GUV09}}]\label{thm:F_q-rscond}
For any constant $\alpha \in (0,1)$ there exist constants $C_\alpha,C'_\alpha>0$ such that the following holds for every $n\in\N$, $\eps>0$, $k\geq C_\alpha \log\left(n/\eps\right)$, and $q$ a prime satisfying $\frac{C'_\alpha}{2} \log(n/\eps)\leq \log q \leq C'_\alpha \log(n/\eps)$.
There exists a strong $(k,k',\eps)$-condenser
\[
    \mathsf{RSCond}' \colon \F_q^{n'} \times \F_q \rightarrow \F_q^{m'}
\]
where $n' = \left\lceil \frac{n}{\lfloor \log q\rfloor} \right\rceil$, $m' = \lceil k/\log q\rceil$, and $k' \ge (1-\alpha)k$. 
Moreover, given $x \in \F_q^{n'}$, $y \in \F_q$, 
and a primitive element $\zeta$ for $\F_q$,
the output 
$\mathsf{RSCond}'(x,y)$ can be computed in time $\widetilde{O}(n)$.

\end{theorem}
\begin{proof}
    Given $x \in \F_q^{n'}$ and $y \in \F_q$, similarly to \cref{thm:ktcond}, interpret $x$ as a univariate polynomial $f\in\F_q[X]$ of degree at most $n'-1$.
    Let $\zeta$ be the primitive element of $\F_q$ given to us.
    The output $\RSCond'(x,y)$ is the sequence of evaluations
    \[
        \left(f(y),f(\zeta y),\ldots,f(\zeta^{m'-1}y) \right) \in \F_q^{m'}.
    \]
    The correctness proof, as well as the exact choice of parameters, are given
    in \cite[Section 6.1]{GUV09}.
    We focus on bounding the runtime.
    Computing the evaluation points $y,\zeta y,\ldots,\zeta^{m'-1}y$ can be done naively in time $m' \cdot M_{q}^{\mathsf{b}}(1) = \widetilde{O}(n'\log q)=\Otilde(n)$. Then, using \cref{lemma:fast}, the evaluation can be done in time $\widetilde{O}(n') \cdot \log q = \widetilde{O}(n'\log q)=\Otilde(n)$ as well.
\end{proof}

\begin{remark}[complexity of the RS condenser preprocessing]\label{remark:preproc3}
\em
We now discuss the complexity of the preprocessing step required by the RS condenser, 
which corresponds to finding a primitive element of a field $\F_q$ for a prime $q\leq \poly(n/\eps)$.
We do not know any algorithms for finding primitive elements of $\F_q$ running in time $\polylog(q)$, unless we have access to the prime factorization of $q-1$.\footnote{If we have access to the prime factorization of $q-1$, then we can find a primitive element of $\F_q$ in time $\polylog(q)$ using randomness by repeatedly sampling a uniformly random element $\alpha$ of $\F_q$ and checking whether it is primitive by seeing whether $\alpha^{\frac{q-1}{p}}\neq 1$ for every prime factor $p$ of $q-1$. An alternative algorithm is analyzed in~\cite[Section 11.1]{Sho05}.}
However, the RS condenser from \cite{GUV09} can be instantiated with \emph{any} field $\F_q$ of appropriately large order $q=\poly(n/\eps)$. We can leverage this to speed up the preprocessing considerably by moving away from the fields of characteristic $2$ used in~\cite{GUV09}.
More precisely, we can focus on prime $q$ and follow an approach used in cryptography for generating public parameters (prime/primitive element pairs $(q,g)$) for cryptographic schemes based on discrete logarithms over $\Z^*_q$, outlined in Shoup's excellent book~\cite[Section 11.1]{Sho05}.
This leads to a randomized algorithm running in time $\polylog(n/\eps)$ and succeeding with high probability, that only needs to be executed once for a given set of extractor parameters $(n,m,k,\eps)$.
More precisely, we can first efficiently sample the prime factorization of a uniformly random number $r$ in the set $\{1,\dots,L\}$ for an appropriate upper bound $L=\poly(n/\eps)$, following approaches of Bach~\cite{Bac88} or Kalai~\cite{Kal03} (as discussed in~\cite[Section 9.6]{Sho05}). Then, we check whether $r\geq L/2$ and $q=r+1$ is prime, and repeat if not.
Since roughly a $\frac{1}{\polylog(n/\eps)}$ fraction of such $q$'s is prime, we will succeed with high probability after $\polylog(n/\eps)$ trials.
After the process above we hold a suitably large prime $q$ and the prime factorization of $q-1$, which allows us to find a primitive element of $\F_q$ in randomized time $\polylog(q)=\polylog(n/\eps)$.
\end{remark}

Finally, combining \cref{thm:F_q-rscond} and \cref{lem:q-to-2} allows us to get a version of the ``prime $q$'' RS condenser over bits.
\begin{theorem}[the lossy RS condenser over bits, \cite{GUV09}]\label{thm:rscond}
    For any constant $\alpha\in(0,1)$ there exist constants $C_\alpha,C'_\alpha>0$ such that the following holds for every $n\in\N$, $\eps>0$, and $k\geq C_\alpha\log(n/\eps)$.
    There exists a strong $(k,k',\eps)$-condenser
    \begin{equation*}
        \RSCond\colon \bits^n\times \bits^\ell\to\bits^m
    \end{equation*}
    where $\ell \leq C'_\alpha \log(n/\eps)$, $k\leq m\leq (1+\alpha)k$, and $k'\geq (1-\alpha)m$.
    Moreover, given $x\in\bits^n$, $y\in \bits^\ell$, and a primitive element $\zeta$ for $\F_q$ with $q$ a prime in $[2^\ell, 2^{\ell+1}]$, the output $\RSCond(x,y)$ can be computed in time $\Otilde(n)$.

    In particular, if $\eps' = \sqrt{\eps}$ and $C_\alpha>0$ is chosen large enough compared to $C'_\alpha$, then for all $(n,k)$-sources $X$ and a $(1-\eps')$-fraction of seeds $y$ it holds that $\RSCond(X,y)\approx_{\eps'} Z_y$, where $Z_y$ is an $(m,k'-\ell\geq (1-2\alpha)m)$-source.
\end{theorem}
\begin{proof}
    We argue about the output length and the output min-entropy rate. The rest is analogous to the proof of \cref{thm:ktcond}.
    Fix a target $\alpha\in(0,1)$. We invoke \cref{thm:F_q-rscond} with an appropriately small constant $\beta\in(0,1)$ in place of $\alpha$.
    Regarding the output length $m$, note that $m= m' \cdot\lceil \log q\rceil \geq \frac{k}{\log q}\cdot \log q =k$, and that
    \begin{equation*}
        m= m' \lceil \log q\rceil\leq m'\log q+m' \leq k+\log q + m' \leq k + (\ell + 1) + (k/\ell+1) = \left(1+\frac{1}{\ell}+ \frac{\ell+2}{k}\right)k.
    \end{equation*}
    The quantity on the right hand side can be made at most $(1+\beta)k\leq (1+\alpha)k$ by setting $\beta<\alpha$ and $C_\beta$ to be sufficiently larger than $C'_\beta$, and by the lower bound on $\ell=\lfloor \log q\rfloor$ in \cref{thm:F_q-rscond}.
    Regarding the output entropy $k'$, since \cref{lem:q-to-2} implies a penalty of $\ell$ bits in the input min-entropy, from \cref{thm:F_q-rscond} instantiated with $\beta$ and \cref{lem:q-to-2} we get the guarantee that
    \begin{equation*}
        k' \geq (1-\beta)(k-\ell) \geq (1-\beta)k-\ell \geq (1-2\beta)k \geq \frac{(1-2\beta)m}{1+\beta} \geq (1-\alpha)m
    \end{equation*}
    provided that $C_\beta$ is sufficiently larger than $C'_\beta$ and that $\beta\leq \alpha/3$.
    
    The ``In particular'' part of the theorem statement follows analogously to that of \cref{thm:ktcond}, since $m\geq k$ and we can assume that $\ell/k\leq \alpha$ by setting $C_\alpha$ to be sufficiently large compared to $C'_\alpha$.
\end{proof}

\subsubsection{Towards removing preprocessing?}\label{sec:remove-preproc}

We discuss an approach, suggested by Jesse Goodman, towards removing the preprocessing steps required for the KT and RS condensers, or at least expanding the range of parameters for which preprocessing is not required, and some barriers we face when trying to fully implement this strategy.

\paragraph{KT condenser.}
As discussed in \cref{remark:KT-preproc}, the KT condenser in \cref{thm:ktcond} requires a one-time preprocessing step independent of the source and the seed whenever, roughly, $\eps\leq  2^{-C n^{0.12}}$.
If this holds, a sufficient preprocessing consists of generating a prime $q=\poly(n/\eps)$, which can be done in randomized time $\polylog(n/\eps)$.
This raises the following possibility: perhaps we can derandomize this algorithm so that it only uses $O(\log(n/\eps))$ bits of randomness without hurting the running time too much.
If this is the case, then the randomness used by the algorithm can be absorbed into the seed, and we get a condenser running in time $\Otilde(n+\polylog(n/\eps))$, without preprocessing.
When $\eps$ is not too small (depending on the exponent of the $\polylog(n/\eps)$ term), this is $\Otilde(n)$.

We can realize this approach by combining the randomized prime generation algorithm with an averaging sampler.
A similar strategy was used in~\cite{BIW06}.
We sketch how this can be done.
Let $N=\poly(n/\eps)$.
The randomized prime generation algorithm independently samples $O(\log N)$ uniformly random integers $q\in [N/2,N]$, and checks whether at least one of these numbers is prime.
Primality can be tested in deterministic time $\Otilde(\log^6 N)$ using an improved variant of the AKS primality test~\cite{Len02}, and, by the prime number theorem, with high enough probability there will be at least one prime number among the samples.
One way to derandomize this algorithm is by using a $(\gamma,\theta)$-averaging sampler $\Samp\colon \bits^r\to[N]^m$ with accuracy $\theta=\Theta(1/\log N)=\Theta(1/\log(n/\eps))$ and confidence $\gamma=\eps$ to generate $m$ candidate primes, and then run the AKS primality test on each candidate.
Denoting by $T$ the runtime of $\Samp$, this procedure runs in time $\Otilde(T+m\cdot \log^6(n/\eps))$, uses $r$ bits of randomness, and generates the desired prime except with probability at most $\eps$.

With some hindsight, it turns out that this running time is not enough to beat the runtime of the deterministic preprocessing from \cref{remark:KT-preproc}, even ignoring $T$ and using a sampler with optimal sample complexity $m$. 
However, it is possible to improve on the runtime by replacing the AKS primality test with a faster probabilistic test, such as Miller-Rabin~\cite[Section 10.2]{Sho05}, and use a second sampling round to derandomize these probabilistic tests.
More precisely, one iteration of the Rabin-Miller algorithm for testing primality of an integer $q$ samples $\alpha$ uniformly at random from $[q]$ and checks whether $\alpha$ passes a certain test in time $\Otilde(\log^2 q)$.\footnote{Shoup~\cite[Section 10.2]{Sho05} states that the test is computable in time $\Otilde(\log^3 q)$ via repeated squaring, but more sophisticated techniques yield the $\Otilde(\log^2 q)$ bound~\cite{Nar14}.}
When $q$ is prime all $\alpha$-s pass the test, while for $q$ composite $\alpha$ fails the test with probability at least $3/4$.
We use a $(\gamma',\theta')$-averaging sampler $\Samp'\colon \bits^{r'}\to[N]^{m'}$ with $\theta'=1/4$ and $\gamma'=\eps/m$ to generate $\alpha_1,\dots,\alpha_{m'}$, which will be used to run Miller-Rabin tests on the prime candidates $q_1,\dots,q_m$ generated by the first sampler.\footnote{Note that $\Samp'$ outputs samples from $[N]$, while the guarantees for the Miller-Rabin test on $q$ hold for $\alpha$ uniformly random over $[q]$. But since each $q_i\in[N/2,N]$, taking the test to be $\alpha\Mod q_i$ for $\alpha$ uniformly random over $[N]$ only decreases the failure probability from at least $3/4$ to at least $1/2$.}
Note that for any fixed composite $q_i$, the probability that all Miller-Rabin tests $\alpha_1\Mod q_i,\dots,\alpha_{m'}\Mod q_i$ pass is at most $\eps/m$.
By a union bound over $q_1,\dots,q_m$, the probability that this holds for at least one such $q_i$ is at most $m\cdot \eps/m=\eps$.
Therefore, if $\Samp$ and $\Samp'$ run in time $T$ and $T'$, respectively, this procedure runs in time $\Otilde(T+T'+m\cdot m'\cdot \log^2(n/\eps))$, uses $r+r'$ bits of randomness, and generates the desired prime except with probability at most $2\eps$.

We already know from \cref{remark:KT-preproc} that the KT condenser does not require preprocessing to run in $\Otilde(n)$ time when $\eps>2^{-C n^{0.12}}$, so it is relevant to explicitly work out the best possible improvement afforded by the approach above, assuming we are aiming for $\Otilde(n)$ runtime.
We can instantiate $\Samp$ and $\Samp'$ with the nearly-optimal efficient averaging samplers of Xun and Zuckerman~\cite[Theorem 2]{XZ24}.
Under the choices of $N$, $(\theta,\gamma)$, and $(\theta',\gamma')$ above, for any constant $\delta>0$, we can instantiate $\Samp$ with randomness complexity $r=O_\delta(\log(n/\eps))$ and sample complexity $m=O(\log^{3+\delta}(n/\eps))$, and $\Samp'$ with randomness complexity $r'=O_\delta(\log(n/\eps))$ and sample complexity $m'=O(\log^{1+\delta}(n/\eps))$.

Therefore, using these samplers,  for any constant $\delta>0$, the prime-generating procedure above uses $O_\delta(\log(n/\eps))$ bits of randomness, which can be absorbed into the seed of the condenser, runs in time $\Otilde(T+T'+\log^6(n/\eps))$, with $T$ and $T'$ the runtimes of $\Samp$ and $\Samp'$, and fails with probability at most $2\eps$, which can be absorbed into the error of the condenser.
One can verify that $T$ and $T'$ are $O(\log^6(n/\eps))$ in this regime.
Therefore, the time bound above is $\Otilde(n)$ when $\eps> 2^{-Cn^{\gamma}}$ for any constant $\gamma<1/6$, 
which improves on the simpler $\eps>2^{-C n^{0.12}}$ bound from \cref{remark:KT-preproc}.\footnote{In contrast, using the AKS primality test would lead to running time $\Otilde(\log^9(n/\eps))$, which is only $\Otilde(n)$ when $\eps> 2^{-Cn^{1/9}}$, and therefore worse than the bound from \cref{remark:KT-preproc}.}
We also note that we cannot hope to improve on this bound by picking a better averaging sampler, since any $(\theta,\gamma)$-averaging sampler must have sample complexity $m=\Omega(\log(1/\gamma)/\theta^2)$~\cite{CEG95}.

\paragraph{RS condenser.}
Unlike the KT condenser, the RS condenser from \cref{thm:rscond} always requires preprocessing.
Therefore, arguably the most interesting direction in this discussion would be to establish a statement of the form ``if $\eps\geq 2^{-n^\gamma}$ for some constant $\gamma>0$, then the RS condenser does not require preprocessing to run in time $\Otilde(n)$'', in analogy with the statement we already have for the KT condenser.
However, it is not clear to us how to implement a sampler-based strategy in this case.
We elaborate on this below.

The preprocessing algorithm considered in \cref{remark:preproc3} has two stages: (1) repeatedly sample an integer $q$ alongside its prime factorization uniformly at random from $[N/2,N]$, with $N = \poly(n/\eps)$, until $q+1$ is prime, as in~\cite[Section 9.6]{Sho05}, and (2) find a primitive element of $\F_q$.
We focus on the first stage. 
The first part of this stage requires sampling a ``random non-increasing sequence'' in $[1,N]$ (see~\cite[Section 9.5 (Algorithm RS)]{Sho05}). 
However, this procedure requires too much randomness, which blows up the seed length of the sampler. Indeed, following the analysis in~\cite[Section 9.5.2]{Sho05} gives that Algorithm RS requires $\Theta(\log^2 N)=\Theta(\log^2(n/\eps))$ bits of randomness in expectation. More precisely, let $O_i$ be the random variables denoting the number of times the integer $i$ appears in the non-increasing sequence, as defined in~\cite[Section 9.5.2]{Sho05}.
There, it is shown that $\E[O_i]=\frac{1}{i-1}$ for all integers $i\in [2,N]$.
Every time $i$ is sampled requires using at least $\log i$ bits of randomness in that iteration of Algorithm RS.
Therefore, the expected number of bits of randomness required is at least
\begin{equation*}
    \sum_{i=1}^N \log i \cdot \E[O_i] \geq \sum_{i=1}^N \frac{\log i}{i} \geq \int_1^N \frac{\log x}{x}\; dx - O(1) = \Theta(\log^2 N).
\end{equation*}
Therefore, the sampler must output samples of bitlength $\Omega(\log^2(n/\eps))$, and so also requires a seed of length $\Omega(\log^2(n/\eps))$.

\medskip

We note that the barriers outlined above are specific to the preprocessing steps we considered in this work, and to the use of samplers. It may be possible to improve on the current results by considering alternative preprocessing steps and by using other pseudorandom objects. We leave this as a natural direction for future work.

\section{Nearly-Linear Time Extractors with Order-Optimal Seed Length}

\subsection{A Non-Recursive Construction}\label{sec:non-rec}

In this section, we use the sampler based on bounded independence from \cref{sec:bounded-indep-sampler} and the nearly-linear time KT condenser from \cref{sec:fast-condensers}
to construct a seeded extractor with order-optimal seed length $O(\log(n/\eps))$ computable in time $\Otilde(n)$.
We remark that the goal of this section is to give a low-error, relatively simple (and in particular, non-recursive) construction. Among the non-recursive ``sample-then-extract'' extractor constructions, two notable ones are \cite{NZ96} and \cite{Zuc96}: The \cite{NZ96} construction uses poly-logarithmic seed (see \cref{fn:nz}); The \cite{Zuc96} construction bears some resemblance to our construction and also utilizes sub-sampling, but only works in the high-error regime, namely has seed length $\Theta(\eps^{-2}+\log n)$. 
Constructions in that framework that support low error and short seed include \cite{SZ99} and followup works such as \cite{GUV09,Zuc97}. We adapt those recursive constructions in \cref{sec:recursive} to get our nearly linear time construction of \cref{thm:rec-ext-intro}. Finally, we note that other block source conversion techniques have been used for constructing pseudorandomness primitives, such as in \cite{T02,DKSS13}, but they are less relevant in our context.

In a nutshell, our extractor proceeds as follows on input an $(n,k)$-source $X$:
\begin{enumerate}
    \item\label{it:condense} Using a fresh seed, apply the lossless KT condenser from \cref{thm:ktcond} to $X$.
    This yields an $(n',k)$-source $X'$ of length $n'\approx k$ and constant entropy rate $\delta$ which can be arbitrarily close to $1$.
    Note that in the parameter regime considered in this section the KT condenser does not require the one-time preprocessing step.

    \item\label{it:sampling1} Using the fact that $X'$ has high min-entropy rate, use the bounded-independence sampler from \cref{lem:k-sampler2} to sample subsources from $X'$ using a fresh seed.
    Specific properties of the bounded-independence sampler allow us to obtain a block source $Z=(Z_1, Z_2,\dots, Z_t)$ with a seed of length only $O(\log(1/\eps))$.
    The number of blocks is $t=O(\log n)$ and the blocks $Z_i$ have geometrically \emph{increasing} lengths, up to an $n^\alpha$ length threshold.

    \item\label{it:sampling2} Now, to prepare for the hash-based iterative extraction, we need to make our blocks \emph{decreasing}. Again, using a short seed, of length $O(\log(n/\eps))$, we transform
    $Z$ into $S = (S_1, \dots, S_t)$, where the blocks are now geometrically decreasing. The blocks lengths will vary from $n^{\beta_1}$ to some $n^{\beta_2}$, for some constants $\beta_1 > \beta_2$. (Here, we do not use the ``prefix samplers'' property of \cref{lem:k-sampler2}, so transforming $Z$ into $S$ can be done using an existing sampler, and specifically the one from \cref{lem:extendedSampler}.)

    \item\label{it:sampling3} Using a fresh seed, apply the fast hash-based extractor from \cref{lem:fasthash} to perform  block source extraction from $S$. Noting that the first block has length $n^{\Omega(1)}$, the block source extraction only outputs $n^{\Omega(1)}$ bits. 
    
    While the seed length of \cref{lem:fasthash} requires at least $m$ random bits,
    we are still able to use only $O(\log(n/\eps))$ bits, since we do not output $n^{\Omega(1)}$ bits already at the beginning of the iterative extraction process, but instead first output logarithmically many bits, and then gradually increase the output length.
\end{enumerate}
Finally, once we have extracted $n^{\Omega(1)}$ random bits, outputting almost all the entropy can be done using standard techniques (see \cref{sec:improving}).
These steps will culminate in the following theorem.

\begin{theorem}[non-recursive construction]\label{thm:main-nonrecursive}
There exists a constant $c \in (0,1)$ such that for every positive integers $n$ and $k \le n$, any $2^{-k^{c}} \le \eps \le \frac{1}{n}$, and any constant $\eta \in (0,1)$, there exists a strong $(k,\eps)$ extractor
\[
\Ext \colon \B^n \times \B^d \rightarrow \B^{m},
\]
where $d = O(\log(n/\eps))$, and 
$m = (1-\eta) k$. Moreover, given inputs $x \in \B^n$
and $y \in \B^d$, we can compute $\Ext(x,y)$ in time $\widetilde{O}(n)$.
\end{theorem}

\subsubsection{\cref{it:sampling1}: Generating the block source}

Because of the initial condensing step, we will assume from here onwards that our input source $X$ is an $(n,k=\delta n)$-source with constant $\delta$.
In order to generate the desired block source, 
we first use a fresh seed $Y$ as input to an appropriate instantiation of the bounded-independence sampler $\Gamma$ from \cref{lem:k-sampler2}.
This yields a tuple of coordinates $\Gamma(Y)=j_1,\dots,j_{m_t}$ from $[n]$, such that $\Gamma(Y)|_{[1,m_i]}$ is an appropriate averaging sampler for every $i$.
Then, we use these coordinates to sample subsources from $X \sim \B^n$, and get a block source with \emph{increasing} blocks. We recall that getting increasing blocks is only an intermediate step towards our final goal. Indeed, in this step, we sample from the source $X$, which will guarantee that a typical prefix leaves enough entropy in the next, larger, blocks. In \cref{subsample} we will need to subsample from those blocks and argue that each block still has entropy even after a typical fixing of all the blocks that precedes it. The latter property is the one needed for block-source extraction.

\begin{lemma}[sampling a block source]\label{thm:block-source}
    There exists a deterministic procedure that given an $(n,k)$-source $X$ with $k\geq  \delta n$, $\delta$ being constant, and:
    \begin{itemize}
        \item A constant loss parameter $\zeta \in (0,1)$,
         \item A closeness parameter $\eps \in (0,1)$,
        \item Number of desired blocks $t \in \mathbb{N}$,
        \item A final, maximal, block length $\Delta_t \le c_{\mathsf{t}} \cdot n$ where $c_{\mathsf{t}}=c_{\mathsf{t}}(\zeta,\delta)$ is constant, and,
    \end{itemize}
    takes an independent and uniform random seed $Y \sim \B^{d_{\mathsf{samp}}}$  and outputs 
    a random variable $Z$ such that, for every $y$, $(Z|Y=y)$ is $\eps$-close to an exact 
    \[
    \left( (\Delta_1,\ldots,\Delta_t),(1-\zeta)\delta \right)
    \]
    block-source, where each $\Delta_{i-1} = \alpha \cdot \Delta_i$ for $\alpha = \frac{\zeta\delta}{16}$ and assuming that $\eps \ge 2^{-(c_{\eps}\Delta_1 - t)}$ for some constant $c_{\eps}=c_{\eps}(\zeta,\delta)$. Moreover,
    the seed length $d = O\left(\frac{\log n}{\log\Delta_1} \cdot \log\frac{t}{\eps}\right)$, and the procedure runs in time $\widetilde{O}(n)+\polylog(t/\eps)$.

    Note that for any constants $0 < \theta_1 < \theta_2 < 1$, and any $\eps \ge 2^{-n^c}$ where $c > 0$ is a small enough constant, we can have $\Delta_t = n^{\theta_2}$
    and $\Delta_1 = n^{\theta_1}$ for some $t = O(\log n)$, with
    seed length $O(\log(t/\eps))$ and runtime $\widetilde{O}(n)$.
\end{lemma}

\begin{proof}
Given our $\Delta_1,\ldots,\Delta_t$, we let $m_i = \sum_{j=1}^{i}\Delta_j$ for $j \in [t]$. Note that for $i \in [t-1]$,
each $m_i = \sum_{j=1}^{i}\Delta_j \le \frac{\alpha}{1-\alpha}\Delta_{i+1}$, so in particular
\[
m_t = m_{t-1} + \Delta_t \le \frac{\alpha}{1-\alpha}\Delta_t + \Delta_t \le n,
\]
by choosing the constant $c_{\mathsf{t}}$ appropriately. 
Let $\Gamma \colon \B^{d_{\mathsf{samp}}} \rightarrow [n]^{m_t}$ be the $(\gamma,\eps_{\Gamma})$-averaging sampler of \cref{lem:k-sampler2}, set with $\eps_{\Gamma} = \frac{1}{\log(\frac{6}{\zeta\delta})} \cdot \frac{\zeta\delta}{6} = O(1)$ and $\gamma = \frac{\eps}{8t}$. Note that then, \[d_{\mathsf{samp}} = O\left( \frac{\log n}{\log m_1} \cdot \log\frac{1}{\gamma} \right) = O\left(\frac{\log n}{\log\Delta_1} \cdot \log\frac{t}{\eps}\right),\]
    and indeed $m_t \le \frac{\eps_{\Gamma}}{16} \cdot n$ can be met by, again, setting the constant $c_{\mathsf{t}} \in (0,1)$ appropriately.
Moreover, we have that for any $i \in [t]$,
\[
W_i = \Gamma(Y)|_{[1,m_i]}
\]
is a $(\gamma,\eps_{\Gamma})$ sampling distribution, where $w \sim W_i$ has distinct symbols. Set $\beta = \frac{\zeta}{2}$.

For each $i\in[t]$, let $A_i=X_{W_i}$.
We may write $A_t=(Z_1,\dots,Z_t)$ with $Z_j=(A_t)_{[m_{j-1}+1,m_j]}$.
Note that under this perspective we have $A_i=(Z_1,\dots,Z_i)$ for each $i\in[t]$.
We claim that $(Z_1,\dots,Z_t)$ is close to an appropriate (exact) block source $Z'=(Z'_1,\dots,Z'_t)$, even conditioned on the seed $Y$.
This follows by an induction argument similar to that of~\cite[proof of Lemma 17]{NZ96} which we detail below.

First, \cref{lem:sample-entropy} instantiated with $\tau = \frac{\beta\delta}{3}$ (notice that indeed $\eps_{\Gamma} \le \frac{\tau}{\log(1/\tau)}$) tells us that
\begin{equation}\label{eq:base-case1}
    (Y, Z_1,\dots,Z_i)= (Y,A_i) \approx_{\gamma + 2^{-\Omega(\tau n)}} (Y, A'_i),
\end{equation}
where $(A'_i|Y=y)$ has entropy rate at least $\delta-3\tau \ge (1-\beta)\delta$ for every $y$.
\cref{eq:base-case1} with $i=1$ and $Z'_1=A'_1$ is our base case.
Now, fix an arbitrary $i\geq 2$ and suppose that we already know that
\begin{equation}\label{eq:IH1}
    (Y, Z_1,\dots,Z_{i-1} ) \approx_{2(i-1)(\gamma + 2^{-\Omega(\tau n)}+\xi)} ( Y, Z'_1,\dots,Z'_{i-1}),
\end{equation}
where $(Z'_j|Y=y,Z'_1=z_1,\dots,Z'_{j-1}=z_{j-1})$ has entropy rate at least $(1-\zeta)\delta$ for every $1\leq j \le  i-1$ and every $y,z_1,\dots,z_{j-1}$, and $\xi = \frac{\eps}{4t}$.

Write $A'_i = (Z''_1,\dots,Z''_i)$ with $|Z''_j|=|Z'_j|$.
Applying \cref{lem:tailboundchainrule} to $(A'_i|Y=y)$ with $\delta = \xi$ yields
\begin{align}
    \minH(Z''_i|Y=y,Z''_1=z_1,\dots,Z''_{i-1}=z_{i-1}) &=\minH(A'_i|Y=y,Z''_1=z_1,\dots,Z''_{i-1}=z_{i-1})\nonumber\\
    &\geq \minH(A'_i) - \sum_{j=1}^{i-1}\Delta_j - \log(1/\xi) \nonumber\\
    &= \minH(A'_i)-m_{i-1} -\log(1/\xi) \nonumber\\
    &\geq (1-\beta)\delta m_i -\frac{\alpha}{1-\alpha}\Delta_i-\log(1/\xi) \label{eq:LB-ent-Z''}
\end{align}
except with probability at most $\xi$ over the choice of $(z_1,\dots,z_{i-1})$. Using the fact that $m_i \ge \Delta_i$, to get entropy rate at least $(1-\zeta)\delta$ it is left to verify that
\[
\left( (\zeta - \beta)\delta - \frac{\alpha}{1-\alpha} \right) \Delta_i = \left( \frac{\beta\delta}{2} - \frac{\alpha}{1-\alpha} \right) \Delta_i  \ge \log(1/\xi).
\]
Using our bound on $\alpha$, we get that $\frac{\alpha}{1-\alpha} \le \frac{\beta\delta}{4}$. Thus, $\frac{\beta\delta}{2}\Delta_i \ge \log(1/\xi)$ holds whenever $\log(1/\eps) \le c_{\eps}\Delta_1 - t$, where both $c_{\eps}$ depend only on $\delta$ and $\zeta$.

Call a vector $\vec{v}=(y,z_1,\dots,z_{i-1})$ \emph{good} if \cref{eq:LB-ent-Z''} holds.
Suppose that we already have blocks $B_1,\dots,B_{i-1}$, arbitrarily distributed.
We generate one more block $B_i$ as follows.
First, sample $\vec{v}\sim (Y,B_1,\dots,B_{i-1})$.
If $\vec{v}$ is good as defined above, we set $B_{i,\vec{v}}$ (the random variable $B_i$ conditioned on $(Y,B_1,\dots,B_{i-1})=\vec{v}$) to be $B_{i,\vec{v}} = (Z''_i|(Y,Z''_1,\dots,Z''_{i-1})=\vec{v})$.
Otherwise, including when $\vec{v}$ is not in the support of $(Y,Z''_0,\dots,Z''_{i-1})$, we set $B_{i,\vec{v}}$ to be uniformly distributed over $\B^{\Delta_i}$ and independent of everything else.
Note that by construction we have $\minH(B_{i,\vec{v}})\geq (1-\zeta)\delta \Delta_i$ for all $\vec{v}$.

Then, it follows from \cref{eq:base-case1} that
\begin{equation}\label{eq:triangle2}
    \left(Y,Z_1,\dots,Z_{i-1},B^{(1)}_i\right) \approx_{\gamma + 2^{-\Omega(\tau n)}} \left(Y,Z''_1,\dots,Z''_{i-1}, B^{(2)}_i\right),
\end{equation}
where on the left-hand side we take $B^{(1)}_i$ to be sampled based on $Y$ and $B_j=Z_j$ for $1\leq j \le i-1$ as described for $B_i$ in the previous paragraph, and on the right-hand side we take $B^{(2)}_i$ to be sampled based on $Y$ and $B_j=Z''_j$ for $0\leq j \le i-1$.
Since $\vec{v}\sim(Y,Z''_1,\dots,Z''_{i-1})$ is good with probability at least $1-\xi$, in which case $B_{i,\vec{w}}$ is sampled identically to $Z''_{i,\vec{w}}$, we get from \cref{eq:triangle2} and a triangle inequality that
\begin{equation}\label{eq:triangle3}
    \left(Y,Z_1,\dots,Z_{i-1},B^{(1)}_i\right) \approx_{\gamma + 2^{-\Omega(\tau n)}+\xi} \left(Y,Z''_1,\dots,Z''_{i-1}, Z''_i\right).
\end{equation}
By \cref{eq:IH1}, we also have that
\begin{equation}\label{eq:triangle4}
    \left( Y,Z_1,\dots,Z_{i-1},B^{(1)}_i \right) \approx_{2(i-1)(\gamma + 2^{-\Omega(\tau n)}+\xi)} \left(Y,  Z'_1,\dots,Z'_{i-1},B_i^{(3)} \right),
\end{equation}
where, again, $B_i^{(3)}$ is sampled based on $Y$ and $B_j=Z'_j$ for $1\leq j \le i-1$ as described for $B_i$ above.
Combining \cref{eq:base-case1} (recall that $A'_i=(Z''_1,\dots,Z''_i)$) with \cref{eq:triangle3,eq:triangle4} via a triangle inequality, we get that
\begin{equation*}
    \left(Y,Z_1,\dots,Z_i \right) \approx_{2i(\gamma + 2^{-\Omega(\tau n)}+\xi)} \left(Y, Z'_1,\dots,Z'_{i-1},B^{(3)}_i\right).
\end{equation*}
Note that the sampling of $B^{(3)}_i$ on the right-hand side of this equation guarantees that $\minH(B^{(3)}_i|Y=y,Z'_1=z_1,\dots,Z'_{i-1}=z_{i-1})\geq (1-\zeta)\delta \Delta_i$ for all $(y,z_1,\dots,z_{i-1})$.
Therefore, $(Z'_1,\dots,Z'_{i-1},Z'_i=B^{(3)}_i)$ is indeed the target block source with $i$ blocks. Setting $i=t$, and by inspection of the sampling process for the $Z'_j$-s,  gives that $(Z|Y=y)$ is 
\[
2t \cdot \left( \gamma+2^{-\Omega(\tau n)} + \xi\right) \le \eps
\]
close to an exact $((\Delta_1,\ldots,\Delta_t),(1-\zeta)\delta)$ block source for every $y$.

The bound on the runtime follows easily, recalling that $\Gamma$ runs in time $\widetilde{O}(n)+\polylog(1/\gamma)$.
\end{proof}

\subsubsection{\cref{it:sampling2}: Subsampling from the block source}\label{subsample}

To apply iterative extraction, we will need our block source to have \emph{decreasing} blocks. Here, we will use a sampler to sample from each block, using the same seed across the blocks.
\begin{lemma}[subsampling from a block source]\label{lem:subsampling}
There exists a deterministic procedure that given a
\[
\left( (\Delta_1,\ldots,\Delta_t),\delta \right)
\]
block-source $Z = (Z_1,\ldots,Z_t)$, where $\Delta_1 \le \ldots \le \Delta_t$ and $\delta$ is a constant, 
\begin{itemize}
    \item A constant shrinkage parameter $\alpha \in (0,1)$,
    \item A constant loss parameter $\zeta \in (0,1)$,
        \item A closeness parameter $\eps \in (0,1)$,
    \item An initial, maximal, block length $\ell_1 \le \Delta_1$, and,
    \item An independent and uniform random seed $Y \sim \B^{d_{\mathsf{samp}}}$,
\end{itemize}
satisfies the following. Assuming that $\ell_t \ge c_{1}\log(t/\eps)$ for some constant $c_1 = c_1(\zeta,\delta)$, it outputs a random variable $S$ such that, for every $y$, $(S|Y=y)$ is $\eps$-close to an exact
\[
\left( (\ell_1,\ldots,\ell_t),(1-\zeta)\delta \right)
\]
block-source, where each $\ell_{i+1} = \alpha \cdot \ell_i$
Moreover, the seed length $d_{\mathsf{samp}} = \log\frac{\Delta_t}{\ell_1} + O\left(t+\log\frac{1}{\eps}\right)$,
and the procedure runs in time $t \cdot \polylog(\Delta_t) \left( \ell_1 + \log^{2}(t/\eps)\right)$.

Note that when $\Delta_t = n^{\theta_1}$ and $\ell_1 = n^{\beta}$ for some constants $\theta_1, \beta \in (0,1)$, $d_{\mathsf{samp}} = O(\log(n/\eps))$,  the procedure runs in time $O(n)$, and we can take any $\eps \ge 2^{-c \cdot \ell_t}$ for some constant $c$ that depends on $\zeta$ and $\delta$. 
\end{lemma}
\begin{proof}
For $i \in [t]$, let $m_i = \sum_{j=1}^{i}\ell_i$, recalling that $\ell_i = \alpha^{i-1}\ell_1$.
For each $i \in [t]$, let $\Gamma_i \colon \B^{d_i} \rightarrow [\Delta_i]^{\ell_i}$ be the
$(\gamma,\eps_{\Gamma})$ distinct-samples sampler of \cref{lem:extendedSampler}, where
$\gamma = \frac{\eps}{2t}$ and $\eps_{\Gamma} = \frac{1}{\log(\frac{6}{\zeta\delta})} \cdot \frac{\zeta\delta}{6}  = O(1)$. We need to make sure that each $\ell_i \ge c \cdot \frac{\log(1/\gamma)}{\eps^2_{\Gamma}}$ for some universal constant $c$, and indeed that is the case, by our constraint on $\ell_t$.
Also, 
$
d_i = \log(\Delta_i/\ell_i) + O(\log\frac{1}{\gamma} \cdot \poly(1/\eps_{\Gamma}))
$
and we set $d_{\mathsf{samp}}$ to be the maximum over the $d_i$-s, so 
\[
d_{\mathsf{samp}} = d_t = \log\frac{\Delta_t}{\ell_1} + t \cdot \log\frac{1}{\alpha} + O\left(\log\frac{t}{\eps}\right).
\]
We denote the corresponding samples by $W_i = \Gamma_i(Y|_{[1,d_i]})$, and let $S_i = (Z_i)_{W_i}$. 
Setting $\eps_i' = 2^{-(\zeta/2)\delta\Delta_i}$ and observing that $\delta \Delta_i = (1-\frac{\zeta}{2})\delta \Delta_i + \log(1/\eps'_i)$, we get that $Z$ is $\eps'= \sum_{i}\eps'_i$ close to some $Z'$, an exact $((\Delta_1,\ldots,\Delta_t),(1-\zeta)\delta)$-source.
From here onwards, assume that $Z$ is the exact block source, and aggregate the error.

Next, we invoke \cref{lem:sample-entropy} with $\tau = \frac{\zeta\delta}{6}$ (notice that indeed $\eps_{\Gamma} \le \frac{\tau}{\log(1/\tau)}$), and get that for every $i \in [t]$,
and $z_{\mathsf{pre}} \in \B^{\Delta_1+\ldots+\Delta_{i-1}}$,
\begin{equation*}
\left( Y,S_{i,z_{\mathsf{pre}}} \right) \approx_{\eps''_i = \gamma+2^{-\Omega(\tau\Delta_i)}} \left( Y, S'_{i,z_{\mathsf{pre}}}\right),
\end{equation*}
where $S_{i,z_{\mathsf{pre}}}$ denotes $S_i$ conditioned on $(Z_1,\dots,Z_{i-1})=z_{\mathsf{pre}}$ and $S'_{i,z_{\mathsf{pre}}}$ satisfies $\minH(S'_{i,z_{\mathsf{pre}}}|Y=y)\geq (1-\frac{\zeta}{2})^2\delta \cdot \ell_i \ge (1-\zeta)\delta \cdot \ell_i$ for all $y$.
Thus, in particular, this holds if we condition on 
any sample from $(S_1,\ldots,S_{i-1})$, and so we have that for every $i \in [t]$,
\begin{equation}\label{eq:induction-block}
\left(Y, S_1,\ldots,S_{i-1},S_i \right) \approx_{\eps''_i}\left(Y, S_1,\ldots,S_{i-1},S'_i \right),
\end{equation}
where $\minH(S'_i|Y=y,S_1=s_1,\dots,S_{i-1}=s_{i-1})\geq (1-\zeta)\delta \cdot \ell_i$ for all $(y,s_1,\dots,s_{i-1})$.

This implies
that, conditioned on the seed $Y$, $(S_1,\ldots,S_t)$
has distance
\[
\eps' + \sum_{i=1}^{t}\eps_i'' \le t \cdot (\eps'_1 + \eps''_1) \le \eps
\]
from an (exact) $((\ell_1,\ldots,\ell_t),(1-\zeta)\delta)$ block source, where we used the fact that the
$2^{-\Omega(\tau\Delta_1)}$ and $2^{-(\zeta/2)\delta\Delta_1}$ terms are at most $\frac{\eps}{4t}$, which follows from the fact that $c_1\log(t/\eps) \le \Delta_1$ for a suitable choice of $c_1$, and where $\eps'$ accounts for the assumption that $Z$ is the exact block source above.
This can be shown by induction on the number of blocks using \cref{eq:induction-block} analogously to (and even in a simpler way than)~\cite[proof of Lemma 17]{NZ96}, and similarly to the proof of \cref{thm:block-source}.
Since we believe this proof is easier to follow than the proof of \cref{thm:block-source}, we give details here too for the sake of exposition.

First, the base case is given by \cref{eq:induction-block} for $i=1$.
Now, fix an arbitrary $i\geq 2$ and suppose that we already know that
\begin{equation}\label{eq:IH2}
    (Y,S_1,\dots,S_{i-1})\approx_{\sum_{j=1}^{i-1}\eps''_j} (Y,S'_1,\dots,S'_{i-1}),
\end{equation}
where $\minH(S'_j|Y=y,S'_1=s_1,\dots,S'_{j-1}=s_{j-1})\geq (1-\zeta)\delta \ell_j$ for every $1\leq j \le i-1$ and all $(y,s_1,\dots,s_{j-1})$.

We now show how to extend this by one block.
Generally speaking, suppose that we already have blocks $B_1,\dots,B_{i-1}$, arbitrarily distributed.
Analogously to the proof of \cref{thm:block-source}, we generate one more block $B_i$ by first sampling $\vec{v}\sim (Y,B_1,\dots,B_{i-1})$.
If $\vec{v}$ is in the support of $(Y,S_1,\dots,S_{i-1})$, we set $B_{i,\vec{v}}$, the random variable $B_i$ conditioned on $(Y,B_1,\dots,B_{i-1})=\vec{v}$, to be $B_{i,\vec{v}}=(S'_i|(Y,S_1,\dots,S_{i-1})=\vec{v})$.
Otherwise, we set $B_{i,\vec{v}}$ to be uniformly distributed over $\B^{\ell_i}$ and independent of everything else.
By construction, $\minH(B_{i,\vec{v}})\geq (1-\zeta)\delta \ell_i$ for all $\vec{v}$.

From \cref{eq:IH2}, it follows that
\begin{equation}\label{eq:IH-triangle}
    \left(Y,S_1,\dots,S_{i-1},B^{(1)}_i \right) \approx_{\sum_{j=1}^{i-1}\eps''_j} \left(Y,S'_1,\dots,S'_{i-1},B^{(2)}_i\right),
\end{equation}
where $B^{(1)}_i$ is sampled by following the procedure in the paragraph above with $B_j=S_j$ for all $j \le i-1$, and $B^{(2)}_i$ is sampled from $B_j=S'_j$ for all $j \le i-1$.
Now, note that $(Y,S_1,\dots,S_{i-1},B^{(1)}_i)$ is distributed exactly like $(Y,S_1,\dots,S_{i-1},S'_i)$, because when $B_j=S_j$ for all $j \le i-1$ we get that $\vec{v}$ above is always in the correct support.
Therefore, combining this observation with \cref{eq:induction-block,eq:IH-triangle} and a triangle inequality yields
\begin{equation*}
    (Y,S_1,\dots,S_{i-1},S_i) \approx_{\eps''_i+\sum_{j=1}^{i-1}\eps''_j = \sum_{j=1}^i \eps''_j}\left(Y,S'_1,\dots,S'_{i-1},B^{(2)}_i\right).
\end{equation*}
To conclude the argument, it suffices to note that $(S'_1,\dots,S'_{i-1},S'_i=B^{(2)}_i|Y=y)$ is the desired exact block source by inspection of the sampling process for the $S'_j$-s, and take $i=t$.

To establish the runtime, note that we simply apply $\Gamma_i$ for each $i \in [t]$, which takes
\[
\sum_{i=1}^{t} \log^{2}(1/\gamma) \cdot \polylog(\Delta_i) + O(\ell_i \log\Delta_i) \le t \cdot \polylog(\Delta_t) \left( \ell_1 + \log^{2}(t/\eps)\right)
\]
time.
This concludes our lemma.
\end{proof}

\subsubsection{\cref{it:sampling3}: Applying a block source extractor}

We now wish to extract from our decreasing-blocks block source, and for that we combine
\cref{lem:subsampling,thm:block-source} with the block source extraction of \cref{lem:block-ext}.
This will give us a nearly linear-time logarithmic-seed extractor that outputs $n^{\Omega(1)}$ bits. For the $\Ext_i$-s in \cref{lem:block-ext}, we will
use the fast hash-based extractors from \cref{lem:fasthash}.

\begin{lemma}\label{lem:short}
There exists a small constant $c > 0$ such that the following holds. For every large enough $n$, any constant $\delta \in (0,1)$, any $k \ge \delta n$, and any $\eps \ge 2^{-n^{c}}$, there exists a strong $(k,\eps)$ extractor
\[
\Ext_{\mathsf{short}} \colon \B^{n} \times \B^{d} \rightarrow \B^{m}
\]
where $d = O(\log(n/\eps))$, and $m = n^{c}$. Moreover, given inputs $x \in \B^n$ and $y \in \B^d$, we can compute $\Ext_{\mathsf{short}}(x,y)$ in time $\widetilde{O}(n)$.
\end{lemma}
\begin{proof}
Let $X$ be an $(n,k=\delta n)$-source. Set $\eps' = \eps/3$, $\theta_1 = 8/10$, $\theta_2 = 9/10$, and $\zeta = 1/10$.  We first apply \cref{thm:block-source} with $\Delta_t = n^{\theta_2}$, $\Delta_1 = n^{\theta_1}$, and error $\eps'$, where
$t = O(\log n)$ is as guaranteed from \cref{thm:block-source}. This requires a seed of length $d_1 = O(\log(1/\eps')) = O(\log(1/\eps))$, and in time $\widetilde{O}(n)$ we output a random variable $Z_1$
which is $\eps'$-close to an exact $((\Delta_1,\ldots,\Delta_t),(1-\zeta)\delta)$ block source for every fixing of the seed. 

Set $\beta = 7/10$, and $\gamma = 6/10 < \beta$. Set $\alpha$ to be the constant such that $n^{\beta} \cdot \alpha^{t-1} = n^{\gamma}$. 
We then apply
\cref{lem:subsampling} on $Z_1$ with that $\alpha$, the same $\zeta$, closeness $\eps'$ and an initial
block length $\ell_1 = n^{\beta}$. This gives us a random variable $Z_2$ that is $2\eps'$-close to a
\[
\left( (\ell_1 = n^{\beta},\ldots,\ell_t = n^{\gamma}), \delta' \triangleq (1-\zeta)^2\delta  \right)
\]
block source, requires a seed of length
$d_2 = O(\log(n/\eps'))=O(\log(n/\eps))$, and runs in time  $t \cdot \polylog(\Delta_t) \left( \ell_1 + \log^{2}(t/\eps)\right) = O(n)$, assuming $c$ is small enough. Again, $Z_2$ is $\eps'$-close to an exact block source for every fixing of the seed. 

For our next and final step, of performing the block-source extraction itself, set $d_3 = c_{\mathsf{E}}\log(\ell_t/\eps_{\Ext})$ where $c_{\mathsf{E}}$ is the constant guaranteed by \cref{lem:fasthash}. Also, let $\eps_{\Ext} = \frac{\eps'}{6t}$, and $\theta$ will be a constant whose value will be later determined. We will use the following extractors:
\begin{itemize}
    \item Let $\Ext_{t} \colon \B^{\ell_t} \times \B^{d_3} \rightarrow \B^{m_{t} = (1+\theta)d_3}$ be the $(k_t = \delta'\ell_t,\eps_{\Ext})$ extractor guaranteed to us by \cref{lem:fasthash}. Notice that we need to satisfy $k_t \ge \theta d_3+c_{\mathsf{E}}\log(1/\eps_{\Ext})$. This can be satisfied making sure that $\eps$ is at most $2^{-\Omega(\ell_t)}$, where the hidden constant depends on $c_{\mathsf{E}}$. 
    \item For each $i \in [t-1]$, let 
    \[
    \Ext_{i} \colon \B^{\ell_i} \times \B^{m_{i+1}} \rightarrow \B^{m_i}\] be the $(k_{i} = \delta' \ell_i,\eps_{\Ext})$ extractor guaranteed to us by \cref{lem:fasthash},
    where $m_{i} = (1+\theta)m_{i+1}$. We need to make sure that $m_{i+1} \ge c_{\mathsf{E}}\log(\ell_i/\eps_{\Ext})$ and that $k_i \ge \theta m_{i+1}+c_{\mathsf{E}}\log(1/\eps_{\Ext})$. To see that the latter holds, note that 
    $
    k_i = \delta' \cdot \ell_1  \alpha^{i-1}
    \ge n^{\gamma/2}
    $
    and that
    $
    \theta m_{i+1}+c_{\mathsf{E}}\log(1/\eps_{\Ext}) = \theta(1+\theta)^{t-i}d_3 + c_{\mathsf{E}}\log(1/\eps_{\Ext}) < n^{\gamma/2},
    $
    if we choose $\theta$ to be a small enough constant (with respect to the constant $\frac{\log n}{t}$) and $\eps$ to be, again, at most $2^{-\Omega(\ell_t)}$. 
\end{itemize}
Everything is in place to apply our block source extraction, \cref{lem:block-ext}, on $Z_2$ and an independent and uniform seed of length $d_3$. We get that $\mathsf{BExt}$ outputs $Z_3$ of length $m_1 = n^{\Omega(1)}$, which is $2t\eps_{\Ext} \le \eps'$ close to uniform, and runs in time
$
O\left(\sum_{i=1}^{t}\ell_i\log \ell_i \right) = O(n).
$
Recall that indeed, as \cref{lem:block-ext} requires, all the $\Ext_{i}$-s output their seed.

To conclude, note that the overall error of our extractor is at most $3\eps' = \eps$, and the seed length is $d_1 + d_2 + d_3 = O(\log(n/\eps))$.
\end{proof}

\subsubsection{Improving the output length}\label{sec:improving}

The extractor $\Ext_{\mathsf{short}}$ from \cref{lem:short} only outputs $n^{\Omega(1)}$ bits. Here, we will use an extractor $\Ext_{\mathsf{out}}$ that outputs a linear fraction of the entropy but requires a (relatively) long seed, and use \cref{cor:boosting2} to boost the output length. For $\Ext_{\mathsf{out}}$, we will again use a sample-then-extract extractor, however this time, we can use \emph{independent} samples to create a block source with exponentially decreasing blocks. This setting is easier, and we can simply use the original \cite{NZ96} construction. Since a similar construction will be analyzed later in the paper (including a time complexity analysis), we choose to employ it instead of revisiting \cite{NZ96}. We state it formally as a corollary below.

\begin{corollary}\label{cor:out}
There exists a constant $C \ge 1$ such that for any
constants $\tau,c \in (0,1)$, any large enough positive integer $n$ and any $\eps \ge 2^{-n^c}$, there exists a strong $(k = (1-\tau) n,\eps)$ extractor
\[
\Ext_{\mathsf{out}} \colon \B^n \times \B^d \rightarrow \B^{m}
\]
where $d = O(\log n \cdot \log(n/\eps))$, and 
$m = k/C$. Moreover, given inputs $x \in \B^n$
and $y \in \B^d$, we can compute $\Ext_\mathsf{out}(x,y)$ in time $\widetilde{O}(n)$.
\end{corollary}
The correctness of \cref{cor:out} follows from \cref{coro:Ext-bad-output} applied with $i=1$ (which is indeed non-recursive), without the need for a preliminary condensing step.\footnote{\label{fn:nz}As mentioned earlier, \cref{cor:out} can already be deduced from \cite{NZ96} (modulo the tight runtime analysis), with a slightly worse seed of $d = O(\log^{2}n \cdot \log(n/\eps))$, which would not change the parameters of our overall construction. There, they first convert $X$ into a block source $Z$ using $\log n$ \emph{independent} samples from a $k$-wise independent sample space, for $k \approx \log(n/\eps)$. The block source $Z$ has decreasing blocks, so the standard block source extraction scheme can then be employed. The fact that this procedure can be implemented in time $\widetilde{O}(n)$ follows easily from the runtime analysis of primitives in our work (specifically, \cref{arithmetic:it1} of \cref{lemma:fast}, and  \cref{lem:fasthash}).}

Plugging-in $\Ext_{\mathsf{out}}$ and $\Ext_{\mathsf{short}}$ into
\cref{cor:boosting2} readily gives the following result.
\begin{lemma}\label{lem:pre-final}
There exist constants $\tau,c \in (0,1)$ such that for every positive integer $n$, and any $2^{-n^{c}} \le \eps \le \frac{1}{n}$, there exists a $(k = (1-\tau)n,\eps)$ extractor
$
\Ext \colon \B^n \times \B^d \rightarrow \B^{m}
$
where $d = O(\log(n/\eps))$, and 
$m = c k$. Moreover, given inputs $x \in \B^n$
and $y \in \B^d$, we can compute $\Ext(x,y)$ in time $\widetilde{O}(n)$.
\end{lemma}

To boost the output length in \cref{lem:pre-final} from $\Omega(k)$ to $(1-\eta)k$
for any constant $\eta >0$, we apply \cref{lem:boost-output} a constant number of times depending only on $\eta$ (that is, we simply apply $\Ext$ with independent seeds and concatenate the outputs). 
To then go from any min-entropy requirement $k$ to entropy rate $1-\tau$, we first apply the KT condenser from \cref{thm:ktcond}.
Since $\eps\geq 2^{-k^{c}}$ we also have that $\eps\geq 2^{-n^{0.1}}$ if $c<0.1$, and so the KT condenser does not require preprocessing.
Furthermore, we can ensure that $k\geq C'\log^2(n/\eps)$ with $C'>0$ a sufficiently large constant so that the conditions for applying the KT condenser are satisfied whenever $n$ is larger than some constant.

This finally gives us our main theorem for this section,
\cref{thm:main-nonrecursive}, apart from the strongness property, which we now discuss.

\paragraph{The non-recursive construction is strong.}  In what follows, we refer  to the itemized list in the beginning of the section. The condensing step, \cref{it:condense}, is strong, since we use strong condensers.
Next, the block source creators of \cref{thm:block-source,lem:subsampling} are strong, so \cref{it:sampling1,it:sampling2} hold in a strong manner as well. 
\cref{it:sampling3} readily gives a strong extractor.
For the output-extending phase, \cref{cor:boosting2} tells us that the extractor from \cref{lem:pre-final} is strong. Finally, we apply that extractor several times with independent seeds, and the strongness of that procedure is guaranteed from \cref{lem:boost-output}.

\subsection{A Recursive Construction}\label{sec:recursive}

In this section, we prove the following.

\begin{theorem}[recursive construction]\label{thm:rec-ext}

For any constant $\eta>0$ there exists a constant $C>0$ such that the following holds.
    For any positive integers $n$ and $k\leq n$ and any $\eps>0$ satisfying $k\geq C \log(n/\eps)$ there exists a strong $(k,\eps)$-seeded extractor \[\Ext \colon \bits^n\times\bits^d\to\bits^m\] with seed length $d\leq C\log(n/\eps)$ and output length $m\geq (1-\eta)k$.
    Furthermore, 
    \begin{enumerate}
        \item if $k \geq 2^{C\log^*\!n}\cdot \log^2(n/\eps)$ and $\eps\geq 2^{-Cn^{0.1}}$, then $\Ext$ is computable in time $\Otilde(n)$;

        \item if $k\geq 2^{C\log^*\!n}\cdot \log^2(n/\eps)$ and $\eps< 2^{-Cn^{0.1}}$, then $\Ext$ is computable in time $\Otilde(n)$ after a preprocessing step, corresponding to generating $O(\log^*\!n)$ primes $q\leq \poly(n/\eps)$;

        \item if $k < 2^{C\log^*\!n}\cdot \log^2(n/\eps)$, then $\Ext$ is computable in time $\Otilde(n)$ after a preprocessing step, corresponding to generating $O(\log\log n)$ primes $q\leq \poly(n/\eps)$ and a primitive element for each field $\F_q$.        
    \end{enumerate}

\end{theorem}

In a nutshell, our construction behind \cref{thm:rec-ext} works by considering two cases.
If $\eps> C n^3\cdot 2^{-k/\log k}$, then we instantiate the recursive approach of Srinivasan and Zuckerman~\cite{SZ99} appropriately.
Otherwise, we apply the recursive approach of Guruswami, Umans, and Vadhan~\cite{GUV09}.

\subsubsection{The (extremely) low-error case}

In this section, we consider the lower error case of \cref{thm:rec-ext} where $\eps\leq Cn^3\cdot 2^{-k/\log k}$.
We instantiate the recursive approach from~\cite[Section 4.3.3]{GUV09} appropriately, and analyze its time complexity.
Crucially, because of our upper bound on $\eps$, we will only need to run $O(\log\log n)$ levels of their recursive approach.

In order to obtain the statement of \cref{thm:rec-ext} for output length $m\geq (1-\eta)k$ with $\eta$ an arbitrarily small constant, it suffices to achieve output length $m=\Omega(k)$ and then apply \cref{lem:boost-output} a constant number of times depending only on $\eta$.
Therefore, we focus on achieving output length $m=\Omega(k)$.

\begin{theorem}
    There exist constants $c,C>0$ such that the following holds.
    For any positive integers $n$ and $k\leq n$ and any $\eps\in (0,Cn^3\cdot 2^{-k/\log k}]$  further satisfying $k>C\log(n/\eps)$, there exists a strong $(k,\eps)$-seeded extractor $\Ext \colon \bits^n\times\bits^d\to\bits^m$ with seed length $d\leq C\log(n/\eps)$ and output length $m\geq k/3$.
    
    Furthermore, $\Ext$ is computable in time $\Otilde(n)$ after a preprocessing step that corresponds to finding primitive elements of $O(\log\log n)$ fields $\F_q$ with prime  orders $q\leq \poly(n/\eps)$.
\end{theorem}

\begin{proof}
We discuss our instantiation of the recursive approach from~\cite{GUV09} in detail, as it will be relevant to the time complexity analysis.
Let $\eps_0 = \eps/ \log^C n$ and $d=C\log(n/\eps_0)=O(\log(n/\eps))$ for a large enough constant $C>0$ to be determined later.
For an integer $k\geq 0$, let $i(k)=\left\lceil\log\left(\frac{k}{8d}\right)\right\rceil$, which determines the number of levels in our recursion.
It will be important for bounding the time complexity of this construction to observe that
\begin{equation}\label{eq:bound-ik}
    i(k)=O(\log\log n)    
\end{equation}
because $\eps\leq C n^3 \cdot 2^{-k/\log k}$.
For each $k$, we define a family of strong $(k,\eps_{i(k)})$-seeded extractors $\Ext_{i(k)} \colon \B^n\times\B^d\to\B^m$ with $\eps_{i(k)}\leq 9\eps_{i(k/3)}+63\eps_0$ when $i(k)>0$ by induction on $i(k)$.
Solving this recursion yields $\eps_{i(k)}=2^{O(i(k))}\cdot \eps_0\leq \eps$, provided that $\eps_0=\eps/\log^C n$ for a sufficiently large constant $C>0$.

\paragraph{Base case.}
For the base case $i(k)=0$, which holds when $k\leq 8d$, we choose $\Ext_0$ to be the $(k,\eps_0)$-seeded extractor defined below. 
Before we define and analyze it formally, we informally discuss how the extractor works.
Recall that we are aiming for seed length $d$, which in this base case satisfies $d\geq k/8$, output length $m\geq k/3$, and nearly-linear time complexity.
Roughly speaking, on input an $(n,k)$-source $X$ we first apply the fast RS strong condenser to obtain an output $X'$ that is close to a source with high min-entropy rate.
Then, we apply the fast hash-based extractor from \cref{lem:fasthash-short}, which can be made to require only seed length $\approx k/t$ for a large constant $t$, to $X'$ with a fresh seed.
More formally,
\begin{enumerate}
    \item Apply the lossy RS strong condenser $\RSCond$ (\cref{thm:rscond}) on $X$, instantiated with $\alpha=1/400$ and error $\eps'_0=\eps_0/2$. This requires a seed $Y_1$ of length $d_1\leq C_0\log(n/\eps'_0)$, for some constant $C_0>0$, and is a valid invocation since $k\geq C\log(n/\eps'_0)$ for a sufficiently large constant $C>0$.
    The corresponding output $X'$ satisfies $(Y_1, X') \approx_{\eps'_0} (Y_1, Z)$, for some  $(n',k')$-source $Z$ with $k'\geq(1-2\alpha)n'= (1-1/200)n'$.

    \item Let $\Ext'_0 \colon \bits^{n'}\times\bits^{d_2}\to\bits^{m'}$ be the average-case strong $(k',\eps'_0)$-seeded extractor from \cref{lem:fasthash-short} instantiated with $t=10$, which requires a seed $Y_2$ of length $d_2\leq k'/10 + C'_0\log(n'/\eps'_0)$ for some constant $C'_0>0$ and has output length $m'\geq k'/2$.
    The conditions for the invocation of \cref{lem:fasthash-short} with $t=10$ are satisfied since $k'\geq (1-1/200)n'=(1-\frac{1}{20t})n'$ and
    \begin{equation*}
        2^{-n'/500}\leq 2^{-k/500}\leq (\eps_0/n)^{C/500}\leq \eps'_0,
    \end{equation*}
    where the second inequality uses the theorem's hypothesis that $k\geq C\log(n/\eps)$ with $C>0$ a sufficiently large constant.
\end{enumerate}

We set $Y=(Y_1, Y_2)$ and define $\Ext_0(X,Y)=\Ext'_0(\RSCond(X,Y_1),Y_2)$.
We now argue that $\Ext_0$ is an extractor with the desired properties.
From the discussion above, we have
\begin{equation*}
    \left( Y, \Ext_0(X,Y) \right) = \left(Y_1, Y_2, \Ext'_0(\RSCond(X,Y_1),Y_2 \right) \approx_{\eps'_0} \left( Y_1, Y_2, \Ext'_0(Z,Y_2) \right) \approx_{\eps'_0} (Y_1, Y_2, U_{m'}).
\end{equation*}
Therefore, the triangle inequality implies that $\Ext_0$ is an average-case strong $(k,2\eps'_0=\eps_0)$-seeded extractor.
It remains to argue about the seed length, output length, and time complexity of $\Ext_0$.
The seed length of $\Ext_0$ is 
\begin{equation*}
    d_1+d_2\leq k'/10 + (C_0+C'_0)\log(n'/\eps'_0) \leq 0.8d + (C_0+C'_0)\log(n'/\eps'_0) \leq d,
\end{equation*}
provided that $d= C\log(n/\eps)$ with $C$ a sufficiently large constant.
The output length of $\Ext_0$ is $m'\geq k'/2\geq k/3$, since $k'\geq (1-1/200)k$.
Finally, both steps above take time $\Otilde(n)$, and so $\Ext_0$ can be computed in time $\Otilde(n)$ after a one-time preprocessing step.

\paragraph{Induction step.}
When $i(k)>0$, we assume the existence of the desired average-case strong extractors $\Ext_{i(k')}$ for all $i(k')<i(k)$ as the induction hypothesis.
More precisely, we assume that for all $k'$ such that $i(k')<i(k)$ there exists a family of average-case strong $(k',\eps_{i(k')})$-seeded extractors $\Ext_{i(k')}\colon \B^n\times\B^d\to\B^{k'/3}$ parameterized by $n$ computable in time $\Otilde(n)$ after a one-time preprocessing step. 

Intuitively, we will first use these extractors to construct an extractor $\Ext'_{i(k)}$ with all the desired properties (min-entropy requirement $k$, small error $\eps'=\eps_{i(k/3)}+7\eps_0$, small seed length $d'\approx d/16$) \emph{except} that the output length is not large enough. Specifically, we will obtain output length $m=k/9$, but would like to get output length $m\geq k/3$ for the induction step.
Roughly speaking, we proceed as follows on input an $(n,k)$-source $X$.
First, we apply the fast RS strong condenser to $X$ to obtain a source $X'$ with high min-entropy rate.
Then, we split $X'$ in half to create a block source $(X'_1, X'_2)$ with two blocks.
This split decreases the entropies of $X'_1$ and $X'_2$, so we apply the fast RS strong condenser to $X'_2$ to replace this second block by a block $X''_2$ with high min-entropy rate.
Finally, we perform block source extraction on $(X'_1, X''_2)$.
Recall that all we are missing after this is a sufficiently large output length.
The output length can be boosted via standard techniques, at the expense of slightly larger error and seed length.

More formally,
\begin{enumerate}
    \item\label{it:condense-GUV} Apply the lossy RS strong $(k,k',\eps_1=\eps_0^2)$-condenser $\RSCond$ (\cref{thm:rscond}) on $X$ with $\alpha=1/20$ and a seed $Y_{\mathsf{RS}}$ of length $d_{\mathsf{RS}}\leq C_{\mathsf{RS}}\log(n/\eps_0)$. 
    This is valid since 
    $k>8d\geq C_\alpha\log(n/\eps)$ with $C_\alpha$ the constant from \cref{thm:rscond} with $\alpha=1/20$
    if the constant $C$ in the theorem statement is large enough.
    By the second part of \cref{thm:rscond} we get that with probability at least $1-\eps_0$ over the choice of $Y_{\mathsf{RS}}=y$ it holds that the corresponding condenser output $X'$ is $\eps_0$-close to some $(n',k')$-source $Z$ with $k'\geq (1-2\alpha)n'=0.9n'$.
    For the sake of exposition, from here onwards we work under such a good choice of the seed $Y_{\mathsf{RS}}$, and we will add the $\eps_0$ slack term to the final error.

    \item\label{it:split-X} 
    Let $(X'_1, X'_2)$ correspond to the first two blocks of $\lfloor n'/2\rfloor\triangleq n''$ bits of $X'$.
    By \cref{lem:block-chain-rule} instantiated with $n''$ and $\Delta=0.1n'$ and the fact that $X'$ is $\eps_0$-close to an $(n',k')$-source, we get that $(X'_1, X'_2)$ is $(\eps_{0}+2\eps_0=3\eps_0)$-close to an exact $((n'',n''),k''/n'')$-block-source $(W_1, W_2)$ with
    \begin{equation}\label{eq:LBk''}
        k'' \geq n'' -\Delta -\log(1/\eps_0)\geq 0.4n' - \log(1/\eps_0)-1\geq k/3,
    \end{equation}
    since $n'\geq k > d=C\log(n/\eps_0)$ for a sufficiently large constant $C>0$.

    \item\label{it:condense-again} Apply the lossy RS strong $(k'',k''',\eps_1=\eps_0^2)$-condenser $\RSCond'$ (\cref{thm:rscond}) to $X'_2$ with $\alpha=1/800$ and a seed $Y'_{\mathsf{RS}}$ of length at most $d'_{\mathsf{RS}}=C'_{\mathsf{RS}}\log(n''/\eps_1)\leq C'_{\mathsf{RS}}\log(n/\eps_0)$ to get $X_2''$.
    From \cref{it:split-X} and the data-processing inequality, we know that
    \begin{equation}\label{eq:replace-Ws}
        (Y'_{\mathsf{RS}}, X'_1, X''_2)= \left( Y'_{\mathsf{RS}}, X'_1, \RSCond(X'_2,Y'_{\mathsf{RS}})\right) \approx_{3\eps_0} \left( Y'_{\mathsf{RS}},  W_1, \RSCond(W_2,Y'_{\mathsf{RS}}) \right).
    \end{equation}
    Since $(W_2|W_1=w_1)$ is a $k''$-source for any $w_1$ in the support of $W_1$,
    we conclude from \cref{thm:rscond} and \cref{eq:replace-Ws} that
    \begin{equation*}
        \left( Y'_{\mathsf{RS}},  W_1, \RSCond(W_2,Y'_{\mathsf{RS}}) \right) \approx_{\eps_1} \left( Y'_{\mathsf{RS}},  W_1, \widetilde{W_2} \right),
    \end{equation*}
    where $\widetilde{W_2}\sim \B^{n'''}$ and $\minH(Y'_{\mathsf{RS}}, \widetilde{ W_2}|W_1=w_1)\geq k'''+d'_{\mathsf{RS}}$ for all $w_1$ in the support of $W_1$, with $n'''\geq k''\geq k'''\geq (1-1/400) n'''$.
    This is a valid invocation since $k''\geq k/3>8d/3> C\log(n/\eps)$ for a large enough constant $C>0$ by \cref{eq:LBk''}.
    Therefore, by the second part of \cref{thm:rscond}, with probability at least $1-\eps_0$ over the choice of  $Y'_{\mathsf{RS}}=y'$ we get that
    \begin{equation}\label{eq:replace-W'2}
        \left(W_1, \widetilde{W_2} \right) | \set{Y'_{\mathsf{RS}}=y'} \approx_{\eps_0} (W_1, W'_2),
    \end{equation}
    where $W'_2\sim\B^{n'''}$ satisfies $\minH(W'_2|W_1=w_1)\geq k'''\geq (1-1/400) n'''$.
    Fix such a good fixing of $Y'_{\mathsf{RS}}$ from now onwards.
    As before, we will account for the probability $\eps_0$ of fixing a bad seed in the final extractor error.
    Then, by combining \cref{eq:replace-Ws,eq:replace-W'2} we get that $(X'_1, X''_2)$ is $(\eps_{\mathsf{BS}}=4\eps_0)$-close to an $((n'',n'''),k'',k''')$-block source.

    \item\label{it:block-ext-GUV} 
    We will now apply block source extraction to $(X'_1, X''_2)$, which we recall is $(\eps_{\mathsf{BS}}=4\eps_0)$-close to an exact $((n'',n'''),k'',k''')$-block source.
    We instantiate \cref{lem:block-ext} with $\Ext_2$ being the strong extractor from \cref{lem:fasthash-short} with source input length $n'''$, min-entropy requirement $k'''$, error $\eps_{\mathsf{BExt}}=\eps_0$, output length $d$, and $t=16$.
    This requires a seed of length $d_{\mathsf{BExt}}\leq d/16 + C'_0\log(n/\eps_0)$. 
    This instantiation of \cref{lem:fasthash-short} is valid since $k'''\geq (1-1/400)n'''>(1-\frac{1}{20 t})n'''$ and
    \begin{equation*}
        k'''\geq 0.95n'''\geq 0.95k''\geq \frac{0.95k}{3}>\frac{0.95\cdot 8d}{3}>2d,
    \end{equation*}
    where we used the fact that $i(k)>0$, and so $k>8d$.
    For $\Ext_1$ we choose the average-case strong extractor $\Ext_{i(k/3)}$ (recall that $k''\geq k/3$ and note that $i(k/3)<i(k)$) 
    with input length $n''$, entropy requirement $k/3$, error $\eps_{i(k/3)}$, output length at least $(k/3)/3=k/9$, and seed length $d$ guaranteed by the induction hypothesis above. 

\end{enumerate}

\cref{it:condense-GUV,it:split-X,it:condense-again,it:block-ext-GUV} above yield a strong seeded extractor $\Ext'_{i(k)} \colon \bits^n\times\bits^{d'}\to\bits^{m'}$ with min-entropy requirement $k$, error $\eps'=\eps_{i(k/3)}+\eps_{\mathsf{BExt}}+\eps_{\mathsf{BS}}+2\eps_0=\eps_{i(k/3)}+7\eps_0$ (where the $2\eps_0$ term comes from the two fixings of the seeds in the two condensing steps in \cref{it:condense-GUV,it:condense-again}), seed length
\begin{equation*}
    d' = d_{\mathsf{BExt}} + d'_{\mathsf{RS}} + d_{\mathsf{RS}} \leq d/16 + C'\log(n/\eps_0),
\end{equation*}
for some constant $C'>0$,
and output length $m'=k/9$. 

\paragraph{Boosting the output length of $\Ext'_{i(k)}$.}
To conclude the definition of $\Ext_{i(k)}$, we need to increase the output length of $\Ext'_{i(k)}$ from $k/9$ to $k/3$. To that end, we use \cref{lem:boost-output}.
Applying \cref{lem:boost-output} once with $\Ext_1=\Ext'_{i(k_1)}$ with $k_1=k$ and $\Ext_2=\Ext'_{i(k_2)}$ with  $k_2=k-k/9-1=8k/9-1$ and $g=1$ yields a strong $(k,3\eps')$-seeded extractor $\Ext''_{i(k)}$ with output length $(k_1+k_2)/9 \geq k(1-(8/9)^2)-1$ and seed length $2(d/16+C'\log(n/\eps_0))=d/8+2C'\log(n/\eps_0)$.
Applying \cref{lem:boost-output} again with $\Ext_1=\Ext''_{i(k_1)}$ for $k_1=k$ and $\Ext_2=\Ext''_{i(k_2)}$ for $k_2=(8/9)^2 k$ and $g=1$ yields a strong $(k,9\eps')$-seeded extractor with output length $m\geq k(1-(8/9)^4)-1\geq k/3$ and seed length $2(d/8+2C'\log(n/\eps_0))=d/4+4C'\log(n/\eps_0)\leq d$, which we set as $\Ext_{i(k)}$. 
This second invocation of \cref{lem:boost-output} is also valid, since $k_2 = (8/9)^2 k = k - (k(1-(8/9)^2)-1)-1 = k_1-m_1-g$.
Note that the error $\eps_{i(k)}=9\eps'=9\eps_{i(k/3)}+63\eps_0$, as desired.

\paragraph{Time complexity and final error.}
It remains to analyze the time complexity and the overall error of the recursive procedure above.
Evaluating $\Ext_{i(k)}$ requires at most eight evaluations of the condenser from \cref{thm:rscond}, four evaluations of the fast hash-based extractor from \cref{lem:fasthash-short}, four evaluations of $\Ext_{i(k'')}$ for some $i(k'')<i(k)$, and simple operations that can be done in time $\Otilde(n)$. 
This means that the overall time complexity is $4^{i(k)}\cdot\Otilde(n)=\Otilde(n)$ after a one-time preprocessing step independent of the source and seed, since $4^{i(k)}=\poly(\log n)$ by \cref{eq:bound-ik}.
This preprocessing step corresponds to finding primitive elements for $O(\log\log n)$ fields $\F_q$ with prime orders $q\leq \poly(n/\eps_0)=\poly(n/\eps)$.
Furthermore, $\eps_{i(k)}=O(\eps_0+\eps_{i(k/3)})$ for all $k$, and so $\eps_{i(k)}=2^{O(i(k))}\eps_0=\poly(\log n)\cdot \eps_0\leq \eps$ provided that $\eps_0\leq \eps/\log^C n$ for a large enough constant $C>0$. 
\end{proof}

\subsubsection{The (relatively) high-error case} 

In this section, we consider the higher error case where $\eps\geq C n^3\cdot 2^{-k/\log k}$.
We instantiate the recursive approach of Srinivasan and Zuckerman~\cite[Section 5.5]{SZ99} appropriately with the fast condensers from \cref{sec:fast-condensers}, the sampler from \cref{lem:extendedSampler}, and the fast hash-based seeded extractors from \cref{lem:fasthash},  and analyze its complexity.

The next lemma shows how we can recursively decrease the seed length of an extractor.
We complete the construction by instantiating the base extractor in this recursion appropriately, and then increasing its output length.
\begin{lemma}[\protect{analogous to~\cite[Corollary 5.10]{SZ99}, with different instantiation and additional complexity claim}]\label{lem:reducelog}
    There exist constants $c,C>0$ such that the following holds.
    Suppose that for any positive integers $n_0$, $k_0=0.7n_0$, and some $\eps_0=\eps_0(n_0)\geq 2^{-ck_0}$ and $m_0=m_0(n_0)$ there exists a strong $(k_0,\eps_0)$-seeded extractor $\Ext_0 \colon \bits^{n_0}\times\bits^{d_0}\to\bits^{m_0}$ with seed length $d_0 \leq u\cdot \log(n_0/\eps_0)\leq k_0$.
    Then, for any positive integers $n$ and $k\leq n$ there exists a family of strong $(k,\eps)$-seeded extractors $\Ext \colon \bits^n\times\bits^d\to\bits^m$ 
    with error $\eps\leq C \log u\cdot \eps_0(ck)$, seed length $d\leq C\log u \cdot \log(n/\eps_0(ck))$, and output length $m\geq m_0(ck)$. 
    Furthermore,
    \begin{enumerate}
        \item If $\Ext_0$ is computable in time $T(n_0)$ and $k\geq C\log^2(n/\eps_0(ck))$, then $\Ext$ is computable in time $T(n)+\Otilde(n+\sqrt{n}\cdot \log(1/\eps_0(ck))^5)$;
        \label{item:nopreproc}

        \item If $\Ext_0$ is computable in time $T(n_0)$ after a preprocessing step, then $\Ext$ is computable in time $T(n)+\Otilde(n)$ after a 
        preprocessing step.\footnote{We discuss the precise preprocessing step in more detail in \cref{remark:preproc-rec}.}
        \label{item:preproc}
    \end{enumerate}
\end{lemma}
\begin{proof}
    We begin by discussing the high-level approach in this proof.
    On input an arbitrary $(n,k)$-source $X$, $\Ext$ proceeds as follows.
    First, it applies a fast strong condenser to $X$ to obtain a new source $X'$ with high min-entropy rate.
    If $k\geq C\log^2(n/\eps_0)$ then we can apply the KT condenser, which does not require preprocessing unless $\eps_0$ is very small.
    Otherwise, we apply the RS condenser.
    Then, we use $X'$ to generate a block source $Z=(Z_0,\dots, Z_t)$ with geometrically decreasing block lengths.
    The way we achieve this depends on the regime we are in. If we are in a regime where we must anyway resort to the RS condenser, then we generate each block by applying an appropriately instantiated RS condenser with a fresh seed to $X'$.
    Otherwise, if we are in a regime where we can use the KT condenser, then we use the expander random walks averaging sampler from \cref{lem:extendedSampler}, which runs in time $\Otilde(n)$ in this regime.\footnote{The sole reason for this case analysis is that by using the sampler instead of the RS condenser in the ``KT regime'' we can avoid a preprocessing step unless $\eps$ is tiny.}
    Finally, we apply block source extraction to $Z$.
    More concretely, we begin by applying the fast hash-based extractor from \cref{lem:fasthash} to the shorter blocks at the end of $Z$, up until the second block $Z_1$ of $Z$. This generates a sufficiently large (but still short) seed that we can use to extract from the first block $Z_0$ using the base extractor $\Ext_0$.

    We now formally analyze the approach above. 
    We begin by setting up relevant parameters:
    \begin{itemize}
    
        \item Let $C_{\mathsf{blocks}}\geq 1$ be a constant to be determined.
        Set $\ell_0=\frac{k}{100\cdot C_{\mathsf{blocks}}}$ and $k_0=0.7\ell_0$. 
        For $\eps_0=\eps_0(\ell_0)$ and $m_0=m_0(\ell_0)$, we define $\ell_1=C_{\mathsf{blocks}}\cdot u \log(\ell_0/\eps_0)$.  
        Then, we define $\ell_i=0.9\ell_{i-1}$ for all $i\geq 2$.
        The $\ell_i$'s will be block lengths for a block source $Z$.
        In particular, when performing block source extraction from $Z$ we will instantiate $\Ext_0$ with input length $n_0=\ell_0$.

        \item Define $m_1=u\cdot \log(\ell_0/\eps_0)$ and $m_i=0.9m_{i-1}$ for all $i\geq 2$.
        The $m_i$'s will be output lengths for block source extraction from $Z$.
        
        \item Set $t=1+\left\lceil\frac{\log\left(u/\log u\right)}{\log(1/0.9)}\right\rceil$. This will be the number of blocks of $Z$.
        We have $m_t=0.9^{t-1}m_1\in[0.9\log u\cdot \log(\ell_0/\eps_0), \log u\cdot \log(\ell_0/\eps_0)]$.
        Furthermore, since $\ell_1=C_{\mathsf{blocks}}\cdot m_1$, we also have that $\ell_i=C_{\mathsf{blocks}}\cdot m_i$ for all $i\geq 1$.

    \end{itemize}

    Let $X$ be an arbitrary $(n,k)$-source.
    The extractor $\Ext \colon \bits^n\times\bits^d\to\bits^m$ proceeds as follows on input $X$:
    \begin{enumerate}
        \item\label{it:condensing-SZ}
        Using a fresh seed $Y_{\Cond}$ of length $C_{\Cond}\log(n/\eps_0)$, apply a strong $(k,k',\eps_0^2)$-condenser $\Cond$ to $X$.
        If $k\geq C\log^2(n/\eps_0)$ for an apropriately large constant $C>0$, then we instantiate $\Cond$ with the KT strong $(k,k'=k,\eps_0^2)$-condenser (\cref{thm:ktcond}) instantiated with $\alpha=0.05$.
        Otherwise, we instantiate $\Cond$ with the lossy RS $(k,k'\geq 0.975k,\eps_0^2)$-condenser (\cref{thm:rscond}) instantiated with $\alpha=0.025$.  
        By the second part of either \cref{thm:ktcond} or \cref{thm:rscond}, we get that with probability at least $1-\eps_0$ over the choice of $Y_{\Cond}=y$ it holds that $X'=\Cond(X,y)$ is $\eps_0$-close to an $(n',k')$-source with $k'\geq 0.95n'$.
        
        From here onwards we work under such a good fixing $Y_{\Cond}=y$ and also assume that $X'$ is an $(n',k')$-source.
        We account for the resulting $2\eps_0$ error term in the final extractor error later.

        \item \label{it:block-source-SZ}
        We use $X'$ to generate a block source $Z=(Z_0, Z_1,\dots, Z_t)$ with geometrically decreasing block lengths.
        Our procedure depends on the regime of parameters we are in:
        \begin{enumerate}
            \item 
            \label{it:large-k}
            If $k\geq C\log^2(n/\eps_0)$ for an appropriately large constant $C>0$, then for each $i=0,1,\dots,t$ let $\Samp_i \colon \bits^{r_i}\to [n']^{\ell_i}$ be the $(\gamma=\eps_0,\theta=1/100)$-averaging sampler from \cref{lem:extendedSampler} with input length $r_i=\gamma_{\Samp}\log(n'/\eps_0)$ for some constant $\gamma_{\Samp}>0$.
        We choose the constant $C_{\mathsf{blocks}}$ above to be large enough so that
        $n'\geq \ell_i\geq \ell_t\geq  C_{\Samp}\log(1/\eps_0)/\theta^2$ for all $i\in [t]$, 
        where $C_{\Samp}$ is the constant $C$ from \cref{lem:extendedSampler}.
        To see that $\ell_i\leq n'$ for $i=0,1,\dots,t$ (and so indeed \cref{lem:extendedSampler} can be applied to obtain $\ell_i$ samples), note that
        \begin{equation}\label{eq:sum-ell-is}
            \ell_0+\sum_{i=1}^t \ell_i \leq \ell_0+\sum_{i=1}^\infty \ell_i = \ell_0+10\ell_1\leq k/9< n'.
        \end{equation}
        The second-to-last inequality uses the fact that 
        \begin{equation*}
            \ell_1=C_{\mathsf{blocks}} \cdot u\log(\ell_0/\eps_0)\leq C_{\mathsf{blocks}}\cdot k_0\leq C_{\mathsf{blocks}}\cdot \ell_0= k/100,
        \end{equation*}
        where the first inequality holds since $u\log(\ell_0/\eps_0)\leq k_0$ is an hypothesis in the lemma statement.

        By \cref{lem:sample-entropy} instantiated with $X'$ and $\Samp_0$, we conclude that
        \begin{equation}\label{eq:base-case}
            (Y_0, Z_0) \approx_{\eps_0 + 2^{-\beta_{\Samp} k}} (Y_0, Z'_0),
        \end{equation}
        with $\beta_{\Samp}>0$ the constant guaranteed by \cref{lem:sample-entropy}, 
        where $(Z'_0|Y_0=y_0)$ is an $(\ell_0,0.9\ell_0)$-source for every $y_0$.
        We now argue how this guarantee extends to more blocks.
        
        For each $Z_j$, define $Z_{j,\vec{y}}=(Z_j|(Y_0,\dots,Y_{i-1})=\vec{y})$.
        Consider any fixing $(Y_0,\dots,Y_{i-1})=\vec{y}$.
        Then, \cref{lem:tailboundchainrule} with $\delta = 2^{-\beta_{\Samp} k}$,  where $\beta_{\Samp}>0$ is taken to be a small enough constant, and $\ell=k/9$ (from the upper bound in \cref{eq:sum-ell-is}) implies that
        \begin{align}\label{eq:LB-ent-X'}
            \minH(X'|Z_{0,\vec{y}}=z_0,\dots,Z_{i-1,\vec{y}}=z_{i-1})&\geq k' - \ell - \beta_{\Samp} k \nonumber\\
            &\geq 0.95n' - k/9 - \beta_{\Samp} k \nonumber\\
            &\geq 0.95n'-n'/9-\beta_{\Samp} n' \nonumber\\
            &\geq  0.8n'
        \end{align}
        except with probability at most $2^{-\beta_{\Samp} k}$ 
        over the choice $(z_0,\dots,z_{i-1})\sim(Z_{0,\vec{y}},\dots,Z_{i-1,\vec{y}})$.
        Call a fixing $\vec{v}=(y_0,z_0,\dots,y_{i-1},z_{i-1})$ for which \cref{eq:LB-ent-X'} holds \emph{good}.
        Define $X'_{\vec{v}}=(X'|(Y_0,Z_0,\dots,Y_{i-1},Z_{i-1})=\vec{v})$.
        Then, by \cref{eq:LB-ent-X'} and \cref{lem:sample-entropy} we know that for all good $\vec{v}$-s we have
        \begin{equation}\label{eq:new-block}
            \left(Y_i, Z_{i,\vec{v}} = (X'_{\vec{v}})_{\Samp_i(Y_i)}\right)\approx_{\eps_0+2^{-\beta_{\Samp}k}} \left( Y_i, Z'_{i,\vec{v}}\right),
        \end{equation}
        with $(Z'_{i,\vec{v}}|Y_i=y_i)$ an $(\ell_i,0.7\ell_i)$-source for all $y_i$.

        Analogously to~\cite[proof of Lemma 17]{NZ96} and the proof of \cref{lem:short}, we can use \cref{eq:new-block} and the fact that  $\vec{v}\sim (Y_0,Z_{0,\vec{y}},\dots,Y_{i-1},Z_{i-1,\vec{y}})$ is good with probability at least $1-2^{-\beta_{\Samp} k}$ to show by induction on the number of blocks that
        \begin{equation}\label{eq:close-block}
           (Y_0,\dots,Y_t,Z) \approx_{\eps_{\mathsf{block}}} (Y_0,\dots,Y_t, Z''),
        \end{equation}
        where for every $(y_0,\dots,y_t)$,  $(Z''|Y_0=y_0,\dots,Y_t=y_t)$ is an exact $((\ell_0,\dots,\ell_t),0.7)$-block-source, and $\eps_{\mathsf{block}}=(t+1)(\eps_0+2\cdot 2^{-\beta_{\Samp} k})$.

        \item \label{it:small-k}
        If $k< C\log^2(n/\eps_0)$, then for each $i=0,1,\dots,t$ let $\Cond_i\colon\bits^{n'}\to\bits^{m_i}$ be the strong RS $(k_i=0.9\ell_i,k'_i,\eps_0^2)$-condenser from \cref{thm:rscond} instantiated with $\alpha=0.01$.
        Note that $k_i\leq m_i\leq (1+\alpha)k_i\leq \ell_i$ and $k'_i\geq (1-\alpha)m_i\geq 0.99k_i$.
        Using a fresh seed $Y_i$ of length at most $C'_\alpha \log(n'/\eps_0^2)\leq 2C'_\alpha\log(n'/\eps_0)$ with $C'_\alpha$ the constant guaranteed by \cref{thm:rscond} for $\alpha=0.01$, we compute $W_i =\Cond_i(X',Y_i)$ and obtain $Z_i$ by padding $W_i$ to get length exactly $\ell_i$, i.e., $Z_i=(W_i, 0^{\ell_i-m_i})$. This is valid since, as already pointed out, $m_i\leq\ell_i$.
        Later we argue that, despite this padding, $Z_i$ will be statistically close to a source $Z'_i$ with sufficiently large min-entropy rate.

        The argument showing that $Z_0, Z_1,\dots, Z_t$ is close to an exact $((\ell_0,\ell_1,\dots,\ell_t),0.7)$-block-source conditioned on the seeds $Y_1,\dots,Y_t$ is very similar to that of the previous case.
        Nevertheless, we do need to check that the choices of $\ell_0,\dots,\ell_t$ allow the desired applications of \cref{thm:rscond}.

        First, to apply \cref{thm:rscond} we need that $k_i\geq C_\alpha \log(n'/\eps_0^2)$ for all $i$, with $C_\alpha>0$ the constant from \cref{thm:rscond} for $\alpha=0.01$. Since $k_i=0.9\ell_i\geq 0.9\ell_t=k_t$ for all $i$, it suffices to show that $k_t\geq C_\alpha \log(n'/\eps_0^2)$.
        Since $n'\leq 2k$ and
        \begin{equation*}
            k_t=0.9\ell_t=0.9^t \ell_1 \geq 0.9^2 C_{\mathsf{blocks}}\log(\ell_0/\eps_0)= 0.9^2 C_{\mathsf{blocks}}\log\left(\frac{k}{100 C_{\mathsf{blocks}}\eps_0}\right),
        \end{equation*}
        it is enough to guarantee that
        \begin{equation*}
            0.9^2 C_{\mathsf{blocks}}\log\left(\frac{k}{100 C_{\mathsf{blocks}}\eps_0}\right) \geq 2 C_\alpha \log(2k/\eps_0).
        \end{equation*}
        If we take $C_{\mathsf{blocks}}$ to be large enough so that $C_{\mathsf{blocks}}\geq 5
        C_\alpha$, the desired inequality holds 
        provided that, say, $k\geq 2(100 C_{\mathsf{blocks}})^2$, which is a constant lower bound on $k$.

        Analogously to \cref{eq:LB-ent-X'} in the previous case, for any $i=0,1,\dots,t$ and an arbitrary fixing $\vec{y}=(y_0,\dots,y_{i-1})$ we have that
        \begin{equation*}
            \minH(X'|Z_{0,\vec{y}}=z_0,\dots,Z_{i-1,\vec{y}}=z_{i-1})\geq 0.8n'
        \end{equation*}
        except with probability at most $2^{-\beta k}$ over the choice of $z_0,\dots,z_{i-1}$, for a sufficiently small constant $\beta>0$.
        Therefore, under such a good fixing $\vec{v}=(y_0,z_0,\dots,y_{i-1},z_{i-1})$, and defining $W_{i,\vec{v}}$ to be $W_i$ conditioned on $(Y_0,Z_0,\dots,Y_{i-1},Z_{i-1})=\vec{v}$, \cref{thm:rscond} guarantees that  $(Y_i,W_{i,\vec{v}})\approx_{\eps_0} (Y_i,W'_{i,\vec{v}})$ with $(W'_{i,\vec{v}}|Y_i=y_i)$ an $(m_i,k'_i)$-source for all $y_i$ provided that $0.8n'\geq k_i=0.9\ell_i$.
        This holds, since
        \begin{equation*}
            0.8n' \geq 0.8k \geq \frac{k}{100 C_{\mathsf{blocks}}} = \ell_0 \geq 0.9\ell_i = k_i
        \end{equation*}
        for any $i=0,1,\dots,t$.
        By padding $W_{i,\vec{v}}$ and $W'_{i,\vec{v}}$ with $0^{\ell_i-m_i}$, we get that $(Y_i,Z_{i,\vec{v}})\approx_{\eps_0} (Y_i,Z'_{i,\vec{v}})$ with $(Z'_{i,\vec{v}}|Y_i=y_i)$ an $(\ell_i,k'_i)$-source for all $y_i$.
        Furthermore, the min-entropy of $Z'_{i,\vec{v}}$ satisfies
        \begin{equation*}
            k'_i\geq (1-\alpha)k_i= 0.99 k_i = 0.99\cdot 0.9\ell_i\geq 0.7\ell_i,
        \end{equation*}
        and so $(Z'_{i,\vec{v}}|Y_i=y_i)$ is an $(\ell_i,0.7\ell_i)$-source for all $y_i$.
        As in the previous case, 
        this can be used to conclude by induction that
        \begin{equation*}
           (Y_0,\dots,Y_t,Z) \approx_{\eps'_{\mathsf{block}}} (Y_0,\dots,Y_t, Z''),
        \end{equation*}
        where for every fixing $(y_0,\dots,y_t)$ we have that $(Z''|Y_0=y_0,\dots,Y_t=y_t)$ is an exact $((\ell_0,\dots,\ell_t),0.7)$-block-source, and $\eps'_{\mathsf{block}}=(t+1)(\eps_0+2^{-\beta k})$. 
        We choose $\beta=\beta_{\Samp}$ as in \cref{it:large-k}, and so $\eps'_{\mathsf{block}}\leq \eps_{\mathsf{block}}$.
        
        \end{enumerate}

        \item \label{it:block-ext-SZ}
        We apply block source extraction (\cref{lem:block-ext}) to $Z=(Z_0, Z_1, \dots, Z_t)$.
        More precisely, let $\BExt \colon \bits^{\ell_0}\times\cdots\times\bits^{\ell_t}\times\bits^{d_t}\to\bits^{m_0}$ be the strong $(k_0,k_1,\dots,k_t,(t+1)\eps_0)$-block-source extractor with $k_i=0.7\ell_i$ obtained via \cref{lem:block-ext} as follows. 
        We instantiate $\Ext_0$ with the strong extractor promised by the lemma statement with seed length $d_0\leq u\cdot \log(\ell_0/\eps_0) = m_1$.
        For $i\in[t]$, we instantiate $\Ext_i \colon \bits^{\ell_i}\times \bits^{d_i}\to\bits^{m_i}$ as the strong $(k_i=0.7\ell_i,\eps_0)$-seeded extractor from \cref{lem:fasthash} with seed length $d_i= 2m_i + 4\log(\ell_i/\eps_0)+8$. 
        We choose the constant $C_{\mathsf{blocks}}$ to be large enough so that
        \begin{equation*}
            m_i = \ell_i/C_{\mathsf{blocks}} \leq 0.7\ell_i -16\log(4/\eps_0)=k_i-16\log(4/\eps_0),
        \end{equation*}
        as required by \cref{lem:fasthash}. 
        This is possible since by choosing $C_{\mathsf{blocks}}$ large enough we have
        \begin{equation*}
            \ell_i\geq \ell_t = C_{\mathsf{blocks}} \cdot m_t\geq 0.9\cdot C_{\mathsf{blocks}}\log u\cdot \log(\ell_0/\eps_0) \geq 100\log(4/\eps_0)
        \end{equation*}
        for all $i\in[t]$, and so $0.7\ell_i -16\log(4/\eps_0)\geq \ell_i/2$ for all $i\in[t]$.
        Furthermore, for any $i\geq 2$ the output length $m_i$ of $\Ext_i$ satisfies 
        \begin{equation*}
            d_i+m_i = 3m_i + 4\log(n/\eps_0)+8 \geq 2m_{i-1}+4\log(n/\eps_0)+8 \geq d_{i-1},
        \end{equation*}
        where we recall that $m_i=0.9 m_{i-1}$ for $i\geq 2$.
        Finally, the output length of $\Ext_1$ satisfies $d_1+m_1\geq m_1\geq d_0$, where we recall that $d_0$ is the seed length of $\Ext_0$.

        Let $Y_{\mathsf{BExt}}$ be a fresh seed of length $d_t$.
        With the desired upper bound on the seed length $d$ from the lemma's statement in mind, we note that
        \begin{equation}\label{eq:dt-UB}
            d_t \leq 2m_t +4\log(\ell_t/\eps_0)+8 \leq 2\log u\cdot \log(\ell_0/\eps_0) + 4\log(\ell_0/\eps_0)\leq 6\log u\cdot \log(n/\eps_0),
        \end{equation}
        since $\ell_0\leq k\leq n$.
        By \cref{lem:block-ext}, we get that
        \begin{align*}
            \left(Y_0,\dots,Y_t,Y_{\mathsf{BExt}},\BExt(Z,Y_{\mathsf{BExt}})\right) &\approx_{\eps_{\mathsf{block}}} \left(Y_0,\dots,Y_t,Y_{\mathsf{BExt}}, \BExt(Z'',Y_{\mathsf{BExt}})\right) \\
            &\approx_{(t+1)\eps_0}
            \left(Y_0,\dots,Y_t,Y_{\mathsf{BExt}},U_{m_0}\right).
        \end{align*}
        Applying the triangle inequality, we conclude that
        \begin{equation*}
            \left(Y_0,\dots,Y_t,Y_{\mathsf{BExt}}, \BExt(Z,Y_{\mathsf{BExt}})\right) \approx_{\eps_{\mathsf{block}}+(t+1)\eps_0} \left( Y_0,\dots,Y_t,Y_{\mathsf{BExt}},U_{m_0}\right).
        \end{equation*}
    \end{enumerate}

    We now define our final strong extractor $\Ext \colon \bits^n\times\bits^d\to\bits^{m_0}$ (recall that we abbreviate $m_0=m_0(\ell_0)$). 
    Choose our overall seed to be $Y=(Y_{\mathsf{Cond}}, Y_0, Y_1, \dots, Y_t, Y_{\mathsf{BExt}})$ and set $\Ext(X,Y)=\BExt(Z,Y_{\mathsf{BExt}})$.
        By the discussion above, $\Ext$ is a strong $(k,\eps)$-extractor with error (recall that we abbreviate $\eps_0=\eps_0(\ell_0)$)
        \begin{equation*}
            \eps = 2\eps_0+\eps_{\mathsf{block}}+(t+1)\eps_0 \leq (2t+4)(\eps_0+2\cdot 2^{-\beta_{\Samp} k}).
        \end{equation*}
        As discussed above, the $2\eps_0$ accounts for fixing the seed in the condensing step of \cref{it:condensing-SZ} and for assuming that $X'$ is an $(n',k')$-source under this fixing.
        Since $t=O(\log u)$, if we pick $C>0$ to be a sufficiently large constant and $c$ to be smaller than $\beta_\Samp$ so that $\eps_0\geq 2^{-ck_0}\geq 2^{-ck}\geq 2^{-\beta_\Samp k}$, we get
        \begin{equation*}
            \eps \leq C \log u \cdot \eps_0.
        \end{equation*}
        The seed length is
        \begin{equation*}
            d = |Y_{\mathsf{Cond}}|+|Y_{\mathsf{BExt}}|+\sum_{i=0}^t |Y_i| \leq C_\Cond \log(n/\eps_0)+d_t + t\cdot \gamma\log(n'/\eps_0)
            \leq
            C\log u\cdot \log(n/\eps_0),
        \end{equation*}
        where $\gamma=\max(C'_\alpha,\gamma_{\Samp})$,
        provided that $C$ is large enough (again since $t=O(\log u)$), as desired.
        We used \cref{eq:dt-UB} to bound $d_t$ and obtain the last inequality.

        \paragraph{Time complexity.}
        It remains to analyze the time complexity of $\Ext$.
        We proceed by cases:
        \begin{itemize}
            \item If $k\geq C\log^2(n/\eps_0)$ 
            with $C$ a sufficiently large constant, then by \cref{thm:ktcond} we get that \cref{it:condensing-SZ} either takes time $\Otilde(n+\sqrt{n}\cdot \log(1/\eps_0)^5)$, or time $\Otilde(n)$ after a one-time preprocessing step.
            Regarding \cref{it:block-source-SZ}, each averaging sampler $\Samp_i$ from \cref{lem:extendedSampler} runs in time $\log^2(1/\eps_0)\cdot\polylog n + O(\ell_i \log n)$. 
            This is $\Otilde(n)$ when $\eps_0\geq 2^{-\Otilde(\sqrt{n})}$, which is implied by the constraint $k\geq C\log^2(n/\eps_0)$.
            Since there are $t=O(\log u)=O(\log n)$ blocks, \cref{it:block-source-SZ} runs in $\Otilde(n)$ times.
            \cref{it:block-ext-SZ} takes time $T(\ell_0)+t\cdot\Otilde(n)=T(\ell_0)+\Otilde(n)\leq T(n)+\Otilde(n)$, since $\Ext_0$ is computable in time $T(\ell_0)$, each $\Ext_i$ from \cref{lem:fasthash} are computable in time $\Otilde(n)$, and $\ell_0\leq n$.
            Therefore, in this case $\Ext$ is computable in overall time $T(n)+\Otilde(n+\sqrt{n}\cdot \log(1/\eps_0)^5)$, or time $T(n)+\Otilde(n)$ after a preprocessing step.

            \item  If $k< C\log^2(n/\eps_0)$, then \cref{it:condensing-SZ} takes time $\Otilde(n)$ after a preprocessing step.
            In this case, \cref{it:block-source-SZ} amounts to $t=O(\log n)$ applications of \cref{thm:rscond}, and so runs in time $\Otilde(n)$ after a preprocessing step.
        \cref{it:block-ext-SZ} takes time $T(\ell_0)+\Otilde(n)$ after a preprocessing step,
        and so $\Ext$ is computable in overall time $T(\ell_0)+\Otilde(n)\leq T(n)+\Otilde(n)$ after a
        preprocessing step.
        \end{itemize}

\end{proof}

Denote by $\log^{(i)}$ the function that iteratively applies $\log$ a total of $i$ times (so $\log^{(1)}\!n=\log n$, $\log^{(2)}\!n=\log\log n$, and so on).
Denote by $\log^{*}$ the iterated logarithm.
Then, we have the following corollary.

\begin{corollary}\label{coro:Ext-bad-output}
    There exists a constant $C>0$ such that the following holds.
    Let $n$ be any positive integer and $i$ any positive integer such that $\log^{(i)}\!n\geq 6C$.
    Then, for any $k\leq n$ and any $\eps\geq n^3\cdot 2^{-k/2^{C\cdot i}}$ there exists a strong $(k,\eps)$-seeded extractor $\Ext \colon \bits^n\times\bits^d\to\bits^m$ with seed length $d\leq C\log^{(i)}\!n\cdot \log(n/\eps)$ and output length $m\geq k/2^{C\cdot i}$.
    Furthermore,
    \begin{enumerate}
        \item if $k \geq 2^{C\cdot i }\cdot \log^2(n/\eps)$ and $\eps\geq 2^{-Cn^{0.1}}$, then $\Ext$ is computable in time $\Otilde(n)$;

        \item if $k < 2^{C\cdot i }\cdot \log^2(n/\eps)$ or $\eps< 2^{-Cn^{0.1}}$, then $\Ext$ is computable in time $\Otilde(n)$ after a 
        preprocessing step.
    \end{enumerate}

    Consequently, if we choose $i$ to be the largest integer such that $\log^{(i)}\!n\geq 6C$ (which satisfies $i\leq \log^{*}\!n$) we get a strong $(k,\eps)$-seeded extractor $\Ext \colon \bits^n\times\bits^d\to\bits^m$ with seed length $d\leq 6C^2 \log(n/\eps)$ 
    and output length $m\geq k/2^{C\log^{*}\!n}$ for any error $\eps\geq n^3\cdot 2^{-k/2^{C\log^{*}\!n}}$.
    If $k \geq 2^{C\log^{*}\!n}\cdot \log^2(n/\eps)$ and $\eps\geq n^3 \cdot 2^{-Cn^{0.1}}$, then $\Ext$ is computable in time $\Otilde(n)$.
    Otherwise, $\Ext$ is computable in time $\Otilde(n)$ after a preprocessing step. 
\end{corollary}
\begin{proof}
    This is a consequence of iteratively applying \cref{lem:reducelog} $i$ times.
    Note that here part of the relevant condition for the preprocessing is $k\geq 2^{Ci}\log^2(n/\eps)$.
    The reason behind this is that each application of \cref{lem:reducelog} reduces the min-entropy requirement by a constant factor.

    Let $c,C>0$ be the constants guaranteed by \cref{lem:reducelog}.
    For the first application of the lemma, we take $\Ext_0 \colon \bits^{n}\times\bits^{d_0}\to\bits^{m_0}$ to be the strong $(k_0=0.7n,\eps_0)$ extractor from \cref{lem:fasthash} with $m_0=k_0/20$ and $\eps_0\geq 2^{-ck_0/100}$ to be defined later.
    The corresponding seed length is $d_0 \leq 2m_0 + 4\log(n/\eps_0)+4$, which satisfies $d_0\leq k_0$, and so the initial value of $u$ is
    $u_0=d_0/\log(n/\eps_0)\leq k_0$.
    Denote by $\Ext_1$ the resulting strong seeded extractor.
    In the second application of \cref{lem:reducelog}, we instantiate $\Ext_0$ with $\Ext_1$ instead to obtain a new strong seeded extractor $\Ext_2$, and so on.
    For each $j\in[i]$, we obtain a family of strong $(k,\eps_j)$-seeded extractors $\Ext_j \colon \B^n\times \B^{d_j}\to \B^{m_j}$ parameterized by $k$ with output length $m_j=m_{j-1}(ck)$,
    error
    \begin{equation*}
        \eps_j = C \log u_{j-1}\cdot \eps_{j-1}(ck)
    \end{equation*}
    and seed length
    \begin{align*}
        d_j = C \log u_{j-1}\cdot \log(n/\eps_{j-1}(ck)) =C \log u_{j-1}\cdot \log\left(\frac{n \cdot C\log u_{j-1}}{\eps_j}\right),
    \end{align*}
    where
    \begin{align*}
        u_j &= \frac{d_j}{\log(n/\eps_j)} \\
        &= C \log u_{j-1} \cdot \left(1+ \frac{\log C}{\log(n/\eps_j)}+\frac{\log\log u_{j-1}}{\log(n/\eps_j)}\right) \\
        &\leq C \log u_{j-1} \cdot \left(1+ \frac{\log C}{\log n}+\frac{\log\log u_{j-1}}{\log n}\right) \\
        &\leq 3C \log u_{j-1}.
    \end{align*}
    The last inequality uses the fact that $u_{j-1}\leq u_0\leq n$ for all $j$.
    
    Recall that from the corollary statement that $i$ is such that $\log^{(i)}\!n \geq 6C$.
    We show by induction that $u_j \leq 3C \log^{(j)} n+3C\log(6C)$ for all $j=0,\dots,i$.
    This is immediate for the base case $j=0$, since $u_0\leq k_0\leq n$.
    For the induction step, note that
    \begin{multline*}
        u_{j+1}\leq 3C\log u_j \leq 3C\log(3C\log^{(j)}n+3C\log(6C))\\ \leq 3C\log(2\cdot 3C\log^{(j)}n)= 3C\log^{(j+1)}n+3C\log(6C),
    \end{multline*}
    as desired.
    This implies that
    \begin{equation*}
        d_j = u_j \cdot \log(n/\eps_j) \leq 6C\log^{(j)} n\cdot \log(n/\eps_j)
    \end{equation*}
    and
    \begin{equation*}
        \eps_j = C\log u_{j-1}\cdot \eps_{j-1}(ck) \leq (6C)^j \left(\prod_{j'=0}^{j-1} \log^{(j')} n\right)\cdot \eps_0(c^j k) 
    \end{equation*}
    for all $j\in[i]$.
    We may assume that $C$ is large enough that $\log a \leq \sqrt{a}$ for all $a\geq C$, in which case $\prod_{j'=0}^{j-1} \log^{(j')} n\leq \prod_{j'=0}^{j-1} n^{2^{-j'}}\leq n^2$ since $\log^{(j')} n\geq C$ for all $j'\leq i$ by hypothesis.
    Therefore, we obtain final output length
    \begin{equation*}
        m_i = m_0(c^i k) = k/2^{O(i)},
    \end{equation*}
    final error $\eps_i$ satisfying
    \begin{equation*}
        \eps_0(ck)\leq \eps_i \leq (6C)^i \cdot n^2 \cdot \eps_0(c^i k) \leq n^3 \cdot \eps_0(c^i k),
    \end{equation*}
    where the last inequality uses that $\log^{(i)}\!n \geq 6C$,
    and final seed length
    \begin{equation*}
        d_i \leq 6C\log^{(i)}\!n\cdot \log(n/\eps_i).
    \end{equation*}

     We now instantiate $\eps_0(c^i k) = \eps/n^3$.
    Note that $\eps_0(c^i k)\geq 2^{-0.7c^{i+1} k/100}$ as required for the choice of $\Ext_0$ above so long as $\eps\geq n^3 \cdot 2^{-0.7c^{i+1}k}$, which holds by the corollary's hypothesis if $C$ is a large enough constant. 
    With this choice of $\eps_0(c^i k)$ we get final error $\eps_i\leq n^3 \cdot \eps_0(c^i k)=\eps$.
    In fact, we can make $\eps_i$ larger so that $\eps_i=\eps$, in which case the final seed length satisfies
    \begin{equation*}
        d_i \leq 6C \log^{(i)}\!n\cdot \log(n/\eps),
    \end{equation*}
    as desired.

    \paragraph{Time complexity.}
    Finally, we discuss the time complexity of $\Ext$.
    Note that the initial choice for $\Ext_0$ is computable in time $\Otilde(n_0)$.
    Therefore, if $k\geq 2^{C \cdot i}\log^2(n/\eps)$ then each application of \cref{lem:reducelog} runs in time $\Otilde(n+\sqrt{n}\log(1/\eps_0(c^i k))^5)$.
    This uses the fact that the error increases in each application of \cref{lem:reducelog}.
    Since $\eps_0(c^i k) \geq \eps/n^3$, then each application of \cref{lem:reducelog} runs in time $\Otilde(n)$ without preprocessing when $\eps\geq 2^{-C n^{0.1}}$.
    Otherwise, the condition in \cref{item:preproc} of \cref{lem:reducelog} holds and so $\Ext$ is computable in time $\Otilde(n)$ after a preprocessing step, since we always have $u\leq n$ in each application of the lemma.
\end{proof}

\begin{remark}[the preprocessing in \cref{coro:Ext-bad-output}]
\label{remark:preproc-rec}
    \em
    For the sake of readability we were not explicit about the precise preprocessing in the statement of Case 2 of \cref{coro:Ext-bad-output}.
    We expand on that now.
    \cref{coro:Ext-bad-output} recursively invokes \cref{lem:reducelog} $i$ times.
    All these recursive calls use the same type of condenser (either the KT condenser or the RS condenser), depending on the initial choices of $k$ and $\eps$.
    Therefore:
    \begin{itemize}
        \item If $k \geq 2^{C\cdot i }\cdot \log^2(n/\eps)$ and $\eps< 2^{-Cn^{0.1}}$, then the preprocessing corresponds to the preprocessing for $i$ calls of the KT condenser -- for example, generating $i$ primes $q\leq \poly(n/\eps)$.

        \item If $k < 2^{C\cdot i }\cdot \log^2(n/\eps)$ and $\eps< 2^{-Cn^{0.1}}$,
        then each invocation of \cref{lem:reducelog} requires one preprocessing step for the RS condenser call in \cref{it:condensing-SZ},
        and, naively, $t$ preprocessing steps for the $t=O(\log n)$ RS condenser calls in \cref{it:small-k}.
        We can further improve this by noting that all $t$ calls of \cref{thm:rscond} in \cref{it:small-k} use the same input length, error, and $\alpha$, and so we only need to run one preprocessing step that suffices for all $t$ calls simultaneously.
        So, overall, the preprocessing corresponds to the preprocessing for $2i$ calls of the RS condenser -- generating $2i$ primes $q\leq \poly(n/\eps)$ along with a primitive element for each $\F_q$. 
    \end{itemize}
\end{remark}

Note that \cref{coro:Ext-bad-output} only guarantees output length $k/2^{Ci}$ for each $i$.
In particular, when $i=\log^*\! n$ we get output length $k/2^{C\log^*\! n}$.
This is slightly sublinear, and we would like to aim for output length $ck$ for some constant $c>0$.
To obtain our final theorem, we use block source extraction to increase the output length of the extractor from \cref{coro:Ext-bad-output}, following a strategy of Zuckerman~\cite{Zuc97}.
\begin{theorem}
    There exist constants $c,C>0$ such that the following holds.
    For any integers $n$ and $k\leq n$ and any $\eps\geq Cn^3\cdot 2^{-k/\log k}$ 
    there exists a strong $(k,\eps)$-seeded extractor $\Ext \colon \bits^n\times\bits^d\to\bits^m$ with seed length $d\leq C\log(n/\eps)$ and output length $m\geq ck$.
    Furthermore,
    \begin{enumerate}
        \item if $k \geq 2^{C\log^{*}\!n}\cdot \log^2(n/\eps)$ and $\eps\geq 2^{-C n^{0.1}}$, then $\Ext$ is computable in time $\Otilde(n)$;

        \item if $k < 2^{C\log^{*}\!n}\cdot \log^2(n/\eps)$ or $\eps < 2^{-C n^{0.1}}$, then $\Ext$ is computable in time $\Otilde(n)$ after a preprocessing step. 
    \end{enumerate}
\end{theorem}
\begin{proof}
    We begin by providing an informal discussion of the proof.
    On input an arbitrary $(n,k)$-source $X$, we begin by applying a fast condenser to $X$ to obtain another source $X'$ with high min-entropy rate.
    Then, we split $X'$ in half to obtain a block source $(X_1, X_2)$ with two blocks.
    Then, we perform block source extraction on $(X_1, X_2)$.
    More concretely, we apply the extractor obtained by instantiating \cref{coro:Ext-bad-output} with $i=\log^*\! n$ to $X_2$, and then use its output as the seed to extract from $X_1$ using the extractor obtained by instantiating \cref{coro:Ext-bad-output} with $i=2$ (which has output length $\Omega(k)$).

    More formally,
    define $\eps' = \eps/6$ and let $X$ be an arbitrary $(n,k)$-source.
    The extractor $\Ext$ behaves as follows on input $X$:
    \begin{enumerate}
        \item \label{it:condensing-SZ-final}
        
        Apply a strong $(k,k',(\eps')^2)$-condenser $\Cond\colon \B^n\times \B^{d_{\mathsf{Cond}}}\to\B^{n'}$ to $X$, with output min-entropy $k'\geq 0.95n'$ and seed length $d_{\mathsf{Cond}}=C_\Cond\log(n/\eps')$.
        If $k \geq 2^{C\log^{*}\!n}\cdot \log^2(n/\eps)$, we instantiate $\Cond$ with the KT strong $(k,k',\eps')$-condenser (\cref{thm:ktcond} instantiated with $\alpha=0.05$).
        Otherwise, we instantiate $\Cond$ with the RS strong $(k,k',\eps')$-condenser (\cref{thm:rscond} instantiated with $\alpha=0.025$).
        By the second part of either \cref{thm:ktcond} or \cref{thm:rscond}, we get that with probability at least $1-\eps'$ over the choice of the seed $y$ we obtain an output $X'$ that is $\eps'$-close to an $(n',k')$-source with $k'\geq 0.95n'$.
        As in previous arguments, we work under such a good fixing of $y$ from here onwards and account for the probability $\eps'$ of selecting a bad seed in the final extractor error later on.

        \item 
        Let $(X_1, X_2)$ correspond to the first two blocks of $\lfloor n'/2\rfloor$ bits of $X'$.
        Choose the constant $c>0$ in the theorem statement small enough so that $\log(1/\eps')\leq \log(1/\eps)+3\leq ck+3\leq 0.05k-1$, which means that $\lfloor n'/2\rfloor -0.05k-\log(1/\eps') \geq 0.4n'$.
        Then, combining \cref{it:condensing-SZ-final} with \cref{lem:block-chain-rule} (instantiated with $t=2$, $\Delta=0.05k$, and $\eps=\eps'$) via the triangle inequality, $(X_1,X_2)$ is $3\eps'$-close to an exact $((n_1=\lfloor n'/2\rfloor,n_2=\lfloor n'/2\rfloor),0.8)$-block-source.

        \item \label{it:block-ext-SZ-final}
        Apply block source extraction to $(X_1, X_2)$.
        More precisely, let $\Ext_1\colon \bits^{n_1}\times\bits^{d_1}\to\bits^{m_1}$ be the strong $(k_1=0.8n_1,\eps_1=\eps')$-seeded extractor from \cref{coro:Ext-bad-output} instantiated with $i=2$ and 
        $n_1=\lfloor n'/2\rfloor $, which requires $\eps_1=\eps'\geq n_1^3 \cdot 2^{-c_1 k_1}$ and guarantees $d_1\leq C_1\log\log k_1\cdot \log(n'/\eps)$ and $m_1\geq c_1 k_1$, for constants $c_1,C_1>0$ guaranteed by \cref{coro:Ext-bad-output}.
        Furthermore, let $\Ext_2\colon \bits^{n_2}\times\bits^{d_2}\to\bits^{m_2}$ be the strong $(k_2=0.8n_2,\eps_2=\eps')$-seeded extractor from the ``Consequently'' part of \cref{coro:Ext-bad-output} and $n_2=\lfloor n'/2\rfloor $, which requires $\eps_2\geq n_2^3 \cdot 2^{-k_2/2^{C_2\log^{*}\! k_2}}$ and guarantees $d_2\leq C_2\log(n'/\eps)$ and $m_2\geq k_2/2^{C_2\log^{*}\! k_2}$, for a constant $C_2>0$ guaranteed by \cref{coro:Ext-bad-output}. 
        This choice of parameters ensures that $m_2\geq d_1$ and is valid by the lower bound on $\eps$ in the theorem statement, recalling that $\eps'=\eps/6$.
        Indeed, since $k\geq k_1=k_2\geq 0.4n'$, to see that $m_2\geq d_1$ it suffices to check that
        \begin{equation*}
            \frac{0.4k}{2^{C_2\log^{*}\! k}}\geq d_1=C_1 \log\log k \cdot \log(n'/\eps_1).
        \end{equation*}
        Since $\eps_1=\eps'=\eps/6$ and $\log(n'/\eps_1)=O(\log(k/\eps'))=O(\log k + k/\log k)=O(k/\log k)$, it is enough that
        \begin{equation*}
            k \geq C'_1\cdot 2^{C_2\log^{*}\! k}\log\log k\cdot \frac{k}{\log k}
        \end{equation*}
        for a sufficiently large constant $C'_1>0$, which holds whenever $k$ is larger than some appropriate absolute constant.
        Instantiating \cref{lem:block-ext} with $\Ext_1$ and $\Ext_2$ above yields a strong $(k_1=0.8n_1,k_2=0.8n_2,\eps_1+\eps_2)$-block-source extractor
        $\BExt\colon \bits^{n_1}\times\bits^{n_2}\times\bits^{d_2}\to\bits^{m_1}$.

        Since $X'$ is $3\eps'$-close to an exact $(n_1,n_2,0.8)$-block source, we conclude that
        \begin{equation}\label{eq:final-ext-SZ}
            \left( Y_{\mathsf{BExt}},\BExt(X',Y_{\mathsf{BExt}}) \right) \approx_{3\eps' + \eps_1+\eps_2} U_{d_2+m_1}.
        \end{equation}
    \end{enumerate}

    We define the output of our final strong extractor $\Ext \colon \bits^n\times \bits^d\to\bits^{m_1}$ to be $\BExt(X',Y_{\mathsf{BExt}})$.
    Since $\eps_1=\eps_2=\eps'$, \cref{eq:final-ext-SZ} implies that
    \begin{equation*}
       \left( Y_{\mathsf{Cond}}, Y_{\mathsf{BExt}}, \Ext(X,Y_{\mathsf{Cond}}, Y_{\mathsf{BExt}}) \right) \approx_{5\eps'} U_{d+m_1}.
    \end{equation*}
    This means that $\Ext$ is a strong $(k,\eps'+5\eps'=\eps)$-seeded extractor with seed length $d=|Y_{\mathsf{Cond}}|+|Y_{\mathsf{BExt}}|=O(\log(n/\eps))$ and output length $m_1\geq c_1 k_1 \geq c'_1 k$ for an absolute constant $c'_1>0$, where one of the $\eps'$ terms in the error comes from fixing the seed in the condensing step of \cref{it:condensing-SZ-final}.

    \paragraph{Time complexity.}
    Finally, we analyze the time complexity of $\Ext$.
    If $k \geq 2^{C\log^{*}\!n}\cdot \log^2(n/\eps)$ and $\eps\geq 2^{-C n^{0.1}}$, then
    \cref{it:condensing-SZ-final} runs in time $\Otilde(n)$.
    In \cref{it:block-ext-SZ-final}, $\Ext_1$ and $\Ext_2$ are both computable in time $\Otilde(n)$ under these lower bounds on $k$ and $\eps$,
    and thus so is $\BExt$.
    We conclude that $\Ext$ runs in time $\Otilde(n)$.

    Otherwise, if $k < 2^{C\log^{*}\!n}\cdot \log^2(n/\eps)$, then \cref{it:condensing-SZ-final} runs in time $\Otilde(n)$ after a 
    preprocessing step. And if $\eps< 2^{-C n^{0.1}}$ then $\Ext_1$ and $\Ext_2$ in \cref{it:block-ext-SZ-final} run in time $\Otilde(n)$ after a 
    preprocessing step.
    Therefore, overall, $\Ext$ runs in time $\Otilde(n)$ after a 
    preprocessing step in this case.
\end{proof}

\section{A Faster Instantiation of Trevisan's Extractor}\label{sec:fast-trevisan}

We first recall Trevisan's extractor \cite{Tre01,RRV02}, $\Tre \colon \B^{n} \times \B^{d} \rightarrow \B^{m}$,
set to some designated error $\eps > 0$. We will need the notion of weak designs, due to Raz, Reingold, and Vadhan~\cite{RRV02}.

\begin{definition}[weak design]
    A collection of sets $S_1,\dots,S_m\subseteq[d]$ is an \emph{$(\ell,\rho)$-weak design} if for all $i \in[m]$ we have $|S_i|=\ell$ and
    \begin{equation*}
        \sum_{j<i}2^{|S_i\cap S_j|}\leq \rho(m-1).
    \end{equation*}
\end{definition}
We will also need a $\delta$-balanced code $\mathcal{C} \colon \B^n \rightarrow \B^{\bar{n}}$.
The parameters of the weak design affect the extractor's parameters and can be set in a couple of different ways. The parameter $\ell$ is set to be $\log\bar{n}$, typically $\rho$ is chosen according to $m$, $\eps$, and the desired entropy $k$, and then $d$ is chosen as a function of $\ell$, $m$, and $\rho$ according to the weak design (see \cite{RRV02}).
Given $x \in \B^n$ and $y \in \B^d$, Trevisan's extractor outputs
\begin{equation}\label{eq:trevisan}
\Tre(x,y) = (\bar{x}|_{y_{S_1}}, \dots, \bar{x}|_{y_{S_m}}),
\end{equation}
where we denote $\bar{x} = \mathcal{C}(x)$ and interpret each length-$\log\bar{n}$ bit-string $y_{S_i}$ as a location in $[\bar{n}]$.
For the runtime analysis, it will be important to recall that $\delta$ is set to be $\frac{\eps}{cm}$ for some universal constant $c$.

\begin{theorem}
Trevisan's extractor of \cref{eq:trevisan}, set to extract $m$ bits with any error $\eps > 0$, is computable in time $\widetilde{O}(n+m\log(1/\eps))$. 

On a RAM in the logarithmic cost model, Trevisan's extractor is computable in time $O(n)+m\log(1/\eps) \cdot \polylog(n)$ with a preprocessing time of $\widetilde{O}(m\log(n/\eps))$. In particular, there exists a universal constant $c$, such that whenever $m \le \frac{n}{\log^{c}(n/\eps)}$, it runs in time $O(n)$, without the need for a separate preprocessing step.
\end{theorem}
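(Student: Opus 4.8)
The plan is to analyze the three components of Trevisan's extractor --- the weak design, the balanced code, and the final bit-selection step --- and bound the time spent on each. First I would handle the weak design: we need an $(\ell = \log\bar n, \rho)$-weak design with $m$ sets in $[d]$, where $\rho$ is a small constant (say $\rho = 2$) and hence $d = O(\ell^2) = O(\log^2\bar n)$. The construction of Raz--Reingold--Vadhan builds the sets $S_1,\dots,S_m$ greedily, and a careful accounting shows this can be done in time $\widetilde{O}(m \cdot d) = \widetilde{O}(m \log^2 \bar n)$; since $\bar n = n \cdot \poly(1/\eps)$ (from the distance-amplification construction of \cref{cor:biased} with balancedness parameter $\delta = \eps/(cm)$), we have $\log \bar n = O(\log(n/\eps))$, so this is $\widetilde{O}(m\log^2(n/\eps))$, which is absorbed into $\widetilde{O}(n + m\log(1/\eps))$ in the Turing model and forms the bulk of the preprocessing step $\widetilde{O}(m\log(n/\eps))$ in the RAM model. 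The key point is that the design depends only on $m$, $d$, $\eps$ --- not on $x$ or $y$ --- so it is legitimately preprocessing.

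Next I would handle the code. Using \cref{cor:biased} with bias parameter $\delta = \eps/(cm)$, we get an explicit $\delta$-balanced code $\mathcal{C}\colon\B^n \to \B^{\bar n}$ with $\bar n = n\cdot \poly(m/\eps)$, for which we can compute all $\bar n$ bits in time $\widetilde{O}(n) + O(\bar n \log(1/\delta)\log n \loglog n)$ --- but that is too slow since $\bar n$ is polynomially large in $1/\eps$. The crucial observation is that Trevisan's extractor only queries $\bar x = \mathcal{C}(x)$ at $m$ locations (one per output bit, each location being $y_{S_i} \in [\bar n]$), so we only need \emph{random access} to individual bits of $\mathcal{C}(x)$. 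By the ``any $m$ bits'' clause of \cref{cor:biased}, after the one-time $\widetilde{O}(n)$ computation of the inner Justesen codeword $\mathcal{C}_0(x)$, each queried bit of $\mathcal{C}(x)$ costs $O(\log(1/\delta) \log n \loglog n) = \widetilde{O}(\log(n/\eps))$ time (XOR-ing the $t = O(\log(1/\delta))$ coordinates of $\mathcal{C}_0(x)$ indexed by a length-$t$ expander walk). Over $m$ queries this is $\widetilde{O}(n) + m\cdot\widetilde{O}(\log(n/\eps))$. Finally, extracting the substring $y_{S_i}$ from $y$ and assembling the output is $O(m\ell) = O(m\log\bar n)$ in the Turing model and $O(m)$ word operations on a RAM. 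Summing everything gives $\widetilde{O}(n + m\log(1/\eps))$ in the Turing model, and $O(n) + m\log(1/\eps)\cdot\polylog(n)$ on the logarithmic-cost RAM after the $\widetilde{O}(m\log(n/\eps))$ design-computation preprocessing; when $m \le n/\log^c(n/\eps)$ for a suitable $c$ all the $m\cdot\polylog(n/\eps)$ terms are dominated by $O(n)$ and the design itself costs $O(n)$, so no separate preprocessing is needed.

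The main obstacle, and the place I would be most careful, is the RAM-model ``$O(n)$'' bookkeeping: we must ensure that the inner code $\mathcal{C}_0$ is \emph{linear-time} encodable on a logarithmic-cost RAM --- this is exactly where \cref{remark:spielman} comes in, replacing Justesen's code by Spielman's linear-time encodable code --- and that every $\polylog$ factor hidden in the $\widetilde{O}$ notations (field arithmetic over $\F_q$ with $q = \poly(n/\eps)$, expander-walk neighbor computation, substring extraction, parsing $y_{S_i}$ as an index) collapses to $\polylog(n)$ rather than $\polylog(n/\eps)$ once we are in the regime $m\log(1/\eps) \le n/\polylog(n)$, using that in all relevant parameter settings $\log(1/\eps) \le n$ and hence $\log(n/\eps) = O(\log n)$. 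A secondary point to get right is the correctness side of the instantiation --- that plugging a $\delta = \eps/(cm)$ balanced code and an $(\log\bar n, \rho)$-weak design into \cref{eq:trevisan} indeed yields a strong $(k,\eps)$-extractor with the claimed $m$ and $d = O(\log^2\bar n) = O(\log^2(n/\eps))$ --- but this is entirely standard from \cite{RRV02} and I would simply cite it rather than reprove it.
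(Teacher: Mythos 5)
Your overall route is the same as the paper's: compute the weak design as input-independent preprocessing, instantiate the $\delta$-balanced code of \cref{cor:biased} with $\delta=\eps/(cm)$, use the ``any $m$ bits'' clause so that only the $m$ queried coordinates of $\mathcal{C}(x)$ are ever computed (at cost $\widetilde{O}(n)+m\cdot O(\log(1/\delta)\log n\loglog n)$), and switch to Spielman's code via \cref{remark:spielman} for the logarithmic-cost RAM bound. That part is fine and matches the paper.

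The genuine gap is in the weak-design step. You assert that the greedy Raz--Reingold--Vadhan construction can be carried out in time $\widetilde{O}(m\cdot d)=\widetilde{O}(m\log^2\bar n)$, but you give no accounting for this, and the natural derandomization of the RRV greedy argument (conditional expectations, where choosing $S_i$ requires examining all previously chosen sets) costs on the order of $d\cdot i$ per set, i.e.\ $\Omega(m^2 d)$ overall, which is far too slow. The paper sidesteps this by citing the construction of \cite{FYEC24} (Section A.5), which produces an $(\ell,\rho)$-weak design in time $\widetilde{O}(m\ell)=\widetilde{O}(m\log(n/\eps))$; this is exactly what the stated preprocessing bound requires. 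Moreover, even if one granted your $\widetilde{O}(m\log^2(n/\eps))$ bound, it does not give the theorem as stated: it exceeds the claimed preprocessing time $\widetilde{O}(m\log(n/\eps))$, and your claim that it is ``absorbed into $\widetilde{O}(n+m\log(1/\eps))$'' in the Turing model fails in the low-error regime, since the extra $\log(n/\eps)$ factor is not polylogarithmic in $n+m\log(1/\eps)$ when $\log(1/\eps)\gg\polylog(n)$ --- precisely the regime this paper targets (with the $\widetilde{O}(m\ell)$ design this absorption does work, using $m\le n$). A related minor slip: from $\log(1/\eps)\le n$ you infer $\log(n/\eps)=O(\log n)$, which is false in general (take $\eps=2^{-\sqrt{n}}$); your RAM-model accounting should instead note that the per-query cost is $t\cdot\polylog(n)$ with $t=O(\log(m/\eps))$, which already yields $m\log(1/\eps)\cdot\polylog(n)$ without that claim. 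Finally, your correctness remark (that $\rho$ must be chosen as a function of $k,m,\eps$ per \cite{RRV02}, not simply $\rho=2$) should be kept as a citation exactly as you propose; the paper does the same.
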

\begin{proof}
Looking at \cref{eq:trevisan}, note that we only need to compute $m$ coordinates of $\mathcal{C}(x)$.
To compute those $m$ coordinates, $y_{S_1},\ldots,y_{S_m}$, we first need to compute the weak design itself. Note that this can be seen as a preprocessing step, since it only depends on the parameters of the extractor, and not on $x$ or $y$. We will use the following result.
\begin{claim}[\cite{FYEC24}, Section A.5]
For every $\ell,m \in \mathbb{N}$ and $\rho > 1$, there exists an $(\ell,\rho)$-weak design $S_1,\ldots,S_{m} \subseteq [d]$ with $d = O(\frac{\ell^2}{\log\rho})$, computable in time $\widetilde{O}(m\ell)$.
\end{claim}

Once we have our preprocessing step, we are left with computing the code. By \cref{cor:biased},
we can choose $\bar{n}$ so that $n/\bar{n} = \delta^{c}$ for some universal constant $c$,
and so $\bar{n} = n \cdot \poly(m,1/\eps)$ and $\ell = \log\bar{n}=O(\log(n/\eps))$. 
Generating the design can then be done in time $\widetilde{O}(m\log(n/\eps))$.
Now, \cref{cor:biased} tells us that any $m$
bits of $\mathcal{C}(x)$ can be computed in time 
\[
\widetilde{O}(n)+ m\log(1/\delta) \cdot \polylog(n)= \widetilde{O}(n+m\log(1/\eps)).
\]
On a RAM in the logarithmic cost model, we can use the variant of $\mathcal{C}$ that uses Spielman's code as a base code (see \cref{remark:spielman}) and get a runtime of $O(n) + m\log(1/\eps) \cdot \polylog(n)$. This gives a truly linear time  construction whenever $m$ is at most $\frac{n}{\log(1/\eps)\polylog(n)}$.
\end{proof}

We conclude by noting that there is a natural setting of parameters under which Trevisan's extractor
gives logarithmic seed and linear (or near-linear) time. When $m = k^{\Omega(1)}$, the parameters can be set so that $d = O\left(\frac{\log^{2}(n/\eps)}{\log k}\right)$. We thus have the following corollary.
\begin{corollary}
For every $n \in \mathbb{N}$, any constant $c > 1$, and any constants $\alpha,\beta \in (0,1)$,
Trevisan's extractor $\Tre \colon \B^n \times \B^d \rightarrow \B^m$ can be instantiated as a $(k = n^{\alpha},\eps = n^{-c})$ extractor with $d = O(\log n)$, $m = k^{\beta}$, and 
given $x \in \B^n$ and $y \in \B^d$, $\Tre(x,y)$ is computable in time $\widetilde{O}(n)$ (or $O(n)$ time, depending on the model).
\end{corollary}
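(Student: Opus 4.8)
The plan is to instantiate the general analysis of Trevisan's extractor from \cite{RRV02} with a choice of parameters that collapses the seed length to $O(\log n)$, and then to invoke the preceding theorem for the runtime. Recall that Trevisan's extractor built from a $\delta$-balanced code $\mathcal{C}\colon\B^n\to\B^{\bar n}$ with $\delta=\eps/(cm)$ and an $(\ell=\log\bar n,\rho)$-weak design is a strong $(k,\eps)$-seeded extractor provided $k\ge\rho\cdot m+O(\log(m/\eps))$; the additive term accounts for the list size of the code (which is $\poly(m/\eps)$ here, via the Johnson bound) together with the other lower-order contributions, and the exact constants can be read off from \cite{RRV02}.

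First I would fix $\bar n$ via \cref{cor:biased} so that $n/\bar n=\delta^{O(1)}$. Since $\eps=n^{-c}$ and $m=k^\beta\le n$, we get $\delta=n^{-O(1)}$, hence $\bar n=\poly(n)$ and $\ell=\log\bar n=O(\log(n/\eps))=O(\log n)$. Next I would choose the weak-design parameter $\rho$ to be a suitable polynomial in $k$, say $\rho=\tfrac12 k^{1-\beta}$, so that $\rho\cdot m=\tfrac12 k$ and the entropy condition becomes $k\ge\rho m+O(\log(m/\eps))=\tfrac12 k+O(\log n)$, which holds for all large $n$ because $k=n^\alpha$ dominates the lower-order terms. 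The point of this choice is that $\log\rho=\Theta(\log k)=\Theta(\log n)$, so the weak-design construction of \cite{FYEC24} (stated inside the proof of the theorem above) yields
\[
d=O\!\left(\frac{\ell^2}{\log\rho}\right)=O\!\left(\frac{\log^2(n/\eps)}{\log k}\right)=O(\log n),
\]
as required, with an $m$-set design giving the claimed output length $m=k^\beta$. Note that the admissible window $\rho\in[n^{\Omega(1)},\,k^{1-\beta}]$ is nonempty precisely because $\alpha$ and $\beta$ are fixed constants in $(0,1)$: the lower end is forced by the requirement $d=O(\log n)$, and the upper end by the entropy bound.

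For the running time I would simply apply the preceding theorem: with this instantiation Trevisan's extractor runs in time $\widetilde{O}(n+m\log(1/\eps))$, and since $m=n^{\alpha\beta}$ and $\log(1/\eps)=O(\log n)$ we have $m\log(1/\eps)=\widetilde{O}(n)$, so the total is $\widetilde{O}(n)$. In the logarithmic-cost RAM model the bound from the theorem is $O(n)+m\log(1/\eps)\cdot\polylog(n)$; since $\alpha\beta<1$ we have $m\log(1/\eps)\cdot\polylog(n)=n^{\alpha\beta}\cdot\polylog(n)=o(n)$, which places us in the truly-linear regime $m\le n/(\log(1/\eps)\polylog(n))$ of the theorem, so the extractor is computable in time $O(n)$ there.

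The main obstacle I anticipate is bookkeeping rather than conceptual: pinning down the precise entropy loss in \cite{RRV02} — including the contribution of the code's list size, which depends on $\delta=\eps/(cm)$ — and checking that every lower-order term is dominated by $k=n^\alpha$, while simultaneously keeping $\log\rho=\Theta(\log n)$ so that $d$ stays $O(\log n)$. One should also verify the trivial feasibility conditions ($\ell\le d$, $m\le\bar n$, $k\ge C\log(n/\eps)$), all of which hold comfortably in this parameter regime.
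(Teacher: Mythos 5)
Your proposal is correct and matches the paper's intended argument: the paper states this corollary without proof, noting only that for $m = k^{\Omega(1)}$ one can set the weak-design parameters so that $d = O\bigl(\frac{\log^2(n/\eps)}{\log k}\bigr)$, which is exactly your choice of $\rho = \Theta(k^{1-\beta})$ combined with the RRV entropy bound and the runtime theorem. Your parameter bookkeeping (including the RAM-model check that $m\log(1/\eps)\polylog(n) = o(n)$, so no separate preprocessing is needed) is the straightforward instantiation the authors had in mind.
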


\bibliographystyle{alpha}
\bibliography{refs}

\appendix

\end{document}